\documentclass[unknownkeysallowed]{article}
\usepackage[left=2.5cm,top=2.5cm,right=2.5cm,bottom=2.5cm]{geometry}

\usepackage{amsmath, amssymb, amsthm}
\usepackage{mathrsfs}
\usepackage{algorithm, algpseudocode}
\usepackage{tabularx, environ}
\usepackage{xspace}
\usepackage{enumitem}
\usepackage[UKenglish]{babel}
\usepackage{tikzscale,tikz,standalone,soul} 
\usetikzlibrary{shapes.geometric,calc}

\definecolor{vaxgreen}{RGB}{59,125,35}
\definecolor{contagred}{RGB}{192,0,0}

\newcommand{\circnode}[2]{%
  \begin{scope}[shift={(#1)}]
    \fill[vaxgreen!55!white, draw=black, line width=0.8pt] circle (0.42);
    \node[black] at (0,0) {#2};
  \end{scope}
}

\newcommand{\circnodeinoc}[2]{%
  \begin{scope}[shift={(#1)}]
    \fill[vaxgreen!25!white, draw=vaxgreen, line width=1.3pt] circle (0.44);
    \node[black, font=\bfseries] at (0,0) {#2};
  \end{scope}
}

\newcommand{\chevron}[2]{%
  \begin{scope}[shift={#1}, rotate=#2]
    \fill[white, draw=vaxgreen, line width=1pt]
      (-0.22,0.28) -- (0.22,0) -- (-0.22,-0.28) -- (-0.08,0) -- cycle;
  \end{scope}
}

\definecolor{cYellow}{RGB}{247,230,120}
\definecolor{cRed}{RGB}{224,110,100}
\definecolor{cBlue}{RGB}{120,175,215}
\definecolor{cGreen}{RGB}{110,165,120}

\colorlet{fillYellow}{cYellow!55!white}
\colorlet{fillRed}{cRed!60!white}
\colorlet{fillBlue}{cBlue!55!white}
\colorlet{fillGreen}{cGreen!55!white}

\colorlet{cYellowText}{cYellow!70!black}

\newcommand{\gridcircle}[3]{%
  \begin{scope}[shift={(#1)}]
    \fill[#2, draw=black, line width=0.8pt] circle (0.42);
    \node[black] at (0,0) {#3};
  \end{scope}
}

\newcommand{\gridsplitcircle}[4]{%
  \begin{scope}[shift={(#1)}]
    \begin{scope}
      \clip (-0.42,0) rectangle (0.42,0.42);
      \fill[#2] (0,0) circle (0.42);
    \end{scope}
    \begin{scope}
      \clip (-0.42,-0.42) rectangle (0.42,0);
      \fill[#3] (0,0) circle (0.42);
    \end{scope}
    \draw[black, line width=0.8pt] (0,0) circle (0.42);
    \node[black] at (0,0) {#4};
  \end{scope}
}

\newcommand{\gridsquare}[2]{%
  \begin{scope}[shift={(#1)}]
    \fill[white, draw=black, line width=0.8pt] (-0.36,-0.36) rectangle (0.36,0.36);
    \node[black] at (0,0) {#2};
  \end{scope}
}

\newcommand{\gridroot}[1]{%
  \begin{scope}[shift={(#1)}]
    \fill[white, draw=black, line width=0.8pt] (-0.36,-0.36) rectangle (0.36,0.36);
    \draw[black, line width=1pt] (-0.22,-0.22) -- (0.22,0.22);
    \draw[black, line width=1pt] (-0.22,0.22) -- (0.22,-0.22);
  \end{scope}
}

\newcommand{\gridspecial}[4]{%
  \begin{scope}[shift={(#1)}]
    \fill[#2!25!white, draw=#2, line width=1.3pt] circle (0.44);
    \node[#3, font=\bfseries] at (0,0) {$#4$};
  \end{scope}
}

\newcommand{\gridchevron}[3]{%
  \begin{scope}[shift={#1}, rotate=#2]
    \fill[white, draw=#3, line width=1pt]
      (-0.22,0.28) -- (0.22,0) -- (-0.22,-0.28) -- (-0.08,0) -- cycle;
  \end{scope}
}

\newcommand{\gridchevronsplit}[4]{%
  \begin{scope}[shift={#1}]
    \begin{scope}[rotate=#2]
      \fill[white] (-0.22,0.28) -- (0.22,0) -- (-0.22,-0.28) -- (-0.08,0) -- cycle;
    \end{scope}
    \begin{scope}
      \clip (-0.5,0) rectangle (0.5,0.5);
      \begin{scope}[rotate=#2]
        \draw[#3, line width=1pt] (-0.22,0.28) -- (0.22,0) -- (-0.22,-0.28) -- (-0.08,0) -- cycle;
      \end{scope}
    \end{scope}
    \begin{scope}
      \clip (-0.5,-0.5) rectangle (0.5,0);
      \begin{scope}[rotate=#2]
        \draw[#4, line width=1pt] (-0.22,0.28) -- (0.22,0) -- (-0.22,-0.28) -- (-0.08,0) -- cycle;
      \end{scope}
    \end{scope}
  \end{scope}
}

\newcommand{\cellVone}[2]{\gridspecial{#1}{cRed}{black}{V_1}}
\newcommand{\cellVtwo}[2]{\gridspecial{#1}{cBlue}{black}{V_2}}
\newcommand{\cellVthree}[2]{\gridspecial{#1}{cGreen}{black}{V_3}}
\newcommand{\cellVfour}[2]{\gridspecial{#1}{cYellow}{black}{V_4}} 

\newtheorem{theorem}{Theorem}[section]
\newtheorem{lemma}[theorem]{Lemma}
\newtheorem{corollary}[theorem]{Corollary}

\newtheorem{ques}[theorem]{Question}

\theoremstyle{definition}
\newtheorem{definition}[theorem]{Definition}
\theoremstyle{definition}

\makeatletter
\newcommand{\name}[1]{\gdef\@name{#1}}
\newcommand{\instance}[1]{\gdef\@instance{#1}}
\newcommand{\question}[1]{\gdef\@question{#1}}
\NewEnviron{problem}{
  \name{}\instance{}\question{}
  \BODY
  \par\addvspace{.5\baselineskip}
  \noindent
  \fbox{
    \begin{minipage}{0.9\textwidth}
    \begin{tabularx}{\textwidth}{@{\hspace{\parindent}} l X c}
        \noalign{\vskip 1mm}  
        \multicolumn{2}{@{\hspace{\parindent}}l}{\Large\scshape\@name} \\
        \noalign{\vskip 2mm}  
        \textbf{Instance:} & \@instance \\
        \textbf{Question:} & \@question
    \end{tabularx}
    \end{minipage}
  }
  \vskip 1mm
}
\makeatother

\floatname{algorithm}{Procedure}

\newcommand*{\Fire}{\textup{\textsc{Firefighter}}\xspace} 
\newcommand*{\bFire}{$b$-\Fire} 

\newcommand*{\IV}{\textup{\textsc{InfectVax}}\xspace}

\makeatletter
\newcommand*{\IVLong}{%
  \ifnum\spacefactor>2000
    The Infectious Vaccination problem%
  \else
    the Infectious Vaccination problem%
  \fi
  \xspace
}
\makeatother

\newcommand*{\nat}{\in\mathbb{N}}
\newcommand*{\bigO}[1]{\mathcal{O}({#1})}

\DeclareMathOperator{\strength}{str}
\DeclareMathOperator{\inoc}{inoc}

\title{The Infectious Vaccination Problem: a variant of Firefighting with Spreading Defence}

\author{
Jessica Enright
   \thanks{School of Computing Science, University of Glasgow;
   \texttt{Jessica.Enright@glasgow.ac.uk}
   }
\and Melissa A. Huggan
   \thanks{Department of Mathematics and Computer Science, University of Lethbridge;
   \texttt{melissa.huggan@uleth.ca} 
   }
\and Ethan Hunter-Frankland
   \thanks{School of Mathematics, University of Birmingham;
  \texttt{e.hunter-frankland@bham.ac.uk}
   }
\and M.E. Messinger
   \thanks{Department of Mathematics and Computer Science, Mount Allison University;
   \texttt{mmessinger@mta.ca}
   }
\and Dylan Pearson
  \thanks{Department of Mathematics and Computer Science, Mount Allison University.
   }
}

\date{}

\begin{document}

\maketitle

\begin{abstract}
The Firefighter Problem models a spreading process (originally a fire, alternatively an infection or rumour, for example) on a graph. A defender saves a single vertex per turn; after each defence, the fire spreads to the unburned and undefended neighbours of all burning vertices. Deciding whether a strategy exists for the defender to protect some targeted number of vertices is computationally hard in graphs in general, but tractable in some restricted cases. 

Inspired by research into spreadable rabies vaccines for bats, we study a variant of the Firefighter problem in which defence also spreads. Some approximation results are already known for this problem; we provide algorithmic and hardness results, as well as containment results for the infinite $n$-dimensional Cartesian and strong grid graphs.
\end{abstract}


\section{Introduction}\label{sec:background}

The Firefighter problem (\Fire) is a deterministic, discrete-time game that models the spread of fire on a graph.  Originally introduced by Harnell~\cite{hartnell_1995}, \Fire and its variants have been well-studied; see the survey by Finbow and MacGillivray~\cite{finbow_2009}.

In a game of \Fire, vertices can be in one of three states: \emph{open, burning} or \emph{defended.} All vertices are initially open. However, once burning or defended, a vertex remains so for the rest of the game. 
An instance of \Fire on a rooted graph $G$ proceeds as follows. At time-step $t=0$, the fire breaks out at root $r$. For each time-step $t>0$, the firefighter selects a vertex to defend, and then the fire spreads to the open neighbours of all burning vertices. If there is some time-step at which no open vertex is adjacent to a burning vertex, we declare the fire to be \emph{contained} and count the number of \emph{saved} vertices, those either defended or open and therefore unreachable by the fire. The number of time-steps from outbreak to containment of a particular instance is the \emph{lifetime} of the instance.  A \emph{certificate} for an instance of \Fire is a \emph{strategy}, a sequence of vertices selected for defence (in order of when they are defended).

In the generalised model, \bFire, up to $b$ vertices are defended at each time-step for some fixed budget $b$. We consider a variant of \Fire and \bFire called \IVLong (\IV). As \IV is inspired by studies of spreadable --or self-disseminating-- animal vaccines in combatting viruses, we refer to the fire as the \emph{contagion} that spreads during each time-step. All defended vertices are called \emph{vaccinated} and those that are defended directly are called \emph{inoculated.} \IV proceeds as \bFire with an inoculation budget $b$ per time-step, but the vaccination spreads in the same way as the contagion. That is, for \IV: all vertices are initially open and at time-step $0$, the contagion breaks out at a root $r$ of graph $G$.  During each time-step $t > 0$,
\begin{enumerate}
    \item the vaccine spreads from all vaccinated vertices to their open neighbours;
    \item up to $b$ vertices are inoculated, for some integer budget $b$; and
    \item the contagion spreads to the open neighbours of all contagious vertices. 
\end{enumerate}
An instance of \IV is composed of a graph $G$, a root $r\in V(G)$, a budget $b\in\mathbb{N}$ and a target number of vertices to save $k\in\mathbb{N}$. We represent an instance of \IV as the tuple $(G, r, b, k)$.
Interestingly, \IV has been studied as the \emph{spreading vaccination model} by Anshelevich et al.\ \cite{anshelevich_2012} and Floderus et al.\ \cite{floderus_2011}. Both studied it as an optimisation problem with two objectives: maximising the number of saved vertices and minimising the budget used.
Anshelevich et al.\ \cite{anshelevich_2012} prove the existence of constant-factor approximation algorithms for maximising saved vertices, and use work by Floderus et al.\ \cite{floderus_2011} to show that there is no $((1 - o(1)) \ln n)$ approximation (unless $NP \subseteq DTIME(n^{\texttt{poly}\ln(n)})$) for minimising the budget, but that there is a polynomial-time $\ln n$-factor approximation algorithm for this objective \cite{anshelevich_2012}.
They observe that \IV is easier to approximate than \Fire.

We begin with terminology and formal definitions in Section~\ref{sec:problem_definition}.  In Section~\ref{sec:alg}, we show that for an interval graph $G$ on $n$ vertices and $m$ edges, \IV on input $(G,r,b,k)$ can be solved in $\mathcal{O}(n+m)$ time.  In Section~\ref{sec:parameterize}, we consider a parameterised approach and show that \IV is FPT (fixed-parameter tractable) when parameterised by budget and vertex cover size, or by budget and neighbourhood diversity. We further show in Section~\ref{sec:bounded-path-length}, that \IV is in XP when parameterised by maximum path length.  With budget $b=1$, we study two vertex-transitive families of infinite graphs in Section~\ref{sec:contain}: the infinite $n$-dimensional Cartesian grid and the infinite $n$-dimensional strong grid.  We provide strategies which guarantee containment on both graph families.  Since the number of vertices saved is infinite, we bound the number of time-steps until containment occurs, which in turn bounds the number of contagious vertices.  Finally, in Section~\ref{sec:subgraphs}, we show that for a fixed budget $b$ the number of inoculations required to contain the virus is not closed under induced subgraphs.

\section{Preliminaries}\label{sec:problem_definition}

We begin with essential graph theoretic definitions and recommend \cite{wilson_1996} for a more complete overview.  We assume all graphs to be undirected and unweighted. An \emph{induced subgraph} $H=(V_H, E_H)$ of $G=(V_G, E_G)$, denoted $G[V_H]$, is a subgraph of $G$ on the vertices $V_H$ such that for all $v, w\in V_H$, if $(v, w)\in E_G$, then $(v, w)\in E_H$. If the graph $G$ is clear from the context, we use simplified notation $G=(V,E)$.
For $v\in V(G)$, let $N(v)$ and $N[v]$ denote the open and closed neighbourhoods of $v$, respectively. 

A \emph{strategy} for a game of \Fire is a sequence of vertices in which the $i^{th}$ vertex is selected for defence at the $i^{th}$ time-step. 
In \IV, we take as input an integer \emph{budget} $b$ that bounds the number of inoculations (direct defences) per time-step. Thus, strategies are sequences of sets of vertices. The following defines such strategies for variants such as \IV and \bFire.

\begin{definition}[Strategy with multiple defences per time-step]\label{def:strategy}
    For an instance of a \Fire variant on a rooted graph $G$ in which some $b\nat$ vertices can be defended per time-step, a strategy is a sequence of $\ell$ sets of vertices 
    $$\sigma = (\sigma_1,\sigma_2,\dots,\sigma_{\ell})$$
 such that $\sigma_i \subseteq V(G)$ is the set of vertices defended during time-step $i$ and $|\sigma_i| \leq b$.
 
    The \emph{length} of the strategy is the number of sets in the sequence.
\end{definition}

Using this definition, \bFire is formally defined as follows \cite{bazgan_2013}. 

\begin{problem}
    \name{\bFire}
      \instance{A rooted graph $(G,r)$, an integer budget $b\geq 1$ and an integer $k\geq 1$.}
      \question{Is there a strategy $\sigma$ such that if the fire breaks out at $r$:
    \begin{itemize}[nosep]
        \item for each $\sigma_i\in \sigma$, by time $i$ each vertex in $\sigma_i$ is neither burning nor defended,
        \item the number of vertices defended in each time-step is at most $b$,
        \item after time $|\sigma|$, no open vertex is adjacent to a burning vertex, and
        \item at time $|\sigma|$, at least $k$ vertices are saved by implementing $\sigma$?
    \end{itemize}}
\end{problem}

For \bFire on infinite graphs, a common consideration is whether the fire can be \emph{contained,} which is to ask whether we can deploy defences to enclose the fire to a finite set of vertices within a finite number of time-steps. 
When we say the fire is \emph{contained} on a finite graph, we require that, when the fire can no longer spread, there is at least one undefended and unburned vertex in the same component as the root vertex but inaccessible by the fire because of the defences placed. This intentionally excludes the case in which the fire cannot spread further because all undefended vertices of a finite graph have been burned.  The following is attributed to Moeller and Wang \cite{moeller_2002}.
\begin{theorem}\label{thm:fire-contain-cartesian}
    On the infinite $2$-dimensional Cartesian grid, containment is not possible for \Fire but is possible for \bFire with budget $b\geq 2$ in fewer than 8 time-steps.
\end{theorem}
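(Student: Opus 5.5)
The statement has two halves, and I will treat them separately. The impossibility half says \Fire with $b=1$ cannot contain the fire on the infinite $2$-dimensional Cartesian grid. The possibility half says $b=2$ suffices in fewer than $8$ time-steps.

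\textbf{Possibility for $b\geq 2$.} I would give an explicit strategy with root $r=(0,0)$. The target is a finite closed region around $r$ whose boundary consists entirely of defended vertices. A natural choice is a diamond-like shape shifted toward the initial defences. For example, at step $1$ defend $(1,0)$ and $(-1,0)$. The fire then occupies $(0,\pm1)$. In the next steps, defend pairs of vertices diagonally ahead of the fire fronts going up and down: $(\pm1,1)$, $(\pm1,-1)$, then $(0,2)$ and $(0,-2)$, adjusting as needed. I would tabulate the burning set and defended set at each time-step. The check at each step is that every newly defended vertex was open when chosen, and that after step $\ell\leq 7$ no open vertex is adjacent to a burning vertex. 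The main work is choosing the order so that every fire front is blocked exactly when it arrives. I would first try a strategy that encloses the burned set inside a column of width $1$ or $3$ capped at the top and bottom, and count roughly $14$ defences. By budget, $7$ steps allow at most $14$ defences. The final perimeter must therefore be at most $14$ defended vertices, which fits an enclosure of a small burned set.

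\textbf{Impossibility for $b=1$.} I would use the standard perimeter-counting argument. Suppose containment happens at time $T$, with burned set $B$ and defended set $D$, where $|D|\leq T$. Every grid vertex adjacent to $B$ and not in $B$ must lie in $D$. So $|\partial B|\leq T$, where $\partial B$ is the outer vertex boundary of $B$.

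The key is a lower bound on $|\partial B_t|$ that grows as the fire spreads. Let $B_t$ be the burned set after step $t$. For a finite set $S$ in $\mathbb{Z}^2$ with $|S|\geq 1$, the outer vertex boundary has size at least $4$ plus roughly twice the number of occupied rows/columns minus something. Direct isoperimetry is messy, so I would track a potential instead. Define $\Phi_t$ as the number of open vertices adjacent to $B_t$, i.e. the vertices that will burn next. Initially $\Phi_0=4$. The firefighter removes at most $1$ such vertex per step. I would then show that if $\Phi_t-1\geq 1$ vertices burn at step $t+1$, the new frontier has size at least $\Phi_t$ (non-decreasing, ignoring the one defence). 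The intended mechanism is that the extreme burning vertices in the four diagonal directions (max/min of $x+y$ and $x-y$) each contribute new frontier vertices.

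\textbf{Main obstacle.} The cleanest version of the impossibility argument uses the bounding quantities $\max(x+y)$, $\min(x+y)$, $\max(x-y)$ and $\min(x-y)$ over $B_t$. Each of these increases by one per step unless the firefighter has blocked every vertex on that extreme diagonal line's frontier. I would try to show that maintaining a growing extent in even one of the four diagonal directions forces an unblocked frontier vertex. With one defence per step the firefighter cannot freeze all four extents. A frozen extent requires defending a growing set of frontier vertices, so the total number of defences needed grows faster than $t$, which contradicts $|D|\leq t$.

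Making this counting rigorous, and showing that the required defences outpace one per step, is the step I expect to be hardest. If it resists, I would instead cite the argument of Moeller and Wang, to whom the result is attributed.
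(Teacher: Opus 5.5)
The paper gives no proof of this statement; it attributes it to Moeller and Wang. Your fallback of citing them is therefore what the paper does. As a self-contained proof, however, neither half of your proposal is carried out.

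\textbf{The case $b=2$.} No strategy is actually given, and the one you begin cannot be continued. After $(\pm1,0)$ are defended at step~$1$, all six vertices $(\pm1,\pm1)$, $(0,\pm2)$ are adjacent to the fire. Defending $(\pm1,1)$ at step~$2$ therefore lets $(\pm1,-1)$ and $(0,\pm2)$ burn, so your later moves are impossible. Likewise, an axis-parallel column advancing at both ends needs four new wall vertices per step.

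The more serious problem is the count ``seven steps give $14$ defences, so a perimeter of at most $14$ fits'', which ignores timing. Each boundary vertex must be defended no later than the step at which the fire would reach it. For instance, $\{(x,y): 0\le x+y\le 6,\ |x-y|\le 2\}$ has $18$ vertices and only $14$ boundary vertices, and the fire would fill it by step $6$. Yet eight of those boundary vertices would burn by step $3$ unless defended, while only six defences exist by then. They are $(-1,0)$, $(0,-1)$, and $(-2,1),(-1,2),(0,3)$ together with their reflections in the line $y=x$. A genuine construction must place defenders ahead of the fire and be checked step by step. Note also that the two-firefighter strategy usually quoted from Wang and Moeller takes $8$ time-steps and burns $18$ vertices, so check the timing convention before aiming for $\ell\le 7$.

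\textbf{The case $b=1$.} The potential you propose fails, because $\Phi_t$ is not non-decreasing. Suppose the firefighter uses steps $1,\dots,k$ to defend $k$ vertices of the diagonal $x+y=k+2$ with $x,y\ge0$, far from the fire. Then $\Phi_k=4(k+1)$. After step $k+1$, every open vertex adjacent to the fire lies at distance $k+2$ from the root and is not one of the defended wall vertices. Hence $\Phi_{k+1}\le 4(k+2)-k=3k+8<\Phi_k$ for $k\ge5$, and the drop grows with $k$.

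The diagonal-extent idea is the right one, but the step you flag as hardest is exactly the missing content. It is short once you track the right quantity:
\begin{itemize}
\item Fix the quadrant $x,y\ge 0$, and let $E_t$ be the set of vertices with $x,y\ge 0$ and $x+y=t$ that burn at step $t$, so $E_0=\{r\}$.
\item A nonempty set of $k$ vertices on one diagonal has at least $k+1$ neighbours on the next diagonal inside the quadrant. These are at distance $t+1$ from the root, so they are still open at step $t+1$, and they burn unless defended.
\item Hence $|E_{t+1}|\ge |E_t|+1-h_{t+1}$, where $h_s$ is the number of vertices of $\{x+y=s,\ x,y\ge0\}$ defended during steps $1,\dots,s$.
\item These diagonal segments are disjoint, so with one defence per step $\sum_{s\le t}h_s\le t$.
\item Induction gives $|E_t|\ge 1+t-\sum_{s\le t}h_s\ge 1$ for all $t$.
\end{itemize}
So some vertex burns at every step, and the fire is never contained. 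A single quadrant suffices. You need neither the interplay among four extents nor a ``faster than $t$'' growth estimate.
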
 

In stark contrast, we show in Section \ref{sec:contain} that the contagion can be contained for \IV on the infinite $n$-dimensional Cartesian grid with a budget of $b=1$ per time-step by the end of time-step $n^2+1$.

Next we formally define \IV.  Recall that we view the vertices of a graph as being in one of three states at the end of each time-step: open, contagious, or vaccinated.  Once a vertex is in the contagious or vaccinated state, it remains in that state forever.  A vertex that is neither contagious nor vaccinated is open.  A further subtlety we note is that inoculated vertices are vaccinated, but vaccinated vertices are not necessarily inoculated.

On a graph $G=(V, E)$ at time $t=0$, the contagion breaks out at vertex $r\in V$. At each subsequent time-step, all open vertices adjacent to vaccinated vertices become vaccinated, then we choose up to $b$ additional open vertices to inoculate, and then the contagion propagates to all open neighbours of contagious vertices. This continues until the contagion cannot spread. 

\begin{problem}
    \name{\IV}
      \instance{A graph $G=(V, E)$, a root vertex $r\in V$, a budget $b\in\mathbb{N}$ and target $k\in\mathbb{N}$.}
      \question{If the contagion breaks out at $r$ in time-step $0$ and in each subsequent time-step $t>0$:
      \begin{itemize}[nosep]
          \item each open neighbour of each vaccinated vertex becomes vaccinated, then
          \item we select up to $b$ open vertices to inoculate, and then
          \item each open neighbour of each contagious vertex becomes infected,
      \end{itemize}
      is there a strategy $\sigma$, 
      such that:
      \begin{itemize}[nosep]
        \item after time $|\sigma|$, no open vertex is adjacent to a contagious vertex, and
        \item at least $k$ vertices are saved by time $|\sigma|$?
      \end{itemize}}
\end{problem}

\smallskip

As an example, suppose the contagion breaks out at the origin $(0,0)$ on an $n \times n$ Cartesian grid, where $n \geq 8$.  In Figure~\ref{fig:contain-grid1}, the origin is indicated by the red circled vertex labelled ``$0$".  Fix $b=1$, so at most one vertex will be inoculated during each time-step and let $\sigma = ( \{(1,0)\}, \{(-2,0)\}, \{(0,3)\}, \{(0,-4)\}, \emptyset)$.    The first event during each time-step is  the vaccination spread:\ during time-step $1$, there is no vaccination to spread.  The second event during each time-step is that $b$ vertices are inoculated:\ during time-step $1$, the solid green vertex at $(1,0)$, labelled ``$1$'' is inoculated.  The third event during each time-step is the spread of the contagion:\ during time-step $1$, the contagion spreads to the three red circled vertices labelled ``$1$".  Continuing this process, the reader can see that after the vaccine spreads during time-step $5$, there is no vertex to which the contagion can spread and no need to inoculate a fifth vertex. Thus some of the $\sigma_i$ may be empty.

\begin{figure}[!ht]
  \begin{center}
\begin{tikzpicture}[line cap=round, line join=round,scale=0.7]

  \foreach \r in {1,...,8}{
    \foreach \c in {1,...,4}{
      \coordinate (n\r\c) at ({(\c-1)*1.5},{-(\r-1)*1.5});
    }
  }

  \foreach \r in {1,...,8}{
    \foreach \c [count=\cc from 2] in {1,2,3}{
      \draw[gray!85, line width=0.9pt] (n\r\c) -- (n\r\cc);
    }
  }
  \foreach \c in {1,...,4}{
    \foreach \r [count=\rr from 2] in {1,...,7}{
      \draw[gray!85, line width=0.9pt] (n\r\c) -- (n\rr\c);
    }
  }

  \foreach \c in {1,...,4}{
    \draw[gray!85, line width=0.9pt] (n1\c) -- ++(0,0.55);
    \chevron{($(n1\c)+(0,0.7)$)}{90}
  }
  \foreach \c in {2,3,4}{
    \draw[gray!85, line width=0.9pt] (n8\c) -- ++(0,-0.55);
    \chevron{($(n8\c)+(0,-0.7)$)}{270}
  }
  \draw[gray!85, line width=0.9pt] (n81) -- ++(0,-0.35);
  \foreach \r in {1,...,7}{   
    \draw[gray!85, line width=0.9pt] (n\r1) -- ++(-0.55,0);
    \chevron{($(n\r1)+(-0.7,0)$)}{180}
  }
  \draw[gray!85, line width=0.9pt] (n81) -- ++(-0.55,0); 
  \chevron{($(n81)+(0.8,0)$)}{180} 

  \foreach \r in {1,...,8}{
    \draw[gray!85, line width=0.9pt] (n\r4) -- ++(0.55,0);
    \chevron{($(n\r4)+(0.7,0)$)}{0}
  }

  \circnode{n11}{5}      
  \circnode{n12}{4}
  \circnodeinoc{n13}{\small{$V_3$}}  
  \circnode{n14}{4}

  \circnode{n21}{4}      
  \gridsquare{n22}{3}
  \gridsquare{n23}{2}                     \circnode{n24}{3}

  \circnode{n31}{3}      
  \gridsquare{n32}{2}
  \gridsquare{n33}{1}                     
  \circnode{n34}{2}

  \circnodeinoc{n41}{\small{$V_2$}}  
  \gridsquare{n42}{1}
  \gridsquare{n43}{\LARGE $\times$}                     \circnodeinoc{n44}{\small{$V_1$}}

  \circnode{n51}{3}      
  \gridsquare{n52}{2}
  \gridsquare{n53}{1}                     \circnode{n54}{2}

  \circnode{n61}{4}      
  \gridsquare{n62}{3}
  \gridsquare{n63}{2}                     
  \circnode{n64}{3}

  \circnode{n71}{5}      
  \gridsquare{n72}{4}
  \gridsquare{n73}{3}                     \circnode{n74}{4}

  \circnode{n82}{5}
  \circnodeinoc{n83}{\small{$V_4$}}
  \circnode{n84}{5}

  \chevron{($(n71)!0.5!(n81)$)}{270}

  \begin{scope}[xshift=6.5cm, yshift=-8cm]
    \def\w{9.7} \def\h{6.7} 
    \draw[draw=gray, line width=1pt] (0,0) rectangle (\w,\h);
    \node at (0.9,\h-0.6) {\underline{Key}};

    \coordinate (k0) at (0.9,\h-1.5); 
    \coordinate (k1) at (0.9,\h-2.5);
    \coordinate (k2) at (0.9,\h-3.5);
    \coordinate (k3) at (0.9,\h-4.5);
    \coordinate (k4) at (0.9,\h-5.5);

    \gridsquare{k0}{\LARGE $\times$}
    \node[right, align=left] at (1.5,\h-1.5) {\small Vertex infected in time-step $0$ };   
    \gridsquare{k1}{$t$}
    \node[right, align=left] at (1.5,\h-2.5) {\small Vertex first contagious in time-step $t$};
    \circnodeinoc{k2}{$V_t$}
    \node[right, align=left] at (1.5,\h-3.5) {\small Vertex inoculated in time-step $t$};
    \circnode{k3}{$t$}
    \node[right, align=left] at (1.5,\h-4.5) {\small Vertex vaccinated in time-step $t$};
    \chevron{(k4)}{0}
    \node[right, align=left] at (1.5,\h-5.5) {\small Vaccination continues to spread};
    \node[right, align=left] at (1.5,\h-6) {\small\quad in this direction};
    
  \end{scope}

  \end{tikzpicture}
\end{center}
\caption{%
Using budget $b=1$, a strategy for \IV on the infinite $2$-dimensional Cartesian grid.}
\label{fig:contain-grid1}
\end{figure}

Like \Fire and \bFire, we also consider whether containment can occur for \IV on infinite graphs.  Thus, in Section 5 where containment is the objective, we omit $k$ from consideration (since an infinite number of vertices will be saved by containment) and regard an instance of \IV to be the triple $(G,r,b)$. In the case where containment occurs for a particular value of $b$, we will seek the minimum number of inoculations needed for containment occur.

\section{Algorithmic and hardness results}\label{sec:alg}

In this section, we complement an existing hardness result on \IV with several tractability results on restrictive graph classes inspired by parallel results on more classical Firefighting problems.  

Anshelevich et al.\ \cite{anshelevich_2012} observed that \IV is equivalent to \bFire when restricted to trees.
Informally, this is because a vertex in a tree is saved if it has at least one vaccinated (defended)  ancestor, so the spread of the vaccine does not change the number of saved (vaccinated) vertices.  Thus, existing results for \bFire on trees follow immediately for \IV on trees, including NP-completeness.
\begin{theorem}[\cite{anshelevich_2012}]
    \IV is NP-complete.
\end{theorem}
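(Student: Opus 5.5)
The plan has two parts: membership in NP, and hardness via a reduction from \bFire restricted to trees.

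\emph{Membership in NP.} The certificate is a strategy $\sigma=(\sigma_1,\dots,\sigma_\ell)$. We may assume $\ell\le |V|$: if at some time-step both the contagion and the vaccine fail to spread to a new vertex, then no open vertex is adjacent to a contagious vertex and the process has stopped. Otherwise at least one vertex changes state at every step, so the process ends within $|V|$ steps. Hence $\sigma$ has polynomial size. Given $\sigma$, we simulate the three phases of each time-step, in order, in $\mathcal{O}(n+m)$ time per step. We then check that each inoculated vertex was open when it was inoculated, that $|\sigma_i|\le b$, that no open vertex is adjacent to a contagious vertex at the end, and that at least $k$ vertices are saved.

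\emph{Hardness.} We reduce from \bFire restricted to trees, which is known to be NP-complete (Finbow et al.\ for $b=1$ on trees of maximum degree three, and Bazgan et al.\ for general $b$). The reduction is the identity map $(T,r,b,k)\mapsto(T,r,b,k)$. Root $T$ at $r$. The key lemma is that on a tree, the set of saved vertices under a strategy is the same in both games. Given a sequence of defended or inoculated vertices, in either game a vertex $v$ is saved exactly when some vertex on the $r$--$v$ path, possibly $v$ itself, was defended or inoculated before the contagion reached it.

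We argue this by induction on depth. The contagion reaches a vertex at depth $d$ exactly at time $d$ unless it is blocked, and it can only arrive from the parent. In \bFire, $v$ is saved if and only if some ancestor-or-self was defended, because that vertex cuts off the subtree below it. In \IV, the vaccine spreading downward from an inoculated vertex $u$ only ever enters vertices in the subtree of $u$, which are already unreachable by the contagion, so it saves nothing new. The vaccine spreading upward or sideways, towards $u$'s parent and beyond, needs more care, and this is the expected main obstacle. At time $t$, an inoculated $u$ at depth $\ge t$ has a parent at depth $\ge t-1$. That parent becomes vaccinated at time $t+1$ unless it was already contagious, and by then the contagion front at depth $t+1$ may well have passed it. So upward spread could in principle save vertices that \bFire would not.

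To handle the obstacle, I would compare against a modified \bFire strategy. Whenever upward spread in \IV vaccinates an ancestor $w$ of $u$ at time $s$ before the contagion reaches $w$, the vertex $w$ was open at time $s$. In that case, defending $w$ at the time $u$ was inoculated, instead of $u$, saves a superset of the vertices saved in \IV through that branch. A cleaner route is to show that the \emph{optimal} saved numbers coincide: any \IV strategy can be converted into a \bFire strategy that saves at least as many vertices, by replacing each inoculated vertex with its highest ancestor that ends up vaccinated. This is legitimate provided that ancestor is still open at the inoculation time, which I expect follows because the contagion front moves exactly one level per step along the path, just as the vaccine does. The converse direction is immediate, since vaccination only adds saved vertices. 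This shows the instance is a yes-instance of \IV exactly when it is a yes-instance of \bFire, and together with membership in NP it proves the theorem.
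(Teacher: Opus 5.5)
Your argument is essentially the paper's: hardness is inherited from \bFire on trees through the identity map, using the tree equivalence observed by Anshelevich et al. You are right that this equivalence holds only for optimal values, not strategy by strategy: upward vaccine spread can save extra vertices, so the ``key lemma'' as you first state it is false, a point the paper's informal explanation glosses over.

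The step you leave as an expectation does go through. The highest vaccinated ancestor $z$ of an inoculated vertex is the first non-contagious vertex on its root path, so the vaccine reached $z$ from inside its own subtree, from some vertex $u^*$ inoculated at time $t^*$ with $t^*+d(u^*,z)\le d(r,z)$. Hence defending each such $z$ once, at the earliest inoculation time among the vertices mapped to it (dropping duplicates), is legal, stays within budget, and saves every vertex that \IV saves.
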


A tractability result and an approximation result immediately following from Bazgan et al.\  \cite{bazgan_2013} and Cai et al.\ \cite{cai_2008} respectively:

\begin{theorem}
    \IV with budget $b$ is solvable in time polynomial in input tree size on trees of maximum degree $b+2$, rooted at a vertex of degree $b+1$.
\end{theorem}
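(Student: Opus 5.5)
The plan is to reduce to the known result for \bFire via the equivalence on trees observed by Anshelevich et al.\ \cite{anshelevich_2012}, and then invoke the polynomial-time algorithm of Bazgan et al.\ \cite{bazgan_2013} for \bFire on trees of maximum degree $b+2$ rooted at a vertex of degree $b+1$. The first task is to make the equivalence precise and check that it preserves the instance: for a tree $T$ rooted at $r$, the instance $(T,r,b,k)$ of \IV is a yes-instance if and only if the same tuple is a yes-instance of \bFire. The graph, root, budget and target are unchanged, so the degree restrictions carry over verbatim.

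For the equivalence itself, I will orient $T$ away from $r$ and argue by induction on time-steps, comparing the same strategy $\sigma$ in both games. The contagion only moves from a vertex to its children. Consequently, a vertex $v$ ends up contagious in \bFire exactly when no vertex on the $r$--$v$ path is defended. The main step is to show the same characterisation in \IV. If some ancestor $u$ of $v$, or $v$ itself, is inoculated at time $i$, then the contagion cannot reach $v$ before $u$ is vaccinated, since every path from $r$ to $v$ passes through $u$. From then on, the vaccine spreads down the subtree of $u$ at the same speed as the contagion would have, and within each time-step it spreads before the contagion does. So every descendant of $u$ is vaccinated no later than it could become contagious, and the contagion never enters the subtree of $u$.

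Conversely, vaccine spreading upward or sideways, to vertices that are not descendants of inoculated vertices, can only vaccinate vertices that the contagion would never have reached in \bFire anyway. This needs a small argument. Vaccine reaches such a vertex $w$ only through the parent of some inoculated vertex $u$. That parent must be open when the vaccine arrives. But by the time $u$ is inoculated, the contagion has moved at least as far down as the parent's depth unless that branch was already cut off. Hence, whenever the upward spread is not itself already saved territory, the parent is contagious.

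I expect this converse direction to be the main obstacle. To be safe, I will instead prove only the two inequalities needed. First, any \bFire strategy achieving $k$ saved vertices yields an \IV strategy saving at least as many, because vaccination spreading only adds vaccinated vertices and never helps the contagion. Second, from any \IV strategy $\sigma$, I build a \bFire strategy saving at least as many vertices. To do this, replace each inoculated vertex that is not a descendant-side cut by a vertex on the contagion frontier, or prune it, and argue that the set of saved vertices in \IV is exactly the union of the subtrees of those inoculated $u$ that lie strictly below the contagion front at time $i$. This is precisely the saved set of $\sigma$ in \bFire.

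Validity of strategies also needs checking: an inoculated vertex must be open, and openness in \IV implies openness in \bFire at the corresponding time. With the equivalence in hand, the decision procedure of \cite{bazgan_2013} answers the \IV instance in polynomial time.
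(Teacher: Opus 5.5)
Your route is the paper's: pass to \bFire on the same rooted tree with the same $b$ and $k$ via the tree equivalence of Anshelevich et al., then invoke Bazgan et al. The paper cites that equivalence rather than proving it. The problem is in the part you add, your own proof of the converse direction, which is not correct as written.

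\textbf{Where it fails.} You claim two things: that vaccine spreading upward or sideways only reaches vertices the fire would never reach in \bFire, and that the \IV saved set is the union of the subtrees of the inoculated vertices, hence the \bFire saved set of $\sigma$. Both are false. Take $b=1$ and the tree below (maximum degree $3$, root of degree $2$):
\begin{itemize}
\item root $r$ has children $v_1$ and $y$;
\item $v_1$ has children $v_2$ and $x$;
\item $v_2$ has child $v_3$.
\end{itemize}
Inoculate $v_3$ at step $1$. The contagion takes $v_1$ and $y$ in step $1$. In step $2$ the vaccine reaches $v_2$ first, and the contagion only gets $x$. So \IV saves $\{v_2,v_3\}$. \bFire with the same strategy saves only $\{v_3\}$, and $v_2$ lies outside the subtree of $v_3$.

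Your sentence ``by the time $u$ is inoculated, the contagion has moved at least as far down as the parent's depth'' fails whenever $u$ is inoculated at step $i$ at depth greater than $i$. For a fixed strategy, \IV can save strictly more than \bFire; only the optimal values coincide. The replacement step (``replace by a vertex on the contagion frontier'') is the right instinct, but you never show that everything the \emph{original} \IV strategy saves is covered by what the replacement vertices save. That is exactly where the upward spread must be controlled.

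\textbf{The missing lemma.} Write $d(\cdot)$ for depth.
\begin{itemize}
\item A vertex $u$ that is open when inoculated at step $i$ has $d(u)\ge i$. Otherwise the contagion, or vaccine descending from an ancestor, would already have reached it.
\item For any vertex $w$, let $m$ be the lowest common ancestor of $u$ and $w$. Then $i+d(u,w)\le d(w)$ if and only if $2d(m)\ge i+d(u)$. So in a tree the transmission set $\mathcal{T}(u)$ is exactly the subtree rooted at the ancestor $a^*(u)$ of $u$ at depth $\lceil (i+d(u))/2\rceil \ge i$.
\item Every vertex that escapes the contagion has an ancestor-or-self $x$ vaccinated no later than step $d(x)$. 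The vaccine moves at unit speed, so $x\in\mathcal{T}(u)$ for some inoculated $u$. Since $\mathcal{T}(u)$ is closed under taking descendants, the \IV saved set is contained in $\bigcup_u \mathcal{T}(u)$.
\end{itemize}

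\textbf{Finishing the converse.} For each $u\in\sigma_i$, defend $a^*(u)$ at step $i$, or equivalently the depth-$i$ ancestor of $u$ (your frontier vertex), dropping repeats. This is legal in \bFire because these vertices have depth at least $i$ and so are unburned at step $i$. It respects the budget, and it saves a superset of the \IV saved set.

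Your first direction is fine: a vertex the \bFire strategy defends that is already vaccinated in \IV is simply skipped. With the lemma above replacing the false characterisation, your argument becomes a correct self-contained proof of the equivalence the paper takes from the literature.
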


\begin{theorem}
    There is a $(1-1/e)$-approximation algorithm for \IV on trees.
\end{theorem}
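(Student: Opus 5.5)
The plan is to transfer the result of Cai et al.\ \cite{cai_2008} for \bFire on trees to \IV through the equivalence observed by Anshelevich et al.\ \cite{anshelevich_2012}. The key step is a value-preserving correspondence between strategies for the two problems on a rooted tree $(T,r)$ with budget $b$. Concretely, I would show that any strategy $\sigma$ saves exactly the same set of vertices under both rules. The saved set should be the union of the subtrees $T_v$ rooted at the defended or inoculated vertices $v$, where $T_v$ consists of $v$ and its descendants with respect to $r$.

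To prove this, first consider \bFire. The fire only moves away from $r$, so a vertex $u$ burns unless some ancestor of $u$ (or $u$ itself) is defended before the fire arrives. Hence the saved set is exactly $\bigcup_{v}T_v$ over the defended vertices $v$ that were open when chosen. Note also that defending a vertex already lying in some $T_v$ is wasted, since that vertex is saved anyway.

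Second, for \IV, the vaccine spreading from an inoculated vertex $v$ can never move to $v$'s parent. That parent is either contagious, or it is an ancestor that the contagion has not yet reached. In the latter case the parent lies on the path from $r$ to $v$, and the contagion reaches it no later than the vaccine could. Checking this timing carefully is the one delicate point. The vaccine reaches the parent of $v$ at step $i+1$ at the earliest. The contagion, meanwhile, lies on that path at distance $d(r,v)-i$ or closer, and I would argue it is adjacent in time. If the vaccine does spread upward to an open ancestor, that only enlarges the saved set. So the saved set in \IV contains $\bigcup_v T_v$ over the inoculated $v$. Any strategy for \IV therefore saves at least as much as the same strategy in \bFire.

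For the converse direction, given an \IV strategy, replace each inoculation that later spreads upward by a direct inoculation of the highest vertex that the vaccine reaches. Alternatively, note that an optimal \IV strategy can be chosen in which every inoculated vertex is a child of a vertex that is contagious or about to become contagious. This gives a \bFire strategy, with at most $b$ defences per step, that saves the same number of vertices. Together the two directions show that the optimal values of the two problems coincide on trees, and that strategies translate between them in polynomial time.

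Given the correspondence, the approximation result follows by running the $(1-1/e)$-approximation algorithm of Cai et al.\ for \bFire on $(T,r)$ and using its output strategy as an inoculation strategy for \IV. By the second step, it saves at least as many vertices under \IV. By the converse direction, the \bFire optimum is at least the \IV optimum. The result is therefore a $(1-1/e)$-approximation for \IV.

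I expect the main obstacle to be the timing argument showing that upward vaccine spread gives no extra benefit, that is, that the \IV optimum does not exceed the \bFire optimum. My first approach would be the exchange argument above: move each inoculation to the topmost vertex its vaccine eventually reaches, and check that this vertex is still open at the time of inoculation. This check can be done by an induction on time along the root-to-vertex path.
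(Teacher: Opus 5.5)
Your route is the paper's: it simply combines the observation of Anshelevich et al.\ that \IV and \bFire coincide on trees with the $(1-1/e)$-approximation of Cai et al. The deduction you finally draw (run that algorithm, reuse its strategy for \IV, compare optima) is sound. However, the lemma you announce first is false, and so is the timing claim you single out as the delicate point. A fixed strategy does \emph{not} save the same set under both rules, because the vaccine \emph{can} climb to the parent of an inoculated vertex. Take a tree containing the root path $r=x_0,x_1,\dots,x_5$ and inoculate $x_5$ in time-step~1. The contagion is at $x_1$ after step~1 and at $x_2$ after step~2. The vaccine reaches $x_4$ in step~2 and $x_3$ in step~3, each time before the contagion phase. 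So \IV saves all of $T_{x_3}$, whereas \bFire with the same strategy saves only $T_{x_5}$. In general, an inoculation at depth $D$ in step $i$ climbs, if nothing interferes, to depth $\lceil (D+i)/2\rceil$, and then spreads down every branch below that vertex. Only the optimal \emph{values} of the two problems agree, and that is all your argument needs.

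Neither inequality you actually use depends on the false claim, but both need a cleaner justification. That the same strategy saves at least as much under \IV is plain monotonicity. By induction on $t$, every vertex that \bFire has defended by step $t$ is vaccinated in \IV by step $t$. Hence the contagious set of \IV after step $t$ lies inside the burning set of \bFire; this holds on any graph. Vertices of $\sigma_i$ already vaccinated in \IV are simply skipped.

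For the reverse inequality between optima, your idea of moving each inoculation to the topmost vertex its vaccine reaches is right. But attributing vaccine to individual inoculations is ambiguous once fronts merge, and several inoculations may be sent to the same vertex. It is cleaner to start from the final contagious set $C$, a subtree containing $r$ whose vertices are infected exactly at their depth. Each component of $T-C$ is some $T_a$ whose root $a$ has its parent in $C$, so $a$ was vaccinated by step $d(r,a)$. Since that parent was never vaccinated, tracing the vaccination of $a$ backwards gives a strictly descending chain. It ends at a vertex $v_a\in T_a$ inoculated in some step $i_a\le d(r,a)$. The subtrees $T_a$ are pairwise disjoint, so $a\mapsto v_a$ is injective. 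Therefore defending each $a$ in step $i_a$ is a legal \bFire strategy with budget $b$, since the fire has reached at most depth $i_a-1$ by then, and it saves exactly $V(T)\setminus C$.
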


We next consider interval graphs. Let $G=(V, E)$ be a graph on $n$ vertices such that there exists a set $I=\{I_1, \dots, I_n\}$ of intervals on the real line and a bijection $\mathcal{I}:V\rightarrow I$ that maps each vertex $v\in V$ to an interval $\mathcal{I}(v)=I_v$ such that, for all $u \in N(v)$, $I_u\cap I_v\neq\emptyset$ and for all $w\notin N(v)$, $I_v\cap I_w = \emptyset$. Such a pair ($I$, $\mathcal{I}$) is called an \emph{interval representation} and a graph for which there exists an interval representation is called an \emph{interval graph}. Given an interval graph $G=(V, E)$, we  can obtain an interval representation in $O(|V|+|E|)$ time \cite{booth_1976}.

Adapting an approach of Fomin et al. \cite{fomin_2016}, we show that in any instance of \IV on an interval graph, the contagion can be contained within at most three time-steps and we can resolve \IV in polynomial time on interval graphs. We essentially do this by first playing optimally on one side of the contagion, and then on the other. As in \cite{fomin_2016}, we assume without loss of generality that interval endpoints are unique, since intervals can be perturbed slightly if necessary.

\begin{theorem}\label{thm:iv-interval}
Given an instance $(G,r,b,k)$ of \IV on an interval graph $G$ with $n$ vertices and $m$ edges, this instance can be solved in $\mathcal{O}(n+m)$ time.
\end{theorem}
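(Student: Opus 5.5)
We first compute an interval representation in $\mathcal{O}(n+m)$ time using \cite{booth_1976}, and perturb so all endpoints are distinct. Let $I_r=[\ell_r,\rho_r]$. We split the vertices other than $r$ into $N(r)$ and the rest. Every non-neighbour of $r$ lies entirely to the left of $\ell_r$ or entirely to the right of $\rho_r$. Call these the left side $L$ and the right side $R$.

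\textbf{Structural claim.} If the contagion is contained at all, it can be contained so that every vertex that is ever contagious lies in $N[r]$ or in a very small neighbourhood of it. The plan is to show that, after at most three time-steps, the contagion can no longer spread. On the right side, consider among all vertices of $N(r)$ the one whose interval reaches furthest right, say with right endpoint $\rho^*$. The contagion can escape to the right only through vertices intersecting $(\rho_r,\rho^*]$.

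\textbf{Exchange argument.} We adapt the Fomin et al.\ \cite{fomin_2016} argument. Suppose a vertex $w$ is inoculated at time $1$ and its interval covers a point $p$ to the right of $\rho_r$. Then at time $2$ the vaccine spreads to every open vertex whose interval contains $p$. All such vertices, together with their further vaccine spread, form a ``wall'' separating the left part from the right part at $p$.

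We will argue that an optimal strategy can be assumed to use a canonical form. In time-step $1$ it spends inoculations on $N(r)$ and on ``cut'' vertices on each side. Later time-steps matter only for at most two further rounds, because the vaccine travels at the same speed as the contagion and interval graphs are linearly ordered.

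\textbf{Reduction to a one-sided problem.} Inoculations on the left and the right interact only through the shared budget $b$ per step. So for each side and each number $j\le b$ of inoculations used at each step, we compute the best number of saved vertices on that side. We do this with a sweep over endpoints that chooses, at each step, the inoculated interval extending furthest outward. The greedy choice is justified because such an interval dominates any other: it vaccinates a superset of the outward vertices one step later.

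We then combine the two sides by trying all splits $j + (b-j)$ at each of the $\le 3$ steps. Since the choices at the three steps are greedy and monotone, the split can be handled with prefix sums along the sorted endpoint order. This keeps the total work linear. Finally we compare the optimum with $k$.

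\textbf{Main obstacle.} I expect the hardest step to be the domination/exchange lemma. We need that within three steps everything is decided, and that the furthest-reaching choice is optimal despite vaccine spread also propagating inward into $N(r)$ and competing with the contagion there. I would first prove it for $b=1$ by case analysis on the position of the inoculated interval relative to $\rho_r$ and $\rho^*$, then extend it to general $b$.

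Achieving linear rather than $\mathcal{O}(n\log n)$ time also needs care. We would rely on the sorted endpoint list produced by \cite{booth_1976} rather than re-sorting.
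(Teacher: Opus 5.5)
\textbf{There is a genuine gap.} The lemma that would carry your proof is explicitly deferred, and the two structural claims it is meant to support are false.

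First, left and right inoculations do not interact only through the budget. A neighbour of $r$ whose interval protrudes on both sides of $I_r$ vaccinates distance-two vertices on both sides at the start of step~2. One inoculation therefore serves both sides, and per-side optima combined by splits $j+(b-j)$ cannot represent this.

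Second, the interval ``extending furthest outward'' does not dominate. Take $b=1$ and $I_r=[0,10]$, with $N(r)=\{v_\ell,u,v_r\}$ given by $I_{v_\ell}=[-3,1]$, $I_u=[-2,12]$, $I_{v_r}=[9,13]$. Add two short intervals inside $(-2,0)$, two inside $(10,12)$, and $I_a=[-4,-2.5]$, $I_c=[12.5,14]$.
\begin{itemize}
\item Inoculating $u$ first vaccinates all four short vertices in step~2, leaving only $a$ and $c$ threatened. This gives $4$ infected vertices in total.
\item Starting with $v_\ell$ (or $v_r$) leaves three threatened vertices on the far side and gives $5$.
\item Starting with any non-neighbour of $r$ gives at least $6$.
\end{itemize}
So the unique optimal first move is extremal in neither direction. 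Separately, your worry about vaccine competing with the contagion inside $N(r)$ is vacuous: every neighbour of $r$ is either inoculated or infected in step~1.

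\textbf{The missing idea.} The heart of the paper's proof is the following observation. Let $v_r$ be the neighbour of $r$ whose interval reaches furthest right, to $\rho^*$. Every vertex at distance two from $r$ lying to the right of $I_r$ has its left endpoint in $(\rho_r,\rho^*)\subseteq I_{v_r}$. So $v_r$ alone is adjacent to all of $R_2(r)$, and symmetrically $v_\ell$ is adjacent to all of $L_2(r)$.

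Consequently, for $b\ge 2$, inoculate $v_\ell$ and $v_r$ in step~1, plus further neighbours of $r$ up to the budget. This halts the contagion in step~2 with $\max(1,|N[r]|-b)$ infected vertices. That matches the trivial lower bound, since at most $b$ neighbours of $r$ escape infection in step~1. No sweep, dynamic programme or budget split is needed. The only real work is $b=1$, which the paper handles by a direct case analysis of step~1 followed by an extremal choice in $L_2(r)$ or $R_2(r)$ at step~2.

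\textbf{A caution about the paper's $b=1$ argument.} The example above satisfies $N(r)\subseteq N[u]$, so it falls under the paper's second case. That case asserts that inoculating $u$ leaves only $|N[r]|-1=3$ infected vertices, but the true optimum is $4$. So when $b=1$, a neighbour whose interval covers $I_r$ without being both $v_\ell$ and $v_r$ needs a finer argument than either your greedy or the paper's case split provides.
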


\begin{proof}
First, if $b \geq N(r)$ then we inoculate the neighbourhood of $r$ and immediately restrict the contagion to a single vertex.  

Second, if $N(r) \subseteq N[v]$ for some vertex $v$ then during time-step 1, we inoculate $v$ along with up to $b-1$ neighbours of $r$.  This restricts the contagion to  $\max(1,|N[r]|-b)$ vertices if $v \in N(r)$, and $\max(1,|N[r]|-b+1)$ otherwise; both are best possible.

Otherwise, we observe that there can be no single neighbour of $r$ whose interval extends both farther to the left and farther to the right than the interval of any other neighbour of $r$. Let $v_\ell$ and $v_r$ be the unique neighbours of $r$ such that the left endpoint of $\mathcal{I}(v_\ell)$ extends farther to the left than the interval of any other neighbour of $r$ and the right endpoint of $\mathcal{I}(v_r)$ extends farther to the right than the interval of any other neighbour of $r$.

If $b \geq 2$ then we can solve the instance optimally by inoculating both $v_\ell$ and $v_r$ as well as up to $b-2$ other distinct members of $N(r)$ in time-step 1, therefore restricting the contagion to $\max(1,|N[r]|-b)$ vertices.

The remaining case is that $b=1$ and $v_\ell$ and $v_r$ are distinct. We argue that we can contain the contagion in three time-steps, in a way that is a best-possible solution. For this, we need to refer to the neighbours of neighbours of $r$ that lie to the left of $r$ as separate from those on the right.  We therefore denote by $L_2(r)$ the set of vertices that are distance two from $r$ and correspond to intervals that lie entirely to the left of $I(r)$ in $\mathcal{I}$. And symmetrically, we denote by $R_2(r)$ the set of vertices that are distance two from $r$ and correspond to intervals that lie entirely to the right of $I(r)$ in $\mathcal{I}$.

For $b=1$, we next argue that if $v_\ell$ and $v_r$ are distinct then under any inoculation strategy at least $|N(r)| + \min(|L_2(r)|, |R_2(r)|) -1$ vertices will be infected, and we then argue that we can find a strategy that achieves this in the required time complexity. We consider two cases, depending on whether a vertex in $N(r)$ is inoculated in time-step 1 or not.

If not, then $|N(r)|$ vertices become infected in time-step 1.  Because the vertex inoculated in time-step 1 was not a neighbour of $r$, its interval must lie entirely to the right or entirely to the left of $\mathcal{I}(r)$:  without loss of generality, entirely to the right.  Observe that when the vaccine spreads at the beginning of time-step 2, it cannot spread to any member of $L_2(r)$.   During the inoculation-selection phase in time-step 2, at most one member of $L_2(r)$ is inoculated, and then at the end of that time-step all other members will become infected. Therefore at least $|N(r)| + |L_2(r)| -1 + |\{r\}| = |N(r)| + |L_2(r)|$ vertices are infected.

Suppose we inoculate within $N(r)$ during time-step 1. Then at most one of $v_r$, $v_\ell$ is inoculated in time-step 1 and $|N(r)|-1$ vertices become infected in time-step 1.  When the vaccine spreads during time-step 2, at most one of $L_2(r)$ or $R_2(r)$ become vaccinated: without loss of generality, suppose the vaccination does not spread to $L_2(r)$ during time-step 2. Then no neighbour of a vertex of $L_2(r)$ was inoculated on time-step 1. During the inoculation-selection phase in time-step 2, at most one vertex of $L_2(r)$ can be inoculated and then at the end of that time-step all other vertices of $L_2(r)$ will become infected.  At least $|N(r)| + |L_2(r)| -1$ vertices are infected by the end of time-step 2. 

Thus, at least $|N(r)| + \min(|L_2(r)|, |R_2(r)|) -1$ will be infected.  This bound can be achieved by the following inoculation strategy:
\begin{itemize}
\item in time-step 1: inoculate $v_\ell$ if $|L_2(r)| > |R_2(r)|$, else inoculate $v_r$;
\item in time-step 2: inoculate the vertex of $L_2(r)$ with the leftmost left endpoint if we previously inoculated $v_r$, else inoculate the vertex of $R_2(r)$ with the rightmost right endpoint.  
\end{itemize}
With the above strategy, the contagion is contained after the vaccine spreads during time-step 3, with $|N(r)| + \min(|L_2(r)|, |R_2(r)|) -1$ vertices infected.  Determining which of $L_2(r)$ or $R_2(r)$ is larger, and identifying the vertices with extremal intervals that we require can be achieved from an adjacency list and the interval representation in $\bigO{|V|+|E|}$ time. 
\end{proof}

\section{Parameterised approaches}\label{sec:parameterize}

If a problem $A$ parameterised by $k$ can be solved in $\bigO{n^{f(k)}}$ time for input size $n$ and a computable function $f$, we say $A$ is in XP, meaning \emph{slicewise polynomial} \cite{downey_1999}. More strongly, $A$ is \emph{fixed-parameter tractable} (FPT) with respect to $k$ if it can be solved in $\bigO{f(k)n^{\bigO{1}}}$ time.  
We show that \IV is FPT when parameterised by the combination of budget with vertex cover size (Section \ref{sec:vertex-cover}) or neighbourhood diversity (Section \ref{sec:neighbourhood-diversity}).  It is in XP when parameterised by the combination of budget and maximum path length (Section \ref{sec:bounded-path-length}).

\begin{lemma}\label{lem:fpt-simulation}
  Let $\mathcal{P}$ be a parameterised decision problem on a graph $G=(V, E)$ on $n$ vertices for which a certificate is a finite sequence of subsets of vertices $\sigma=(\sigma_t)_{t\leq T}$ for some $T\in\mathbb{N}$. Let $f, g$ and $h$ be computable functions and suppose:
  \begin{enumerate}
    \item The lifetime of any instance is bounded by $f(k)$ (that is, $T\leq f(k)$). \label{it:1}
    \item For each time-step $t$, $V$ can be partitioned into at most $g(k)$ equivalence classes such that selecting any two vertices from the same class for $\sigma_t$ yields identical outcomes for the rest of the process. \label{it:2}
    \item The outcome of a fixed certificate $\sigma$ can be simulated in $h(n,k)$ time.\label{it:3}
  \end{enumerate}
  Then $\mathcal{P}$ can be solved in $\big(g(k)\big)^{f(k)} \cdot h(n,k)$ time. In particular, if $\mathcal{P}$ can be expressed as $F(k)\cdot \mathrm{poly}(n)$ for some computable $F$, then $\mathcal{P}$ is fixed-parameter tractable.
\end{lemma}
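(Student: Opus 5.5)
The plan is a bounded brute-force search over ``class-level'' certificates: each branch of a search tree picks one representative per equivalence class, and hypothesis~\ref{it:2} guarantees this loses no generality.

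Since by hypothesis~\ref{it:1} every certificate has length $T\le f(k)$, we build a search tree of depth at most $f(k)$. At a node at depth $t$, the current state of the process (after applying $\sigma_1,\dots,\sigma_{t}$) is fixed. Hypothesis~\ref{it:2} supplies a partition of $V$ into at most $g(k)$ classes for step $t+1$. Selecting two vertices of the same class gives identical outcomes for the remainder of the process. Hence it suffices to branch on which class the next selected vertex comes from, choosing one canonical representative of that class.

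I would read hypothesis~\ref{it:2} as saying that, for each step, the relevant choices for $\sigma_t$ number at most $g(k)$ up to equivalence. That is, the ``vertex'' chosen at step $t$ may be taken to stand for a selection, so each node has at most $g(k)$ children. If instead $\sigma_t$ is a genuine set of up to $b$ vertices, one can refine time so that each sub-step selects a single vertex. Since $b$ is part of the parameter, this changes the exponent only by a factor of $b$, which is absorbed into $f(k)$. The tree therefore has at most $g(k)^{f(k)}$ leaves. Each leaf corresponds to a certificate, and we simulate it in $h(n,k)$ time by hypothesis~\ref{it:3}, checking the acceptance condition (e.g.\ containment with at least $k$ saved vertices). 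The instance is a yes-instance iff some leaf accepts.

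Correctness comes from an exchange argument, by induction on $t$ from the end. Take any valid certificate $\sigma$. Replace each chosen vertex at step $t$ by the canonical representative of its class, in the partition associated with the state reached so far. By hypothesis~\ref{it:2}, the outcome of the remaining process is unchanged. Hence the resulting class-level certificate, which does appear as a leaf of the tree, has the same outcome as $\sigma$. Conversely, every leaf is itself a certificate, so accepting leaves are genuine witnesses.

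Multiplying the leaf count by the per-leaf cost gives running time $g(k)^{f(k)}\cdot h(n,k)$, plus the cost of computing the partitions. I would assume that cost is dominated by, or folded into, $h(n,k)$. When $h(n,k)=F'(k)\,\mathrm{poly}(n)$, the total is $F(k)\,\mathrm{poly}(n)$, which is fixed-parameter tractable.

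The main obstacle is making the ``identical outcomes'' notion precise enough for the exchange argument to compose across steps. The partition at step $t$ depends on the state, which depends on the earlier choices. So the induction must swap vertices one step at a time, with each later partition recomputed relative to the state already produced by the earlier swaps. I would handle this by phrasing the invariant as: after swapping the choices at steps $1,\dots,t$, the state at time $t$ is equivalent to the original one, so the remaining original choices still have equivalents.
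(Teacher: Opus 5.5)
Your proposal is correct and takes essentially the same route as the paper: strategy length is at most $f(k)$, there are at most $g(k)$ outcome-distinct choices per step, giving at most $g(k)^{f(k)}$ candidate certificates, and each is simulated in $h(n,k)$ time. The paper's proof is only this short count; your exchange argument for state-dependent partitions, your treatment of set-valued $\sigma_t$ (which the paper handles implicitly by letting $g$ count outcome-distinct choices of the whole set, as in its vertex-cover application), and your note on the cost of computing partitions add rigour the paper leaves implicit without changing the approach.
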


\begin{proof}
  By \ref{it:1}., any strategy has length at most $f(k)$. By \ref{it:2}., there are at most $g(k)$ outcome-distinct choices per step, hence at most $g(k)^{f(k)}$ candidate certificates overall. By \ref{it:3}., simulating a fixed candidate takes $h(n,k)$ time. Hence the total runtime is $g(k)^{f(k)}\cdot h(n,k)$.
\end{proof}

Two vertices $u,v$ in graph $G$ have the same {\it neighbourhood type} if $N(u) \backslash \{v\} = N(v) \backslash \{u\}$.  A {\it type partition} of $G$ is a partition $\mathcal{T} = \{T_1,\dots,T_w\}$ of $V(G)$ such that if $u$ and $v$ have the same neighbourhood type, then  $u, v \in T_i$, for some $i: 1 \leq i \leq w$.  This is an equivalence relation, whose classes are the types.  Suppose that two strategies $\sigma$ and $\sigma'$ are identical except during some time-step $t$, $u \in T_i$ is inoculated via $\sigma$, whereas $v \in T_i$ via $\sigma'$.  If the entire evolution is identical up to a permutation of vertices within $T_i$, then we say the strategies produce {\it equivalent outcomes}.   That is, if the final partition into contagious, vaccinated, and open states is the same up to relabelling within $T_i$, then $\sigma$ and $\sigma'$ produce equivalent outcomes.

\begin{lemma}\label{lem:type-equivalence}
  Let $G$ be a graph that admits a type partition $\mathcal{T}=\{T_1,\dots,T_w\}$. In an instance of \IV on $G$, inoculating different vertices from the same type yields equivalent outcomes.
\end{lemma}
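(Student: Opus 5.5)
The plan is to show that the transposition $\pi=(u\ v)$, which swaps two vertices $u,v$ of the same type $T_i$ and fixes every other vertex, is an automorphism of $G$, and then to transport the whole process along $\pi$. First I check that $\pi$ preserves adjacency. Edges not meeting $\{u,v\}$ are unchanged. For $w\notin\{u,v\}$, we have $w\in N(u)$ if and only if $w\in N(u)\setminus\{v\}=N(v)\setminus\{u\}$, that is, if and only if $w\in N(v)$; so $uw\in E$ exactly when $vw\in E$. The edge $uv$, if present, is mapped to itself. Hence $\pi\in\mathrm{Aut}(G)$.

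Next, let $\sigma$ inoculate $u$ at step $t$ and let $\sigma'$ be identical except that it inoculates $v$ there. The root may itself lie in $T_i$, so I cannot simply claim $\sigma'=\pi(\sigma)$ with the same root. The cleaner route is to compare the processes step by step. I would prove by induction on the time-step $s$ that the state of the $\sigma'$-process equals the image under $\pi$ of the state of the $\sigma$-process, at each of the three sub-phases (vaccine spread, inoculation, contagion spread). This works because each update rule is defined purely in terms of adjacency and the current states, and $\pi$ commutes with taking neighbourhoods.

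The obstacle is that $\pi$ moves $u$ and $v$ themselves, so at steps other than $t$ the two strategies choose the same sets, not $\pi$-images of each other. I would handle this by observing the following. Before step $t$, the chosen sets agree. By then $u$ and $v$ must both be open at step $t$ (otherwise $\sigma$ or $\sigma'$ is invalid), and since $u$ and $v$ have identical neighbourhoods outside $\{u,v\}$, both open states are consistent with the same history. So the state just before inoculation at step $t$ is $\pi$-invariant, and $r\notin\{u,v\}$ there. After step $t$, the two processes differ only by the swap $u\leftrightarrow v$. A later chosen set containing $u$ or $v$ would be invalid in both processes, since $u$ and $v$ are then each vaccinated or contagious in both. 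Every other later set is fixed by $\pi$.

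Together these give the inductive invariant, and hence that the final partition into contagious, vaccinated and open vertices under $\sigma'$ is the $\pi$-image of that under $\sigma$. Since $\pi$ permutes only vertices within $T_i$, this is exactly an equivalent outcome, and in particular the counts of saved vertices and the containment time agree.

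The case where $u$ and $v$ are not both open at step $t$ is degenerate: the substitution is then simply invalid, so there is nothing to prove there.
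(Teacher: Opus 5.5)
Your observation that the transposition $\pi=(u\ v)$ is an automorphism is correct, and it is a cleaner foundation than the paper's argument. However, the step handling time-steps after $t$ fails. The claim that a later set containing $u$ or $v$ is invalid in both processes is true when $G[T_i]$ is a clique: the un-inoculated twin is adjacent to the inoculated one, so it is no longer open at the inoculation phase of step $t+1$. It is false when $G[T_i]$ is independent.

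Take edges $rw$, $wu$, $wv$ (so $N(u)=N(v)=\{w\}$), $b=1$ and $\sigma=(\{u\},\{v\})$. In step~1, $u$ is inoculated and $w$ becomes contagious. In step~2 the vaccine cannot reach $v$, since its only neighbour is contagious, so $v$ is still open; inoculating it is valid, and both leaves are saved. The strategy identical to $\sigma$ except at step~1, namely $(\{v\},\{v\})$, is invalid at step~2, or, if that inoculation is treated as void, loses $u$. So for $s>t$ one can have $\sigma'_s\neq\pi(\sigma_s)$. Your invariant (the $\sigma'$-state is the $\pi$-image of the $\sigma$-state) breaks there, and no argument can give equivalence for strategies that are literally identical outside step $t$.

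The repair is to compare $\sigma$ with $\pi(\sigma)=(\pi(\sigma_1),\pi(\sigma_2),\dots)$ instead. You already have $r\notin\{u,v\}$ and that no $\sigma_s$ with $s<t$ meets $\{u,v\}$. Hence $\pi$ fixes $r$, and $\pi(\sigma)$ agrees with $\sigma$ before step $t$ and inoculates $v$ in place of $u$ at step $t$. Your equivariance induction then runs from time-step $0$ with no case analysis on later sets. It also shows that $\pi(\sigma)$ is valid whenever $\sigma$ is, since $\pi(x)$ is open in one process exactly when $x$ is open in the other.

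This is precisely what Lemma~\ref{lem:fpt-simulation} requires: the outcomes reachable after choosing $u$ at step $t$ correspond bijectively, via $\pi$, to those reachable after choosing $v$. The paper's own proof tracks the configuration only to the end of step $t$ and tacitly relies on this permuted continuation. Your version, once corrected, makes it explicit.
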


\begin{proof}  Let $G=(V,E)$ be a graph with a type partition $\mathcal{T} = \{T_1, \dots, T_w\}$. Let $\sigma$ and $\sigma^\prime$ be two strategies that are identical for the first $t$ time-steps, with the exception that during time-step $t$, open vertex $u \in T_i$ is inoculated by $\sigma$ and open vertex $v \in T_i$ is inoculated by $\sigma^\prime$. After $t-1$ time-steps, the state configurations are identical; that is, regardless of whether one follows strategy $\sigma$ or $\sigma^\prime$, at the end of time-step $t-1$, vertex states are the same. By definition, $G[T_i]$ is either a clique or an independent set. Consequently, during time-step $t$, the spread of the vaccine and the contagion to vertices outside $T_i$  occurs identically, regardless of whether $u$ or $v$ is inoculated at time-step $t$. In $T_i$, all vertices other than $u$ and $v$ have identical neighbourhoods and identical states at the start of the time-step, so the spread of the vaccine or contagion within $T_i$ is also identical under both strategies. Hence, at the end of time-step $t$, the configurations remain identical up to permutation of vertices within $T_i$.  Consequently, inoculating different vertices from the same type produces equivalent outcomes.
\end{proof}

\subsection{Vertex cover}
\label{sec:vertex-cover}

A \emph{vertex cover} of a graph $G=(V,E)$ is a set $S\subseteq V$ such that every edge of $G$ has at least one endpoint in $S$. The minimum size of such a set is the \emph{vertex cover number} $s$ of $G$.  For an instance $(G, r, b, k)$ of \IV where $G=(V,E)$ has a vertex cover $S$ of size $s$, let $\mathcal{C} = \{C_1, C_2, \ldots, C_{2^s}\}$ be the collection of all possible subsets of $S$ that could appear as neighbourhoods of vertices in $V\setminus S$. 
Define $\chi_{G,S}: V \setminus S \to \mathcal{C}$ by $\chi_{G,S}(v) = N(v) \cap S$ for $v \in V \setminus S$. The equivalence classes on $V\setminus S$ induced by $\chi_{G, S}$ are called \emph{$\chi_{G,S}$-classes.} There are at most $2^s$ such classes among vertices in $V\setminus S$.

\begin{lemma}\label{lem:lifetime}
  The lifetime of an instance of \IV on a graph with vertex cover number $s$ is at most $2s+2$.
\end{lemma}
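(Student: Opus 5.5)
We bound how long the contagion can keep spreading by tracking its progress through the vertex cover. Fix a minimum vertex cover $S$ with $|S|=s$, so $I=V\setminus S$ is an independent set. The lifetime is the number of time-steps until no open vertex is adjacent to a contagious vertex. Since states only change from open to contagious or vaccinated, the process runs as long as the contagion infects at least one new vertex each step. So it suffices to show that the contagion cannot infect new vertices in more than about $2s+1$ consecutive time-steps. After that, one further step (the vaccine spreading, or no open neighbours remaining) certifies containment, which gives the $+2$.

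The key step is to show that the contagion cannot make two consecutive steps of progress inside $I$ without infecting a new vertex of $S$. Consider the contagious set at the end of time-step $t$. Let $t$ be a step at which the contagion infects some new vertex but no new vertex of $S$. Then every newly infected vertex lies in $I$. Its contagious neighbour, which caused the infection, must lie in $S$, because $I$ is independent; the exception is $t=1$ with $r\in I$, where the neighbour is $r$ itself. Now suppose step $t+1$ also infects no new vertex of $S$. Then any vertex it infects is a vertex of $I$ adjacent to a contagious vertex of $S$. That $S$-vertex was already contagious at the end of step $t-1$, so this vertex of $I$ was already adjacent to the contagion before step $t$. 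It was still open after step $t$, although the contagion spreads to all open neighbours. That is a contradiction, unless the vertex was inoculated or vaccinated in between, in which case it is not open either way. More carefully: the open neighbours of old contagious vertices all become infected at step $t$. So at step $t+1$ new infections can only come from vertices first infected at step $t$, and those all lie in $I$. Their neighbours lie in $S$, so any new infection at step $t+1$ lies in $S$.

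Hence, among any two consecutive spreading steps, at least one infects a new vertex of $S$. More precisely, between consecutive steps that infect new vertices of $S$ there is at most one step that infects only $I$-vertices. Since at most $s$ steps can infect new vertices of $S$, the number of spreading steps is at most $s$ (those infecting $S$) plus $s+1$ (those infecting only $I$, one before each $S$-step and one after the last). This gives at most $2s+1$ spreading steps, and adding the final step in which nothing spreads yields lifetime at most $2s+2$.

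The main obstacle I expect is the bookkeeping in the middle step. I need to make precise that the vertices newly infected at step $t+1$ are neighbours only of vertices first infected at step $t$; this is the observation that the newly infected vertices at each step form the frontier. I also need to handle the root and the order of events within a time-step, namely that the vaccine spreads and inoculation happens before the contagion. Neither of these increases the number of infection steps, since both only remove open vertices. Finally, I need to align the counting with the paper's convention for lifetime, so that the constant comes out as exactly $2s+2$.
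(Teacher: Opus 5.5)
Your proof is correct and follows the same idea as the paper's. Because $V\setminus S$ is independent, the contagion must pass back through $S$ between advances in $V\setminus S$, so at most $s$ spreading steps infect new vertices of $S$, the remaining spreading steps interleave with these, and one extra step certifies containment. Your frontier observation (after step $t$, vertices infected before step $t$ have no open neighbours, so two consecutive spreading steps cannot both infect only vertices of $I$) is a more careful justification than the paper's informal ``at worst the graph is bipartite'' reduction, since it handles edges inside $S$ directly. Your caveat about $t=1$ with $r\in I$ is vacuous, because every neighbour of $r\in I$ lies in $S$.
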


\begin{proof}
Let $S$ be a vertex cover of size $s$. Note that $V\backslash S$ is an independent set. At worst, $S$ is also an independent set, and the graph is thus bipartite. Regardless of the part in which the contagion originates, since the graph is bipartite, infection must spread back and forth between $S$ and $V\backslash S$. There is a maximum of $s$ time steps to infect all of $S$ and, at worst, it takes $s$ time steps to infect all of $V \backslash S$. Thus, accounting for the initial outbreak at time-step $0$ and a final step to certify containment yields the bound $2s+2$ on the lifetime.
\end{proof}

\begin{theorem}
  Given an instance $(G, r, b, k)$ of \IV on a graph $G$ with $n$ vertices, $m$ edges, and vertex cover of size $s$, this instance can be solved in $2^{\bigO{s^2 b}}\cdot (n+m)$ time. Thus, \IV is FPT by vertex cover number and budget.
\end{theorem}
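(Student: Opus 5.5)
We will apply Lemma~\ref{lem:fpt-simulation} with parameter $(s,b)$. First we compute a vertex cover $S$ of size $s$. This can be done in $\bigO{2^s(n+m)}$ time by the standard bounded search tree, which is absorbed into the claimed bound. Condition~\ref{it:1} is supplied directly by Lemma~\ref{lem:lifetime}: every strategy has length at most $f(s)=2s+2$. Condition~\ref{it:3} is routine. Given a fixed strategy, each time-step consists of a vaccine-spread phase, an inoculation phase and a contagion-spread phase, and each phase is a single pass over the adjacency lists. So $h(n,k)=\bigO{s(n+m)}$, and at the end we count the saved vertices and compare with $k$.

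The heart of the argument is condition~\ref{it:2}, bounding the number of outcome-distinct choices of $\sigma_t$ at a single time-step. For each $v\in V\setminus S$, its neighbourhood is exactly $\chi_{G,S}(v)$, since $V\setminus S$ is independent. Hence two vertices in the same $\chi_{G,S}$-class have the same neighbourhood type, and by Lemma~\ref{lem:type-equivalence} they are interchangeable. Vertices of $S$ we treat individually as singleton classes. This gives a partition of $V$ into at most $2^s+s$ classes.

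A choice of $\sigma_t$ with $|\sigma_t|\le b$ is then determined, up to equivalent outcome, by two pieces of data. The first is a subset of $S$ of size at most $b$, with at most $\binom{s}{\le b}\le (s+1)^b$ options. The second is a multiset of at most $b$ classes among the $2^s$ $\chi_{G,S}$-classes, recording how many vertices to inoculate from each class; this has at most $(2^s+1)^b$ options. Consequently $g\le (s+1)^b(2^s+1)^b=2^{\bigO{sb}}$. Some options are infeasible because a class does not contain enough open vertices; these are detected during simulation and discarded. The total count is therefore
\[
g^{f} \le \bigl(2^{\bigO{sb}}\bigr)^{2s+2} = 2^{\bigO{s^2 b}},
\]
and multiplying by $h$ gives $2^{\bigO{s^2 b}}(n+m)$, with the factor $s$ from $h$ absorbed into the exponent.

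The main obstacle is justifying that choosing the \emph{multiset} of classes suffices, i.e.\ that Lemma~\ref{lem:type-equivalence} extends from a single swap to whole sets chosen across several steps. We will argue this by repeated application of the lemma, since swapping one vertex at a time yields equivalent outcomes. The key point is that the permutation within a class $T_i$ preserves the configuration up to relabelling inside $T_i$. That relabelling commutes with all later steps, because every vertex of $T_i$ has the same neighbourhood $\chi_{G,S}$-value, and open vertices in a class remain mutually interchangeable at every later time-step. In particular, when we simulate a class-level choice we may pick any open vertices of that class, for instance in a fixed order.
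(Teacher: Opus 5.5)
Your proposal is correct and follows essentially the same route as the paper: you apply Lemma~\ref{lem:fpt-simulation} with $f(s)=2s+2$ from Lemma~\ref{lem:lifetime}, use the $\chi_{G,S}$-classes together with singletons of $S$ as the type partition (via Lemma~\ref{lem:type-equivalence}) to get $2^{\bigO{sb}}$ outcome-distinct choices per step, and use a linear-time simulation. Your additions fill in details the paper leaves implicit without changing the argument: computing $S$ by a bounded search tree, counting multisets of classes rather than sets of types, and noting that the twin relabelling must be carried through all later steps.
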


\begin{proof}
  Observe that the $\chi_{G,S}$-classes, taken together with $S$, forms a type partition of $G$, with size bounded by $2^s+ s$.  By Lemma \ref{lem:type-equivalence}, we need only consider strategies that specify, for each time-step, which types to inoculate rather than specific vertices. At each time-step, we inoculate at most $b$ vertices. For each inoculated vertex, we choose amongst the at most $2^s + s$ types.  The number of outcome-distinct ways to choose at most $b$ vertices per time-step is at most $2^{\mathcal{O}(sb)}$: for $b>1$ we have this from $\sum_{i=0}^b \binom{s+2^s}{i} \le (s + 2^s)^b = 2^{\mathcal{O}(sb)}$, for $b=1$ we have this from $\sum_{i=0}^b \binom{s+2^s}{i} = 1+s+2^s = 2^{\mathcal{O}(sb)}$.  

   By Lemma \ref{lem:lifetime}, we have that $f(s) = 2s+2$. There are $g(s) =2^{\mathcal{O}(sb)}$ strategy choices per time-step and a simulation time of $h(n,s) = O(n+m)$. By Lemma \ref{lem:fpt-simulation}, we obtain an algorithm running in $( 2^{\mathcal{O}(sb)} )^{\,2s+2} \cdot O(n+m) = 2^{O(s^2b)} \cdot O(n+m)$ time, so \IV is FPT when parameterised by vertex cover number and budget.
\end{proof}

\subsection{Neighbourhood diversity}
\label{sec:neighbourhood-diversity}

Lampis \cite{lampis_2012} introduced a generalisation of vertex cover called {\it neighbourhood diversity}.  Recall that two distinct vertices $u,v$ are of the same {\it neighbourhood type} if $N(u) \backslash \{v\} = N(v) \backslash \{u\}$.  The {\it neighbourhood diversity} of a graph $G$ is the minimum number $w$ of sets in a partition $T_1,T_2,\dots, T_w$ of $V(G)$ such that every vertex in $T_i$ has the same neighbourhood type for $i \in \{1,2,\dots,w\}$.  A {\it type partition} or a {\it neighbourhood partition} of $G$ is a partition $\mathcal{T} = \{T_1,T_2,\dots,T_w\}$.  By definition, for any type $T_i$, the induced subgraph $G[T_i]$ is either a clique or an independent set, and for any two distinct types $T_i,T_j$, either the vertices in $T_i$ are adjacent to every vertex in $T_j$ or none.     

\begin{lemma}\label{lem:neigh-div-lifetime-bound}
    The lifetime of an instance of \IV on a graph with neighbourhood diversity $w$ is at most $w$.
\end{lemma}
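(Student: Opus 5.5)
The plan is to analyse the spread of the contagion one type at a time, using the type partition $\mathcal{T}=\{T_1,\dots,T_w\}$. Recall that each $G[T_i]$ is a clique or an independent set, and any two types are either completely joined or have no edges between them. For each type $T_i$ that ever contains a contagious vertex, let $\tau_i$ be the first time-step in which some vertex of $T_i$ is contagious. I will show that distinct non-root types have distinct $\tau_i$, and that every time-step in which the contagion spreads is some $\tau_i$, with a single possible exception. Counting then bounds the last infection time by $w$.

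\textbf{Step 1: infection is contiguous in time.} If no vertex becomes contagious during time-step $t$, then no vertex becomes contagious afterwards. The set of contagious vertices does not change at $t$, while the set of open vertices can only shrink. So the contagion has already stopped. Hence the infection times form an interval $\{0,1,\dots,T\}$, where $T$ is the last time-step in which the contagion spreads.

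\textbf{Step 2: a non-root type is infected all at once.} Let $T_i$ be a type not containing the root $r$, and let $u\in T_i$ first become contagious at $\tau_i$ via a neighbour $y$ that was contagious by $\tau_i-1$. Since $r\notin T_i$, we have $\tau_i\ge 1$.
\begin{itemize}
\item If $y\notin T_i$, then $y$ is adjacent to every vertex of $T_i$.
\item If $y\in T_i$, then $y$ would be contagious before $\tau_i$, contradicting the choice of $\tau_i$. So this case cannot occur.
\end{itemize}
Thus every vertex of $T_i$ that is still open when the contagion spreads in step $\tau_i$ becomes contagious at $\tau_i$. Every other vertex of $T_i$ is already vaccinated and stays so forever. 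So the only infection time contributed by $T_i$ is $\tau_i$.

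\textbf{Step 3: the root's type adds at most one extra time.} Let $T_1\ni r$.
\begin{itemize}
\item If $T_1$ is a clique, every other vertex of $T_1$ is adjacent to $r$. Each is therefore infected at time $1$ or saved.
\item If $T_1$ is independent, another vertex $v\in T_1$ has $N(v)=N(r)$. Then $v$ can only be infected at time $2$, via a vertex of $N(r)$ infected at time $1$. That vertex lies in some other type $T_j$ with $\tau_j=1$.
\end{itemize}
So the infection times coming from $T_1$ are $0$ and possibly one extra time $t^*\in\{1,2\}$.

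\textbf{Counting.} Every time $t\in\{1,\dots,T\}$ is either $\tau_j$ for some non-root type $T_j$, or equal to $t^*$. By Step 2, distinct times of the first kind correspond to distinct non-root types, of which there are at most $w-1$. In the independent case, $t^*=2$ can occur only together with $\tau_j=1$ for some $j$, by Step 3. If $t^*=1$, that time is already counted by some $\tau_j$ anyway. I expect the bookkeeping of this extra time from the root's own type to be the main obstacle.

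With this bookkeeping, $T\le (w-1)+1=w$, since the extra time adds at most one value beyond those of the non-root types. Containment therefore occurs by time-step $w$. The point needing care is aligning this with the paper's convention for lifetime. If one extra certifying step is counted, as in Lemma~\ref{lem:lifetime}, I would tighten the count by noting that the last infecting step $T$ together with Step 1 already certifies containment.
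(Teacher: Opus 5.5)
Your main count is correct, and it takes a different route from the paper. The paper builds the type graph and argues that in every time-step in which the process continues, some type acquires its first infected vertex, so the number of infected types grows by one per step until it reaches $w$. You instead record each type's first infection time. You show that a non-root type contributes only the single infection time $\tau_i$ and the root's type at most one further time $t^*$, and combine this with contiguity (Step 1) to get $\{1,\dots,T\}\subseteq\{\tau_j\}\cup\{t^*\}$, hence $T\le w$.

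Your bookkeeping is the more careful of the two. The paper's per-step increase fails in exactly the root-type situation you isolate: on the path $r-u-v$ (types $\{r,v\}$ and $\{u\}$, so $w=2$) with no inoculations, $v$ is infected at time $2$ although no new type becomes infected. Since $T=w$ there, your bound is tight.

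One aside should be deleted: a time $t^*=1$ need not coincide with any $\tau_j$. In a triangle $r,a,u$ with a pendant vertex at $u$, the types are $\{r,a\}$, $\{u\}$ and the pendant; inoculating $u$ at time $1$ makes $a$ the only vertex infected at time $1$. Your final count never uses this remark, so nothing breaks.

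The genuine gap is your last paragraph. A certificate must satisfy ``after time $|\sigma|$, no open vertex is adjacent to a contagious vertex'', and this can fail at the last infecting step $T$. Open neighbours of vertices infected at time $T$ may only become vaccinated during time-step $T+1$. This is exactly what happens in Figure~\ref{fig:contain-grid1}, where the last infection is at time $4$ and containment occurs at time $5$. Step 1 only says that nothing is infected after $T$, so it does not certify containment at $T$. As written, you get a certificate length $L\in\{T,T+1\}$, i.e.\ only $L\le w+1$.

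The repair uses your own count:
\begin{itemize}
\item If $T\le w-1$, then $L\le T+1\le w$.
\item If $T=w$, the count is tight. This forces every non-root type to contain a contagious vertex and the root's type to have its extra time $t^*$. By Step 2, a non-root type has no open vertex after $\tau_j$. The root's type has none after $t^*$: in the clique case all its vertices are adjacent to $r$, and in the independent case they all have neighbourhood $N(r)$, which contains a vertex infected at time $1$. So no open vertex survives time $T$, and $L=T=w$.
\end{itemize}
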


\begin{proof}
  Let $\mathcal{T} = \{T_1, T_2, \ldots, T_w\}$ be a type partition of $G$. Define the \emph{type graph} $H$ as the graph with vertex set $\mathcal{T}$, where two vertices $T_i$ and $T_j$ are adjacent in $H$ if and only if in $G$ every vertex in $T_i$ is adjacent to every vertex in $T_j$. 

  Initially, only the type containing the root has an infected vertex. At each subsequent time-step, if the process has not terminated, there exists an open vertex adjacent to an infected vertex; thus there is a type $T_j$ that is adjacent in $H$ to a type $T_i$ containing an infected vertex. By definition of $H$, every open vertex of $T_j$ has an infected neighbour in $T_i$, so when the contagion spreads, all open vertices of $T_j$ become infected. If a type $T$ contains an infected vertex and $G[T]$ is a clique, all open vertices of $T$ become infected in the same substep. If instead $G[T]$ is an independent set, there is no internal spread. Hence, in each time-step the number of types that contain at least one infected vertex increases by at least one. Since there are only $w$ types in total, the process can last at most $w$ time-steps.
\end{proof}

\begin{theorem}\label{thm:iv-nd-fpt}
  Given an instance $(G, r, b, k)$ of \IV on a graph $G$ with $n$ vertices, $m$ edges, and neighbourhood diversity $w$, this instance can be solved in $2^{O(w \log(b+w))} \cdot O(n+m)$ time. Thus, \IV is FPT when parameterised by neighbourhood diversity and budget.
\end{theorem}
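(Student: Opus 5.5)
We plan to apply Lemma~\ref{lem:fpt-simulation}, in the same way as for vertex cover, with the type partition $\mathcal{T}=\{T_1,\dots,T_w\}$ playing the role of the equivalence classes. First we compute a minimum type partition. This can be done in linear time by the known partition-refinement approach of Lampis~\cite{lampis_2012} (or in $O(n+m)$ via modular-decomposition style refinement). For the lifetime we use Lemma~\ref{lem:neigh-div-lifetime-bound}, giving $f(w)=w$ (plus possibly one final step to certify containment, which does not affect the asymptotics).

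For the per-step branching factor we use Lemma~\ref{lem:type-equivalence}. Inoculating different open vertices of the same type yields equivalent outcomes. Hence at each time-step a choice is determined, up to equivalence, by how many open vertices are inoculated from each type. This is a vector $(c_1,\dots,c_w)$ of nonnegative integers with $\sum_i c_i\le b$. Applying Lemma~\ref{lem:type-equivalence} repeatedly justifies treating a multiset of inoculations within one step as a single equivalence class. To do so, we note that the permutation within $T_i$ mapping one choice to another is an automorphism of $G$ fixing every other type setwise, and the current state is invariant under it up to relabelling. Here we must take care that ``open vertices of $T_i$'' are themselves interchangeable, which holds because within a type all open vertices have identical neighbourhoods outside the type and the type is a clique or independent set. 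The number of such vectors is $\binom{b+w}{w}\le (b+w)^w=2^{O(w\log(b+w))}$, so $g=2^{O(w\log(b+w))}$.

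Simulation of a fixed strategy is a multi-source BFS-like process: each step spreads the vaccine, then inoculates, then spreads the contagion, and each edge is scanned $O(1)$ times overall if we maintain frontiers. This gives $h(n,w)=O(n+m)$. We also count saved vertices and check the containment condition. Combining these via Lemma~\ref{lem:fpt-simulation} gives $\big(2^{O(w\log(b+w))}\big)^{w}\cdot O(n+m)=2^{O(w^2\log(b+w))}\cdot O(n+m)$.

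The main obstacle is that this naive product has $w^2$ in the exponent, whereas the statement claims $2^{O(w\log(b+w))}$. To close the gap, we intend to use the structure of the proof of Lemma~\ref{lem:neigh-div-lifetime-bound}. Once a type contains an infected vertex, or is adjacent in $H$ to one, its open vertices are infected at the next spread. So inoculations in step $t$ are only useful in types adjacent to the current infected types. Moreover, across the whole process each type is ``active'' in essentially one step, because it is fully resolved (all vertices infected or vaccinated) within one or two steps of first becoming threatened. Therefore a strategy can be encoded by a single vector of counts per type, together with the at most $w$ time-steps at which they are used, giving at most $(b+1)^w\cdot w^w=2^{O(w\log(b+w))}$ candidates overall rather than per step. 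We would first try to prove this ``each type is inoculated in at most one useful step'' claim. If it fails, we would fall back on a DP over sets of resolved types and report the weaker $w^2$ bound.
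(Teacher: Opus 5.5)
Your baseline argument is the paper's proof. By Lemma~\ref{lem:type-equivalence}, each step is described by how many of its at most $b$ inoculations go into each of the $w$ types; your automorphism argument is a cleaner justification for inoculating several vertices of one type in the same step. Lemma~\ref{lem:neigh-div-lifetime-bound} bounds the number of steps by $w$, and Lemma~\ref{lem:fpt-simulation} simulates each candidate in $O(n+m)$ time. You are right that this gives $\binom{b+w}{w}^{w}$ candidates, which is not $2^{O(w\log(b+w))}$.

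The genuine gap is your repair. The claim that each type is inoculated in at most one useful step is false because the vaccine spreads. So is its justification, that inoculations at step $t$ only matter in types adjacent to infected types. Take $b=1$ and nonempty independent sets $A, A_1', A_2', X, B_1, B_2$ with $|X|=2$ and $|B_1|=|B_2|=3$. Let $r$ be complete to $A\cup A_1'\cup A_2'$, $A$ complete to $X$, each $A_j'$ complete to $B_j$, and $X$ complete to $B_1\cup B_2$, so $w=7$.
\begin{itemize}
\item Inoculating a vertex of $X$ at step~1 vaccinates all of $B_1\cup B_2$ at the start of step~2, before the contagion arrives from $A_1'$ and $A_2'$.
\item Inoculating the other vertex of $X$ at step~2 saves it as well, for $8$ saved vertices.
\item Any other first move (in $A$, in some $A_j'$, in some $B_j$, or none) saves at most $5$ vertices in total.
\item Omitting the second inoculation in $X$ saves only $7$.
\end{itemize}
So for $k=8$, every certificate inoculates type $X$ at two different steps, the first while $X$ is not adjacent to any infected type. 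Your ``one count vector per type plus one time-step'' encoding would therefore wrongly answer no.

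Hence your proposal does not establish the stated running time. The paper's proof does not close this gap either. It derives $\binom{b+w}{b}^w\le 2^{O(w\min\{b,w\}\log(b+w))}$ and then asserts membership in $2^{O(w\log(b+w))}$. That step fails whenever $\min\{b,w\}$ is unbounded: for $b=w$ we have $\binom{2w}{w}^w\ge 2^{w^2}$.

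What the type-level enumeration actually proves, in your version as in the paper's, is a running time of $2^{O(w\min\{b,w\}\log(b+w))}\cdot O(n+m)$. Using $\binom{b+w}{w}\le (b+w)^{\min\{b,w\}}$ sharpens your $w^2$ to $w\min\{b,w\}$. This suffices for the FPT claim and matches the stated bound when $b=O(1)$. Reaching exponent $O(w\log(b+w))$ for unbounded $b$ would need a genuinely new idea, so you should state the weaker bound rather than rely on the one-step-per-type claim.
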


\begin{proof}
  By Lemma \ref{lem:type-equivalence}, we need only consider strategies that specify which neighbourhood types to target for inoculation, not specific vertices. At each time-step, we can inoculate up to $b$ vertices drawn from the $w$ types. The number of ways to choose a multiset of size at most $b$ from $w$ types is $\sum_{i=0}^b \binom{w+i-1}{i} = \binom{b+w}{b}$. By Lemma \ref{lem:neigh-div-lifetime-bound}, there are at most $w$ time-steps, so the total number of type-level strategies is at most $\binom{b+w}{b}^w$. Using the standard inequality $\binom{n}{k} \leq (en/k)^k$ (see, for example, \cite{graham_1994}) and symmetry $\binom{b+w}{b}=\binom{b+w}{w}$, we get
  \[\binom{b+w}{b}^w \le \left(\frac{e(b+w)}{\min\{b,w\}}\right)^{w\,\min\{b,w\}} = 2^{O\big(w\,\min\{b,w\}\, \log(b+w)\big)}.\]

  Applying Lemma \ref{lem:fpt-simulation} with $f(w)=w$, $g(w)=\binom{b+w}{b}$ and $h(n,w)=O(n+m)$, we obtain a runtime of $\binom{b+w}{b}^w \cdot O(n+m) \in 2^{O(w \log(b+w))}\cdot O(n+m)$.
\end{proof}

\subsection{Graphs without long induced paths}
\label{sec:bounded-path-length}

A graph is \emph{$P_\ell$-free} if it contains no induced path on $\ell$ vertices.  Note that this class is usually called $P_k$-free, but $k$ is already a variable in \Fire.  Fomin et al.\ \cite{fomin_2016} show that \Fire on $P_\ell$-free graphs on $n$ vertices is solvable in $\mathcal{O}(n^\ell)$ time. Hunter and Enright \cite{hunter_2024} follow a similar approach for \Fire with costs to inoculate each vertex. Building on these two results, we show that \IV on $P_\ell$-free graphs is also tractable for fixed budgets $b$ and maximum induced path length $\ell-1$.

\begin{theorem} 
Given an instance $(G,r,b,k)$ of \IV on a $P_{\ell}$-free graph $G$ with $n$ vertices and $m$ edges, this instance can be solved in $\mathcal{O}((n+m)\,n^{b(\ell-1)})$ time.
  \label{thm:iv-pk-free}
\end{theorem}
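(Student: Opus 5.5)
We follow the approach of Fomin et al.\ for \Fire on $P_\ell$-free graphs. The plan is to bound the lifetime of any instance by $\ell-1$ time-steps and then apply Lemma~\ref{lem:fpt-simulation} with trivial equivalence classes (individual vertices).

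\textbf{Lifetime bound.} Suppose the contagion is still spreading at time-step $t$, so some vertex $v_t$ first becomes contagious at time-step $t$. Define $v_{t-1}$ to be a contagious neighbour of $v_t$ that became contagious at time-step $t-1$; such a neighbour exists because contagion only moves along edges from vertices contagious at the start of the step. Iterating backwards produces a path $r=v_0,v_1,\dots,v_t$ in which $v_i$ became contagious exactly at time-step $i$. Since the contagion spreads in every step to all open neighbours of contagious vertices, $v_i$ became contagious at time-step $\mathrm{dist}(r,v_i)$ among open vertices. Hence $v_i v_j\notin E$ for $|i-j|\ge 2$: otherwise $v_j$, with $j>i+1$, would have been infected by time-step $i+1<j$, since it was open then (vaccinated vertices never become contagious). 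So the path is induced and has $t+1$ vertices. $P_\ell$-freeness forces $t+1\le \ell-1$, so the contagion stops spreading after at most $\ell-2$ infection steps. Only the inoculations made during these steps matter, together with possibly one further step in which the vaccine spreads to certify containment. Thus we only need strategies of length at most $\ell-1$, with inoculations relevant in at most $\ell-1$ steps.

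\textbf{Enumeration and simulation.} At each relevant step we choose a set of at most $b$ vertices, giving $\sum_{i\le b}\binom{n}{i}=\bigO{n^b}$ choices per step. This yields $\bigO{n^{b(\ell-1)}}$ candidate strategies in total; here $g=\bigO{n^b}$ and $f=\ell-1$ in Lemma~\ref{lem:fpt-simulation}. Each candidate is simulated with a BFS-style layered process in which vaccine, inoculation and contagion frontiers advance in turn. Every edge is examined a constant number of times over the whole simulation, so one run takes $\bigO{n+m}$ time. At the end we count the saved vertices and compare with $k$. Any optimal strategy can be truncated to one of these candidates without loss, because inoculations after the contagion halts do not change which vertices are saved. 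The total time is therefore $\bigO{(n+m)\,n^{b(\ell-1)}}$.

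\textbf{Main obstacle.} The delicate step is the lifetime argument: showing that the backward-traced path is genuinely induced in the presence of vaccination. A chord $v_iv_j$ could only fail to accelerate infection if $v_j$ were not open at time-step $i+1$. But $v_j$ is eventually contagious, so it was never vaccinated, and it was not yet contagious by its definition; hence it was open. This is where the one-directional, permanent nature of the states is used. A secondary point is the exact off-by-one accounting, namely whether the final vaccine-spread step counts as an inoculation step. If needed we absorb it by noting that inoculating at the final step is useless, which matches the exponent $b(\ell-1)$.
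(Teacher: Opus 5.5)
Your proposal is correct and follows the same route as the paper: you bound the lifetime using $P_\ell$-freeness, enumerate the $\mathcal{O}(n^{b(\ell-1)})$ strategies with at most $\ell-1$ steps, and simulate each in $\mathcal{O}(n+m)$ time. The paper simply asserts the lifetime bound, whereas you justify it by tracing back an induced path of vertices infected at consecutive time-steps (which even gives at most $\ell-2$ infection steps), so your argument is more complete than the paper's.
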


\begin{proof}
Since $G$ is $P_\ell$-free, no open vertex can be infected after time-step $\ell-1$. Hence we need to only enumerate inoculation strategies of length up to $b(\ell-1)$, of which there are at most $\mathcal{O}(n^{b(\ell-1)})$. For each candidate, we simulate the process and validate the outcome by a search from $r$ in $\mathcal{O}(n+m)$ time to count saved vertices at containment. The runtime of this procedure is $\mathcal{O}((n+m)\,n^{b(\ell-1)})$.
\end{proof}

\begin{corollary}\label{cor:bounded-poly}
  Solving \IV on graphs of maximum path length $\ell$ is in XP parameterised by $\ell$ and the budget.
\end{corollary}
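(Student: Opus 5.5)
The plan is to read Corollary~\ref{cor:bounded-poly} directly off Theorem~\ref{thm:iv-pk-free}, after translating ``maximum path length $\ell$'' into the language of $P_{\ell'}$-freeness. We interpret ``maximum path length $\ell$'' as saying that every induced path of $G$ has at most $\ell$ vertices. (If length is counted in edges, then a path has at most $\ell+1$ vertices, and every bound below shifts by one. This does not affect membership in XP.) Under this reading, $G$ has no induced path on $\ell+1$ vertices, so $G$ is $P_{\ell+1}$-free.

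\textbf{Applying the theorem.} Theorem~\ref{thm:iv-pk-free}, applied with $\ell+1$ in place of $\ell$, solves any instance $(G,r,b,k)$ in time
\[
\mathcal{O}\bigl((n+m)\,n^{b\ell}\bigr)\subseteq \mathcal{O}\bigl(n^{b\ell+2}\bigr),
\]
using $m\le n^2$.

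\textbf{Matching the XP definition.} The exponent $f(\ell,b)=b\ell+2$ is a computable function of the combined parameter $(\ell,b)$ alone. Hence the running time has the form $\mathcal{O}(n^{f(\ell,b)})$, which is exactly the definition of slicewise polynomial given at the start of Section~\ref{sec:parameterize}. Therefore \IV restricted to such graphs is in XP when parameterised by $\ell$ and $b$.

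\textbf{Points to check.} The only real point needing care is the definition translation: the corollary speaks of \emph{maximum path length}, whereas the theorem is phrased for \emph{induced} paths. This direction is harmless. A bound on the length of all paths implies the same bound on induced paths, so any graph satisfying the corollary's hypothesis also satisfies the theorem's $P_{\ell+1}$-freeness hypothesis. One should also note that the algorithm needs no knowledge of $\ell$ beyond the enumeration bound, so for each fixed pair $(\ell,b)$ it runs in polynomial time, as XP requires. No further obstacle is expected.
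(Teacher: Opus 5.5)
Your proposal is correct and follows the paper's route: the paper gives no separate proof and simply reads the corollary off Theorem~\ref{thm:iv-pk-free}, exactly as you do. Your translation of ``maximum path length $\ell$'' into $P_{\ell+1}$-freeness (or $P_{\ell+2}$-freeness if length counts edges, with a bound on all paths implying one on induced paths), yielding running time $\mathcal{O}(n^{b\ell+2})$, is precisely the bookkeeping the paper leaves implicit.
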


Since \emph{cographs} and \emph{split graphs} are $P_4$-free and $P_5$-free respectively, Theorem \ref{thm:iv-pk-free} implies the existence of algorithms for solving \IV with budget $b$ on cographs and split graphs on $n$ vertices, $m$ edges, that run in $\mathcal{O}((n+m)\,n^{3b})$ and $\mathcal{O}((n+m)\,n^{4b})$ time respectively.

\section{Containment}
\label{sec:contain}

A central problem in graph searching is how to achieve a given objective, such as capturing an intruder or containing a fire, while minimising the resources used at each time-step (see the book~\cite{Bonato_book}, for example).   In this section, we fix the budget to be $b=1$ and study containment on two vertex-transitive families of infinite graphs: the $n$-dimensional Cartesian grid and the $n$-dimensional strong grid.  Since the graphs under consideration are infinite, we do not aim to bound the number of saved vertices, but instead aim to bound the number of time-steps until containment, which bounds the number of contagious vertices.  Since we do not have a target $k$ of vertices to save, we let $(G,r,b)$ be the triple consisting of a graph $G$ with root $r$ and positive integer $b$ and consider \IV on input $(G,r,b)$.

\begin{definition}\label{defn:vaccination_strategy} 
A {\it containment strategy} for an instance of \IV on input $(G,r,b)$ is a strategy $\sigma=(\sigma_1,\sigma_2,\dots,\sigma_m)$ for some finite integer $m$, with the property that at least one vertex is open at the start of time-step $m$, but no open vertex becomes infected during time-step $m$.
\end{definition}

Figure~\ref{fig:contain-grid1} illustrates a containment strategy for the infinite $2$-dimensional Cartesian grid: one vertex is inoculated during each of the first four time-steps, the containment strategy has length $5$, and containment does not occur until time-step $5$.  

For an instance of \IV on input $(G,r,b)$, suppose there is exactly one vertex, $v$, inoculated at time-step $t$, but no vertices are inoculated before or after $t$. The vaccinations stemming from the inoculation of $v$ will reach every vertex in the set $\{u \in V(G) : t+ d(v,u) \leq d(r,u)\}$ before the contagion.  More generally, let $\sigma$ be a strategy for \IV on input $(G,r,b)$.  If $v$ is inoculated during time-step $t$ following strategy $\sigma$, then the vertices in $\{u \in V(G) : t+ d(v,u) \leq d(r,u)\}$ are vaccinated as a direct consequence of the inoculation of $v$ during time-step $t$.  We formally name this set.  

\begin{definition}\label{defn:transmission} Let $\sigma$ be a 
strategy on input $(G,r,b)$ and $v$ be a vertex inoculated at time-step $t$. The {\it transmission set} of $v$ is denoted by $\mathcal{T}(v)$ and defined to be $$\mathcal{T}(v) = \{u \in V(G) : t+ d(v,u) \leq d(r,u)\}.$$
\end{definition} 

It is important to note that a vertex may be in the transmission set of more than one inoculated vertex. Transmission sets of inoculated vertices will be useful in capturing which vertices in a graph become vaccinated versus infected. 

\begin{figure}[htb]
  \begin{center}
  \begin{tikzpicture}[line cap=round, line join=round,scale=0.7]

  \foreach \r in {1,...,13}{
    \foreach \c in {1,...,13}{
      \coordinate (g\r-\c) at ({(\c-1)*0.95},{-(\r-1)*0.95});
    }
  }

  \foreach \r in {1,...,13}{
    \foreach \c [count=\cc from 2] in {1,...,12}{
      \draw[gray!85, line width=0.75pt] (g\r-\c) -- (g\r-\cc);
    }
  }
  \foreach \c in {1,...,13}{
    \foreach \r [count=\rr from 2] in {1,...,12}{
      \draw[gray!85, line width=0.75pt] (g\r-\c) -- (g\rr-\c);
    }
  }
  \foreach \r [count=\rr from 2] in {1,...,12}{
    \foreach \c [count=\cc from 2] in {1,...,12}{
      \draw[gray!85, line width=0.75pt] (g\r-\c) -- (g\rr-\cc);
      \draw[gray!85, line width=0.75pt] (g\r-\cc) -- (g\rr-\c);
    }
  }

  \draw[line width=5pt, gray!55] (g7-7) -- (g1-13);
  \draw[line width=5pt, gray!55] (g7-7) -- (g13-1);
  \draw[line width=5pt, gray!55] (g7-7) -- (g1-1);
  \draw[line width=5pt, gray!55] (g7-7) -- (g13-13);

  \def\stub{0.65}
  \def\gap{0.5}

  \foreach \c in {1,...,13}{
    \draw[gray!85, line width=0.75pt] (g1-\c) -- ++(-\stub,\stub);
    \draw[gray!85, line width=0.75pt] (g1-\c) -- ++(0,\stub);
    \draw[gray!85, line width=0.75pt] (g1-\c) -- ++(\stub,\stub);

    \draw[gray!85, line width=0.75pt] (g13-\c) -- ++(-\stub,-\stub);
    \draw[gray!85, line width=0.75pt] (g13-\c) -- ++(0,-\stub);
    \draw[gray!85, line width=0.75pt] (g13-\c) -- ++(\stub,-\stub);

    \draw[gray!85, line width=0.75pt] (g\c-1) -- ++(-\stub,\stub);
    \draw[gray!85, line width=0.75pt] (g\c-1) -- ++(-\stub,0);
    \draw[gray!85, line width=0.75pt] (g\c-1) -- ++(-\stub,-\stub);

    \draw[gray!85, line width=0.75pt] (g\c-13) -- ++(\stub,\stub);
    \draw[gray!85, line width=0.75pt] (g\c-13) -- ++(\stub,0);
    \draw[gray!85, line width=0.75pt] (g\c-13) -- ++(\stub,-\stub);
  }

  \foreach \r in {1,...,13}{
    \foreach \c in {1,...,13}{
      \coordinate (h\r-\c) at ({(\c-1+0.5)*0.95},{-(\r-1)*0.95});
    }
  }
    \foreach \r in {1,...,13}{
    \foreach \c in {1,...,13}{
      \coordinate (h\r-\c) at ({(\c-1)*0.95},{-(\r-1+0.5)*0.95});
    }
  }

    \gridchevron{(0.5*0.95,1*0.95)}{90}{cYellow}
    \gridchevronsplit{(3*0.95,1*0.95)}{90}{cRed}{cYellow}
    \gridchevron{(8.5*0.95,1*0.95)}{90}{cRed}
  
    \gridchevron{(2.5*0.95,-13*0.95)}{270}{cBlue}
    \gridchevronsplit{(7*0.95,-13*0.95)}{270}{cBlue}{cGreen}
    \gridchevron{(10.5*0.95,-13*0.95)}{270}{cGreen}

    \gridchevron{(-1*0.95,-1.5*0.95)}{180}{cYellow}
    \gridchevronsplit{(-1*0.95,-4*0.95)}{180}{cYellow}{cBlue}
    \gridchevron{((-1*0.95,-8.5*0.95))}{180}{cBlue}

    \gridchevron{(13*0.95,-2.5*0.95)}{0}{cRed}
    \gridchevronsplit{(13*0.95,-8*0.95)}{0}{cRed}{cGreen}
  
    \gridchevron{(13*0.95,-11.5*0.95)}{0}{cGreen}

  \draw[line width=5pt, gray!55] (g1-1) -- ++(-0.65,0.65);
  \gridchevron{($(g1-1)+(-0.65,0.65)$)}{135}{cYellow}
  \draw[line width=5pt, gray!55] (g1-13) -- ++(0.65,0.65);
  \gridchevron{($(g1-13)+(0.65,0.65)$)}{45}{cRed}
  \draw[line width=5pt, gray!55] (g13-1) -- ++(-0.65,-0.65);
  \gridchevron{($(g13-1)+(-0.65,-0.65)$)}{225}{cBlue}
  \draw[line width=5pt, gray!55] (g13-13) -- ++(0.65,-0.65);
  \gridchevron{($(g13-13)+(0.65,-0.65)$)}{315}{cGreen}

  \foreach \r/\c/\type/\val in {
    1/1/yellow/6, 1/2/yellow/6, 1/3/RY/6, 1/4/RY/6, 1/5/RY/6, 1/6/red/6, 1/7/red/6, 1/8/red/6, 1/9/red/6, 1/10/red/6, 1/11/red/6, 1/12/red/6, 1/13/red/6,
    2/1/yellow/6, 2/2/yellow/5, 2/3/yellow/5, 2/4/RY/5, 2/5/red/5, 2/6/red/5, 2/7/red/5, 2/8/red/5, 2/9/red/5, 2/10/red/5, 2/11/red/5, 2/12/red/5, 2/13/red/6,
    3/1/yellow/6, 3/2/yellow/5, 3/3/Vfour/4, 3/4/S/4, 3/5/red/4, 3/6/red/4, 3/7/red/4, 3/8/red/4, 3/9/red/4, 3/10/red/4, 3/11/red/4, 3/12/red/5, 3/13/red/6,
    4/1/yellow/6, 4/2/yellow/5, 4/3/S/4, 4/4/S/3, 4/5/S/3, 4/6/red/3, 4/7/red/3, 4/8/red/3, 4/9/red/3, 4/10/red/3, 4/11/red/4, 4/12/red/5, 4/13/red/6,
    5/1/YB/6, 5/2/S/5, 5/3/S/4, 5/4/S/3, 5/5/S/2, 5/6/S/2, 5/7/red/2, 5/8/red/2, 5/9/red/2, 5/10/red/3, 5/11/red/4, 5/12/red/5, 5/13/red/6,
    6/1/blue/6, 6/2/blue/5, 6/3/S/4, 6/4/S/3, 6/5/S/2, 6/6/S/1, 6/7/S/1, 6/8/Vone/1, 6/9/red/2, 6/10/red/3, 6/11/red/4, 6/12/red/5, 6/13/red/6,
    7/1/blue/6, 7/2/blue/5, 7/3/blue/4, 7/4/S/3, 7/5/S/2, 7/6/S/1, 7/7/X/0, 7/8/S/1, 7/9/red/2, 7/10/red/3, 7/11/red/4, 7/12/red/5, 7/13/RG/6,
    8/1/blue/6, 8/2/blue/5, 8/3/blue/4, 8/4/blue/3, 8/5/S/2, 8/6/S/1, 8/7/S/1, 8/8/S/1, 8/9/S/2, 8/10/red/3, 8/11/red/4, 8/12/RG/5, 8/13/RG/6,
    9/1/blue/6, 9/2/blue/5, 9/3/blue/4, 9/4/blue/3, 9/5/Vtwo/2, 9/6/S/2, 9/7/S/2, 9/8/S/2, 9/9/S/2, 9/10/S/3, 9/11/RG/4, 9/12/RG/5, 9/13/RG/6,
    10/1/blue/6, 10/2/blue/5, 10/3/blue/4, 10/4/blue/3, 10/5/blue/3, 10/6/blue/3, 10/7/S/3, 10/8/S/3, 10/9/S/3, 10/10/Vthree/3, 10/11/green/4, 10/12/RG/5, 10/13/RG/6,
    11/1/blue/6, 11/2/blue/5, 11/3/blue/4, 11/4/blue/4, 11/5/blue/4, 11/6/blue/4, 11/7/blue/4, 11/8/S/4, 11/9/green/4, 11/10/green/4, 11/11/green/4, 11/12/green/5, 11/13/RG/6,
    12/1/blue/6, 12/2/blue/5, 12/3/blue/5, 12/4/blue/5, 12/5/blue/5, 12/6/blue/5, 12/7/blue/5, 12/8/BG/5, 12/9/green/5, 12/10/green/5, 12/11/green/5, 12/12/green/5, 12/13/green/6,
    13/1/blue/6, 13/2/blue/6, 13/3/blue/6, 13/4/blue/6, 13/5/blue/6, 13/6/blue/6, 13/7/BG/6, 13/8/BG/6, 13/9/BG/6, 13/10/green/6, 13/11/green/6, 13/12/green/6, 13/13/green/6
  }{
    \edef\pos{g\r-\c}

    \pgfkeys{/cell/.cd,
    yellow/.code={\gridcircle{\pos}{fillYellow}{\val}},
    red/.code={\gridcircle{\pos}{fillRed}{\val}},
    blue/.code={\gridcircle{\pos}{fillBlue}{\val}},
    green/.code={\gridcircle{\pos}{fillGreen}{\val}},
    RY/.code={\gridsplitcircle{\pos}{fillRed}{fillYellow}{\val}},
    YB/.code={\gridsplitcircle{\pos}{fillYellow}{fillBlue}{\val}},
    BG/.code={\gridsplitcircle{\pos}{fillBlue}{fillGreen}{\val}},
    RG/.code={\gridsplitcircle{\pos}{fillRed}{fillGreen}{\val}},
    S/.code={\gridsquare{\pos}{\val}},
    X/.code={\gridroot{\pos}},
    Vone/.code={\cellVone{\pos}{\val}},
    Vtwo/.code={\cellVtwo{\pos}{\val}},
    Vthree/.code={\cellVthree{\pos}{\val}},
    Vfour/.code={\cellVfour{\pos}{\val}}
    }
    \pgfkeys{/cell/\type}
  }

\end{tikzpicture}
\end{center}
\caption{Using budget $b=1$, a strategy for \IV on the infinite $2$-dimensional strong grid in which containment occurs at time-step $6$.} 
\label{fig:StrongGrid}
\end{figure} 
  
Figure~\ref{fig:StrongGrid} provides a containment strategy for the infinite $2$-dimensional strong grid: one vertex is inoculated during each of the first four time-steps, the containment strategy has length $6$, and containment does not occur until time-step $6$. The transmission set of each inoculated vertex is indicated by a different colour.
Observe from Figure~\ref{fig:StrongGrid} that the contagion spreads to only one vertex during time-step $5$.  If we had chosen that vertex to be inoculated during time-step $5$, then containment would have occurred during time-step $5$ instead of time-step $6$.  However, such a reduction in the number of time-steps to containment comes at the expense of a fifth inoculated vertex.  This motivates the next definition. 

\begin{definition}  
Let $\sigma=(\sigma_1,\dots,\sigma_m)$ be a containment strategy on input $(G,r,b)$. 
\begin{enumerate}
\item The {\it inoculation cost} of $\sigma$ is $\sum_{i=1}^m |\sigma_i|$.
\item The {\it inoculation number} on input $(G,r,b)$, denoted $\inoc(G,r,b)$, is the minimum inoculation cost, taken over all containment strategies on $(G,r,b)$.
\item The {\it containment time} on input $(G,r,b)$ is the minimum length of $\sigma$, taken over all  containment strategies of minimum inoculation cost.
\end{enumerate}
\end{definition}

To illustrate the above definitions, consider the containment strategy described by  Figure~\ref{fig:StrongGrid}.  For $b=1$, the inoculation number is at most $4$ and the containment time is at most $6$. As a second illustration, in Section~\ref{sec:alg}, we showed that for an interval graph $G$ and arbitrary vertex $r$, $\inoc(G,r,1) \leq 2$ and the containment time is at most 3.

\medskip

Finally, throughout this section, we will often use the notation $[n]$ to refer to the set of the first $n$ positive integers $\{1,2,\dots,n\}$.
 
\subsection{The infinite $n$-dimensional Cartesian grid}\label{sec:Cart}

In \bFire, the fire cannot be contained on the infinite $2$-dimensional Cartesian grid with budget $b=1$~\cite{moeller_2002}; and for $n \geq 3$, the fire cannot be contained on the infinite $n$-dimensional Cartesian grid with budget $b=2n-2$~\cite{develin_2004}.  In stark contrast, we will show that for \IV with budget $b=1$, the contagion can be contained on the infinite $n$-dimensional Cartesian grid for all $n$; and we additionally bound the containment time.  Let $G_n$ denote the infinite $n$-dimensional Cartesian grid; that is, the Cartesian product of $n$ infinite two-way paths. For simplicity of notation, we denote vertices in $G_n$ with vector notation.  Throughout this section we assume, without loss of generality, that the root of $G_n$ is the origin $\vec{0} = (0,0,\dots,0)$. 
Observe that for any two vertices $\vec{u} = (u_1,u_2,\dots,u_n)$ and $\vec{w} = (w_1,w_2,\dots,w_n)$ in $V(G_n)$, the distance between $\vec{u}$ and $\vec{w}$ is $$d(\vec{u},\vec{w}) = \sum_{i=1}^n |u_i-w_i|.$$ For $i \in [n]$, let $J^+_i$ ($J^-_i$) denote the set of $n$-tuples in which the $i^{th}$ entry is a positive (negative) integer and all other entries are zero.  We define the {\it axes} of $G_n$ to be $J^+_1,J^+_2,\dots,J^+_n,J^-_1,J^-_2,\dots,J^-_n$.

\begin{theorem} \label{lem:T_i}
Let $\sigma$ be a containment strategy on input $(G_2,\vec{0},1)$.  The transmission set of each vertex of $\sigma$ contains  vertices from at most one axis of $G_2$.
\end{theorem}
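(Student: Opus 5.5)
The plan is to prove the statement directly from Definition~\ref{defn:transmission}, using only the distance formula in $G_2$ and the fact that every inoculation happens at a time-step $t\ge 1$. The containment property of $\sigma$ plays no role. The idea is that membership of a single axis vertex in $\mathcal{T}(\vec v)$ already forces a strict inequality between the coordinates of $\vec v$. The inequalities forced by different axes are mutually incompatible.

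Let $\vec v=(x,y)$ be inoculated at time-step $t\ge 1$. First suppose some $\vec u=(a,0)\in J_1^+$ with $a>0$ lies in $\mathcal{T}(\vec v)$. Then
\[
t+|a-x|+|y| \le a ,
\]
so $|y|\le a-|a-x|-t$. Since $a-|a-x|\le x$ for all real $a$ and $x$, this gives the key inequality $x\ge t+|y|$, and in particular $x\ge 1$. By the symmetries of $G_2$ fixing the origin (reflections and the coordinate swap), the same computation gives the remaining three conditions:
\begin{itemize}
\item a vertex of $J_1^-$ in $\mathcal{T}(\vec v)$ forces $-x\ge t+|y|$;
\item a vertex of $J_2^+$ forces $y\ge t+|x|$;
\item a vertex of $J_2^-$ forces $-y\ge t+|x|$.
\end{itemize}

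Next I check that no two of these conditions hold together when $t\ge1$. For $J_1^+$ and $J_1^-$, we would have $x\ge 1$ and $x\le -1$. The case of $J_2^\pm$ is analogous. For an axis from each coordinate, say $J_1^{\pm}$ and $J_2^{\pm}$, we would get
\[
|x|\ge t+|y| \quad\text{and}\quad |y|\ge t+|x|,
\]
hence $|x|\ge 2t+|x|$, which contradicts $t\ge 1$. Therefore $\mathcal{T}(\vec v)$ meets at most one of the four axes.

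There is no real obstacle here. The one point needing care is the inequality $a-|a-x|\le x$ (split into the cases $a\ge x$ and $a<x$). The other is the remark that $t\ge 1$ is essential: with $t=0$ the origin itself would work and the argument breaks. I would also note that the origin is excluded from every axis, so it need not be treated separately.
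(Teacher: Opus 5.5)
Your argument is correct, but it is organised differently from the paper's proof.

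The paper first uses the symmetries of $G_2$ to place the inoculated vertex, assuming $u_1>0$ and $u_1\ge u_2\ge 0$. It then rules out three axes by direct computation; for instance, $t+u_1+|u_2-w_2|>w_2$ for $(0,w_2)\in J_2^+$. So the only axis that could be reached is the one in the direction of the dominant coordinate.

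You instead fix an axis and turn membership of one of its vertices into a necessary condition on the inoculated vertex: the cone $x\ge t+|y|$ and its three images. You then show these four cones are pairwise disjoint once $t\ge 1$.

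The paper's route is a little shorter and names the single candidate axis outright. Yours gives a sharper description of which inoculated vertices can reach a given axis at all; for example, no vertex with $|x|=|y|$ reaches any axis. Because your argument is genuinely pairwise, it also carries over essentially unchanged to $G_n$: replace $|y|$ with $\sum_{j\ne i}|u_j|$, and any two of the resulting conditions again contradict each other when $t\ge 1$. That pairwise step is exactly what the paper's one-line justification of Corollary~\ref{cor:axes} (``applied to each pair of the $2n$ axes'') implicitly relies on.
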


\begin{proof} Let $\vec{u} = (u_1,u_2)$ be a vertex of $\sigma$, inoculated at some time-step $t \geq 1$. Without loss of generality, $u_1>0$ and $u_1 \geq u_2 \geq 0$.  Let $\vec{w} = (0,w_2) \in J_2^+$.  Then $t+d(\vec{u},\vec{w}) = t+u_1+|u_2-w_2| > w_2 = d(\vec{0},\vec{w})$.  Thus, $\vec{w} \not\in \mathcal{T}(\vec{u})$ and $\mathcal{T}(\vec{u})$ contains no vertices of $J^+_2$.  A similar argument shows $\mathcal{T}(\vec{u})$ contains no vertices of $J_1^-$ or $J_2^-$.  The corresponding argument can be applied to inoculated vertices in other quadrants of $G_2$.
\end{proof} 

Corollary~\ref{cor:axes} follows immediately as it can be applied to each pair of the $2n$ axes in $G_n$.  

\begin{corollary}\label{cor:axes} Let $\sigma$ be a containment strategy on input $(G_n,\vec{0},1)$ for $n \geq 2$.   The transmission set of each vertex of $\sigma$ contains vertices from at most one axis of $G_n$.  
\end{corollary}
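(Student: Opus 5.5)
Let $\vec{u}=(u_1,\dots,u_n)$ be inoculated at time-step $t\ge 1$. I will argue directly in $G_n$, using the same triangle-type estimate as in Theorem~\ref{lem:T_i}, rather than reducing formally to $G_2$.

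Suppose, for contradiction, that $\mathcal{T}(\vec{u})$ meets two distinct axes. Take $\vec{w}$ on one axis, $\vec{w}=s\,c\,\vec{e}_i$ with $c>0$ and $s\in\{\pm1\}$. Take $\vec{w}'$ on the other axis, $\vec{w}'=s'c'\vec{e}_j$ with $c'>0$, $s'\in\{\pm1\}$ and $(i,s)\neq(j,s')$.

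Membership $\vec{w}\in\mathcal{T}(\vec{u})$ means
\[
t+\sum_{k\neq i}|u_k|+|u_i-sc|\le c .
\]
Since $|u_i-sc|\ge c-su_i$, this gives
\[
t+\sum_{k\neq i}|u_k|-su_i\le 0 .
\]
Hence $su_i\ge t+\sum_{k\ne i}|u_k|\ge 1$. In particular $su_i>0$ and $|u_i|>\sum_{k\neq i}|u_k|$, so the $i$-th coordinate of $\vec{u}$ strictly dominates the sum of the others in absolute value. The same reasoning applied to $\vec{w}'$ gives $s'u_j\ge 1$ and $|u_j|>\sum_{k\ne j}|u_k|$.

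There are now two cases. If $i=j$, then $s\neq s'$, but $su_i>0$ and $s'u_i>0$ force $s=s'$, a contradiction. If $i\neq j$, the two strict dominations give $|u_i|>|u_j|$ and $|u_j|>|u_i|$, again a contradiction. So $\mathcal{T}(\vec{u})$ meets at most one axis.

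I do not expect a serious obstacle. The one point needing care is that the hypothesis $t\geq 1$ is what makes the domination inequalities strict. That strictness is what handles ties such as $|u_i|=|u_j|$, and it is also why the origin, which lies on no axis, needs no separate treatment. This direct argument also justifies the paper's remark that Theorem~\ref{lem:T_i} ``applies to each pair of axes'': the computation above uses only the two coordinates $i$ and $j$ together with the total of the remaining ones.
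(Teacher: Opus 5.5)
Your proof is correct, and it takes a somewhat different route from the paper. The paper proves only the planar case (Theorem~\ref{lem:T_i}), using a symmetry reduction to $u_1>0$, $u_1\ge u_2\ge 0$ and then checking the three remaining axes one by one. It then asserts that the $G_n$ statement follows by applying that result to each pair of axes.

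You instead work in $G_n$ directly. You extract the necessary condition $su_i\ge t+\sum_{k\ne i}|u_k|$ for $\mathcal{T}(\vec{u})$ to meet the axis in direction $s\vec{e}_i$. This pins down the only axis that can possibly be reached, namely the one in the sign direction of a strictly dominant coordinate of $\vec{u}$, with no case split or symmetry argument.

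What this buys you is that the lift from $G_2$ to $G_n$, which the paper leaves implicit, becomes unnecessary. To make the pairwise application rigorous, one must note that the other coordinates contribute $\sum_{k\ne i,j}|u_k|\ge 0$ to $d(\vec{u},\vec{w})$ for $\vec{w}$ on axes $i$ and $j$. Hence such axis vertices of $\mathcal{T}(\vec{u})$ project into the planar transmission set of $(u_i,u_j)$ at time $t$. One must also treat separately the case $(u_i,u_j)=(0,0)$, where the planar normalisation $u_1>0$ is unavailable.

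Your computation handles all of this uniformly, and it makes explicit that $t\ge 1$ is what breaks ties such as $|u_i|=|u_j|$. The paper's route is shorter only because it reuses the already-established planar argument.
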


Corollary \ref{cor:axes} provides the lower bound for the minimum cost of a containment strategy on $(G_n,\vec{0},1)$, which we use in the proof of the next result.

\begin{theorem}\label{thm:uppercart} For $n \geq 2$, $\inoc(G_n,\vec{0},1)=2n$ and the containment time on $(G_n,\vec{0},1)$ is at most $n^2+1$. 
\end{theorem}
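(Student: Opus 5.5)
The proof has two parts: a lower bound from Corollary~\ref{cor:axes} and an explicit construction for the upper bound.

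\textbf{Lower bound.} Suppose $\sigma$ is a containment strategy on $(G_n,\vec{0},1)$. Each axis $J_i^\pm$ is an infinite ray from the origin. Along it, a vertex at distance $d$ from $\vec{0}$ would be infected at time $d$ unless it is vaccinated first. Containment leaves only finitely many infected vertices, so all but finitely many vertices of each axis must be vaccinated.

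I will argue that each axis therefore meets the transmission set of some inoculated vertex. A vaccinated vertex $\vec{w}$ gets its vaccine along a path of vaccinated vertices back to some inoculated vertex $\vec{v}$, inoculated at time $t$. Since vaccination spreads at the same speed as the contagion but starts later, that path forces $t+d(\vec{v},\vec{w})\le d(\vec{0},\vec{w})$; this is essentially the definition of $\mathcal{T}(\vec{v})$. Hence every vaccinated vertex lies in the transmission set of some inoculated vertex.

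By Corollary~\ref{cor:axes}, each transmission set meets at most one of the $2n$ axes. So at least $2n$ inoculations are needed.

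\textbf{Upper bound.} I will generalise Figure~\ref{fig:contain-grid1}: at time-step $j$, inoculate one vertex on a distinct axis, at distance $j$ from the origin. Concretely, I would order the axes as $J_1^+,J_1^-,J_2^+,\dots$ and inoculate at time-steps $1,\dots,2n$ the points $\pm j\,\vec{e}_i$. Distance $j$ is the smallest that works, because an inoculated vertex must still be open at time $j$.

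This strategy, however, does not obviously give containment time $n^2+1$. I therefore expect a better schedule is needed. One option is to place the $k$-th inoculation at distance about $k$ from the origin but let the vaccine from earlier inoculations spread outward. The target is to show that every vertex $\vec{w}$ with $\|\vec{w}\|_1$ larger than roughly $n^2$ lies in some $\mathcal{T}(\vec{v})$.

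The key computation is as follows. Take the vertex $\vec{v}=c\,\vec{e}_i$ inoculated at time $t\le c$. For $\vec{w}$ with largest coordinate $w_i\ge0$, we have $d(\vec{v},\vec{w})=|w_i-c|+\sum_{j\ne i}|w_j|$. If $w_i\ge c$, the condition $t+d(\vec{v},\vec{w})\le \|\vec{w}\|_1$ reduces to $t\le c$, which always holds. So the cone $\{\vec{w}: w_i\ge c\}$ is vaccinated in time.

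The infected region is then contained in $\{\vec{w}: |w_i|< c_i^\pm \text{ for all } i,\pm\}$, a box. Its vertices have $\|\vec{w}\|_1\le\sum_i (c_i^+ + c_i^- -2)/2$ roughly, or at most $\sum_i \max(c_i^\pm)$ in the worst case. With $c$ equal to the time-step, this gives at most $2+4+\dots$, which is about $n^2$.

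The main obstacle is the precise accounting. I must bound the maximal $\ell_1$-norm of a vertex in the box $\prod_i(-c_i^-,c_i^+)$, which is $\sum_i (\max(c_i^+,c_i^-)-1)$. Pairing $c_i^\pm\in\{2i-1,2i\}$ gives $\sum_i(2i-1)=n^2$. Since contagion reaches distance $d$ at time $d$, nothing new is infected at time $n^2+1$, which yields containment time at most $n^2+1$.

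I would also verify that each inoculated vertex is still open at its inoculation time. This holds because it lies at distance equal to its time-step, so the contagion has not yet reached it.
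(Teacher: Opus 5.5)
Your proposal is correct and is essentially the paper's proof: the lower bound comes from Corollary~\ref{cor:axes} (you also supply the justification, left implicit in the paper, that every axis must meet some transmission set), and the upper bound uses exactly the paper's schedule, namely $(2i-1)\vec{e}_i$ at time $2i-1$ and $-2i\,\vec{e}_i$ at time $2i$. Your half-space computation $t+d(c\vec{e}_i,\vec{w})\le\|\vec{w}\|_1$ for $w_i\ge c\ge t$ plays the role of the paper's minimal-counterexample argument and yields the same box $-2j<w_j<2j-1$ and bound $\sum_j(2j-1)=n^2$; your intermediate worry that ``a better schedule is needed'' is unfounded, since the schedule in your final accounting is precisely the one you first wrote down.
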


\begin{proof} 
  Let $\sigma=(\sigma_1,\dots,\sigma_m)$ be a minimum cost containment strategy on input $(G_n,\vec{0},1)$.  
  Since $G_n$ has $2n$ axes and each transmission set contains vertices from at most one axis, the lower bound follows from Corollary~\ref{cor:axes}.  
  To prove $\inoc(G_n,\vec{0},1) \leq 2n$, we next provide a sequence $\sigma$ of cost $2n$ and prove it is a containment strategy. 
  For each $i \in [n]$, let $\vec{x}_{2i-1} \in J^+_i$ with $i^{th}$ coordinate $x_i = 2i-1$ and $\vec{y}_{2i} \in J^-_i$ with $i^{th}$ coordinate $y_i = -2i$.  Set $\sigma_{2i-1}  = \{ \vec{x}_{2i-1}\}$ and $\sigma_{2i} = \{\vec{y}_{2i}\}$ for $i \in \{1,2,\dots,n\}$; and set $\sigma_i = \emptyset$ for $i \in \{n+1, \dots, m\}$. 
  Observe that $d(\vec{x}_{2i-1},\vec{0})=2i-1$, but since $\vec{x}_{2i-1}$ is inoculated before the contagion spreads during time-step $2i-1$, $\vec{x}_{2i-1}$ is open at the beginning of time-step $2i-1$ and will be inoculated.  
  Similarly, $\vec{y}_{2i}$ is open at the beginning of time-step $2i$. 
  Thus, to prove $\sigma$ is a containment strategy, it suffices to prove there is some finite time-step $m$ during which no open vertex becomes infected.
  For a contradiction, assume that at every finite time-step, the contagion spreads to an open vertex. Then, for some minimum time-step $t$, there is a $j \in [n]$ for which
  \begin{enumerate} 
 \item[(1)] a vertex $\vec{w} = (w_1,\dots,w_n)$ with $w_j= 2j-1$ becomes infected at some time-step $t \geq 2j-1$; or 
 \item[(2)] a vertex $\vec{w} = (w_1,\dots,w_n)$ with $w_j= -2j$  becomes infected at some time-step $t \geq 2j$.
 \end{enumerate}
 
First, suppose (1) occurs. Observe that $$t \geq d(\vec{0},\vec{w}) = \sum_{i=1}^n |w_i| = \big(\sum_{i=1}^n |w_i|\big) - (2j-1) + (2j-1) = d(\vec{x}_{2j-1},\vec{w})+2j-1.$$  Therefore, the earliest time-step that $\vec{w}$ can become infected is time-step $d(\vec{x}_{2j-1},\vec{w})+2j-1$.  By definition, open vertices that are adjacent to vaccinated vertices, become vaccinated at the {\it beginning} of a time-step, whereas open vertices that are adjacent to infected vertices, become infected {\it later} in a time-step.  Thus, $\vec{w}$ will become vaccinated during time-step $d(\vec{x}_{2j-1},\vec{w})+2j-1$ and (1) cannot occur.

Second, suppose (2) occurs. Observe that $$t \geq d(\vec{0},\vec{w}) = \sum_{i=1}^n |w_i| = \big( \sum_{i=1}^n |w_i| \big) - (2j) + (2j) = d(\vec{y}_{2i},\vec{w})+2j.$$ 
By a similar argument to (1),  $\vec{w}$ will become vaccinated during time-step $d(\vec{y}_{2j},\vec{w})+2j$ and (2) cannot occur.

 Therefore, $\sigma$ is a containment strategy with inoculation cost $2n$ and containment time $m$.   Furthermore, if some vertex $\vec{w}=(w_1,w_2,\dots,w_n)$ becomes infected during the process, then $-2j < w_j < 2j-1$ for all $j \in \{1,2,\dots,n\}$. 
Assume $\vec{w}$ becomes infected during time-step $m-1$.  Since the inoculated vertices are all on axes, $m-1 = d(\vec{0},\vec{w})$ and $$ m-1 = d(\vec{0},\vec{w}) = \sum_{i=1}^n |w_i| \leq \sum_{i=1}^n |-2i+1| = \sum_{i=1}^n (2i-1) = n^2.$$
\end{proof}

 Corollary~\ref{cor:axes} implies that $\inoc(G_n,\vec{0},b) \geq 2n$ for all $b \geq 1$.  Observe that $\inoc(G,r,b) \leq \inoc(G,r,1)$ for any graph $G$, fixed root $r$, and $b>1$. Corollary~\ref{cor:cartb} follows immediately.

\begin{corollary}\label{cor:cartb} For $b\geq 1$, $\inoc(G_n,\vec{0},b) =2n$. \end{corollary}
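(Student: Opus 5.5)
The claim is $\inoc(G_n,\vec{0},b)=2n$ for every $b\geq 1$. I will prove it by a matching upper bound and lower bound, each reduced to results already in the paper.

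For the upper bound, I show that any containment strategy for budget $1$ is also one for budget $b$. A strategy $\sigma$ with $|\sigma_i|\le 1$ satisfies $|\sigma_i|\le b$, so it is admissible under budget $b$. The dynamics of \IV depend only on which vertices are inoculated at which time-steps, not on the budget value. Hence $\sigma$ produces exactly the same evolution and the same containment time-step. This gives $\inoc(G_n,\vec{0},b)\le \inoc(G_n,\vec{0},1)$, and Theorem~\ref{thm:uppercart} then yields $\inoc(G_n,\vec{0},b)\le 2n$.

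For the lower bound, I must check that Corollary~\ref{cor:axes} does not rely on $b=1$. The proof of Theorem~\ref{lem:T_i} uses only the transmission-set condition $t+d(\vec{u},\vec{w})\le d(\vec{0},\vec{w})$ for a vertex $\vec u$ inoculated at time-step $t\ge 1$ off the relevant axis. That condition is independent of the budget, so each inoculated vertex's transmission set still meets at most one axis of $G_n$ when $b$ is arbitrary.

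I then need the step that each of the $2n$ axes must meet some transmission set. Each axis is an infinite ray leaving the root, and containment means only finitely many vertices are ever infected. So some vertex $\vec w$ on each axis is never infected. The origin is the first vertex of the axis to be infected, at time-step $0$, so the first vertex of the axis that escapes infection is adjacent to an infected vertex. It therefore cannot remain open and must be vaccinated. It is vaccinated by time $d(\vec 0,\vec w)$, since otherwise the contagion reaches it first along the axis.

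The main obstacle is tying this vaccinated vertex to the transmission set of a single inoculated vertex $\vec v$. I will trace back the chain of vaccinations to an inoculated vertex $\vec v$ with inoculation time $t$ satisfying $t+d(\vec v,\vec w)\le d(\vec 0,\vec w)$. This shows $\vec w\in\mathcal{T}(\vec v)$. Since each transmission set meets at most one axis, we need at least $2n$ inoculations, matching the upper bound.
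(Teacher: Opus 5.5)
Your proof is correct and follows the same route as the paper: the upper bound $\inoc(G_n,\vec{0},b)\le\inoc(G_n,\vec{0},1)=2n$ comes from monotonicity in the budget together with Theorem~\ref{thm:uppercart}, and the lower bound comes from Corollary~\ref{cor:axes}, whose argument does not depend on the budget. You also make explicit a step the paper leaves implicit, namely that each axis must contain a vertex of some transmission set, and your justification is sound: the first uninfected vertex $\vec w$ on the ray must be vaccinated by time-step $d(\vec 0,\vec w)$, and tracing its vaccination chain back gives an inoculated $\vec v$ with $t+d(\vec v,\vec w)\le d(\vec 0,\vec w)$.
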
 

With respect to the containment time on $(G_n,\vec{0},1)$, we conjecture that the bound in Theorem~\ref{thm:uppercart} is best possible.  Since $\inoc(G_n,\vec{0},b)=2n$ for all $b \geq 1$, the containment time on $(G_n,\vec{0},b)$ is bounded above by the containment time for $(G_n,\vec{0},1)$. 

\begin{ques}For $b\geq 1$, what is the containment time for $(G_n,\vec{0},b)$?
\end{ques}

 \subsection{The infinite $n$-dimensional strong grid}

The \emph{strong product} of two graphs $H$ and $H'$, written $H \boxtimes H'$, has vertex set $V(H) \times V(H')$. Vertices $(a, b)$ and $(c, d)$ in $H \boxtimes H'$ are adjacent if (1) $a = c \in V(H)$ and
$(b,d) \in E(H')$, or (2) $(a,c) \in E(H)$ and $b = d \in V(H')$, or (3) $(a,c) \in E(H)$ and $(b,d) \in E(H')$. Let $H_n$ denote the infinite $n$-dimensional strong grid; that is, the strong product of $n$ infinite two-way paths. We continue to use vector notation, and as in Section \ref{sec:Cart}, we assume, without loss of generality, that the root of $H_n$ is the origin $\vec{0}$. We again focus firstly on minimising budget and secondly in minimising the number of inoculations. That is, we show that the contagion can be contained with a budget of only $b=1$ and then determine the minimum cost of a containment strategy for $(H_n,\vec{0},1)$.
Note that for any $\vec{u} = (u_1,u_2,\dots,u_n)$ and $\vec{w} = (w_1,w_2,\dots,w_n)$ in $V(H_n)$, the distance between $\vec{u}$ and $\vec{w}$ is $$d(\vec{u},\vec{w}) = \max_{i \in [n]}|u_i-w_i|.$$ 

Let $A_n$ denote the set of all vectors of length $n$ whose entries are from $\{-1,1\}$. 

The Hadamard product is a binary operation that takes two vectors of the same length and returns a vector of the multiplied corresponding elements.  For example, the Hadamard product of $(a,b,c)$ and $(x,y,z)$ is $(a,b,c) \circ (x,y,z) = (ax, by, cz)$.  Note that we are using vector notation for convenience, but $\vec{a} \circ \vec{x}$ and $x \cdot \vec{a}$ represent {\it vertices} in the graph.

For each $\vec{a} \in A_n$, define the {\it orthant of $\vec{a}$} to be\footnote{A similar definition is used in matroid theory.  In particular, $\vec{v}$ \emph{conforms} to $\vec{a}$ if each non-zero coordinate of $\vec{v}$ has the same sign as the corresponding coordinate in $\vec{a}$.  Then $\mathcal{O}_{\vec{a}}$ is defined to be the set of all non-zero (integer-valued) vectors that conform to $\vec{a}$.} $$\mathcal{O}_{\vec{a}} = \{ \vec{a} \circ \vec{x} ~:~ x_i \in \mathbb{N} \cup \{0\}~\forall i \in [n] \text{~and~} \vec{x} \neq \vec{0}\hspace{0.5mm}\}.$$  Thus, $\vec{a}$ dictates the signs of the coordinates of the vertices in $\mathcal{O}_{\vec{a}}$. Define the {\it set of all orthants of $H_n$} to be $$\mathcal{O} = \bigcup_{\vec{a} \in A_n} \mathcal{O}_{\vec{a}}$$ and note that $\mathcal{O} = V(H_n) \backslash \{\vec{0}\}$. For each $\vec{a} \in A_n$, the {\it diagonal orthant of $\vec{a}$} is the set $\{ x \cdot\vec{a} ~:~ x \in \mathbb{N}\}$.

\bigskip

To illustrate the above definitions, consider the infinite $2$-dimensional strong grid $H_2$ and observe:
\begin{itemize}
\item $A_2 = \{(1,1),(-1,-1),(-1,1),(1,-1)\}$;
\item  $\mathcal{O}_{(1,1)} = \{(u,v) \in \mathbb{Z} : u,v \geq 0 \text{~and at least one of $u,v$ is non-zero} \}$;  
\item  $\mathcal{O}_{(-1,-1)} = \{(u,v) \in \mathbb{Z} : u,v \leq 0 \text{~and at least one of $u,v$ is non-zero} \}$;  
\item  $\mathcal{O}_{(-1,1)} = \{(u,v) \in \mathbb{Z} : u \leq 0, v \geq 0, \text{~and at least one of $u,v$ is non-zero} \}$;  
\item  $\mathcal{O}_{(1,-1)} = \{(u,v) \in \mathbb{Z} : u \geq 0, v \leq 0, \text{~and at least one of $u,v$ is non-zero} \}$.  
\end{itemize}  Note that vertices of the form $(u,0)$ or $(0,v)$ for $u, v \neq 0$ 
are in the orthant of two different vertices of $A_2$.  For example, for $u>0$, $(u,0)$  
is in both $\mathcal{O}_{(1,1)}$ and $\mathcal{O}_{(1,-1)}$. 
 For $n=2$, the diagonal orthants are highlighted with thick grey lines in Figure~\ref{fig:StrongGrid}.  Figure~\ref{fig:StrongGrid} also illustrates how the minimum cost of a containment strategy, that is, the inoculation cost, for $(H_n,\vec{0},1)$ is at most four. 

The following lemma will help to determine the minimum cost of a containment strategy for the infinite $n$-dimensional strong grid in an analogous fashion to how Theorem \ref{lem:T_i} helped to prove the minimum cost of a containment strategy for the infinite $n$-dimensional Cartesian grid in Section \ref{sec:Cart}. 

\begin{lemma} \label{lem:oneOrth} For $n \geq 2$, consider a containment strategy $\sigma$ on input $(H_n,\vec{0},1)$.  Each transmission set contains vertices from at most one diagonal orthant of $H_n$.
\end{lemma}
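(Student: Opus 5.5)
We argue by contradiction, in the same way as Theorem~\ref{lem:T_i}. Let $\vec{u}=(u_1,\dots,u_n)$ be inoculated at time-step $t\geq 1$. Suppose $\mathcal{T}(\vec{u})$ contains a vertex $x\cdot\vec{a}$ from the diagonal orthant of $\vec{a}$ and a vertex $y\cdot\vec{b}$ from the diagonal orthant of $\vec{b}$, where $\vec{a}\neq\vec{b}\in A_n$ and $x,y\in\mathbb{N}$. By definition of the transmission set and the Chebyshev distance $d$,
\[
t+\max_{i\in[n]}|u_i-xa_i|\leq d(\vec{0},x\cdot\vec{a})=x
\quad\text{and}\quad
t+\max_{i\in[n]}|u_i-yb_i|\leq y .
\]
Since $t\geq 1$, this gives $|u_i-xa_i|\leq x-1$ for every $i\in[n]$.

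For each coordinate, the inequality $|u_i-xa_i|\leq x-1$ forces $u_i$ to be nonzero with the same sign as $a_i$. If $a_i=1$, then $u_i\geq x-(x-1)=1$. If $a_i=-1$, then $u_i\leq -1$. So $\vec{u}$ lies strictly inside the open orthant of $\vec{a}$: every coordinate of $\vec{u}$ is nonzero and $\operatorname{sign}(u_i)=a_i$ for all $i$.

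Applying the same reasoning to $y\cdot\vec{b}$ gives $\operatorname{sign}(u_i)=b_i$ for all $i$. Hence $\vec{a}=\vec{b}$, which is a contradiction. Therefore each transmission set meets at most one diagonal orthant.

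The one delicate point is the role of $t\geq 1$. Because the vaccine is one step behind in timing, the inequality becomes strict, so $\vec{u}$ cannot have a zero coordinate. Without this, a vertex on a coordinate hyperplane could reach two adjacent diagonals. We should also check that the convention that a vertex can be in $\mathcal{T}(\vec{u})$ with equality ($t+d=d(r,\cdot)$) is exactly the one stated in Definition~\ref{defn:transmission}. It is, so the argument goes through with no case analysis. The hypothesis $n\geq 2$ is not needed beyond ensuring $A_n$ has more than two elements.
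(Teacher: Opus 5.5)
Your proof is correct and follows essentially the same route as the paper: both use $t\geq 1$ to sharpen the transmission condition to $|u_i - x a_i| < x$ (and likewise for $\vec{b}$), which pins down the sign of coordinates of $\vec{u}$ and yields incompatible constraints. The only difference is that the paper checks a single coordinate where $\vec{a}$ and $\vec{b}$ differ, whereas you show the slightly stronger fact that every $u_i$ is nonzero with sign $a_i$. Your closing remark about $n\geq 2$ is unnecessary: the argument works verbatim for every $n\geq 1$.
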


\begin{proof} Let $\sigma$ be a containment strategy on input $(H_n,\vec{0},1)$.  Denote by $D_{\vec{a}}$ and $D_{\vec{b}}$, two distinct diagonal orthants of $H_n$. As a contradiction, suppose $\vec{u} = (u_1,u_2,\dots,u_n)$ is inoculated during time-step $t$ and $\mathcal{T}(\vec{u})$ contains both vertices $\vec{x} = (a_1x,a_2x,\dots,a_nx) \in D_{\vec{a}}$ and $\vec{y} = (b_1y,b_2y,\dots,b_ny) \in D_{\vec{b}}$ for some positive integers $x$ and $y$. Without loss of generality, assume $x < y$. 

Since $\vec{y} \in \mathcal{T}(\vec{u})$ and $\vec{u}$ is inoculated during time-step $t$, by Definition~\ref{defn:transmission} $$t+\max_{i \in [n]} |u_i-b_iy| =t+d(\vec{u},\vec{y}) \leq  d(\vec{0},\vec{y}) = y.$$ Since $t \geq 1$, it follows that $\max_{i \in [n]}|u_i-b_iy| < y$.  We similarly deduce that $\max_{j \in [n]} |u_j - a_jx| < x$.   As $D_{\vec{a}} \ne D_{\vec{b}}$, there must be some $i \in [n]$ such that $b_i \ne a_i$. Without loss of generality, suppose $b_1 \ne a_1$. Then $b_1=-a_1$ as $b_1,a_1 \in \{-1,1\}$.  It follows that \begin{equation}\label{TheFirstEquation} \text{(a)}~~|u_1-a_1x|<x ~~~~\text{and~~~~ (b)}~~|u_1+a_1y|<y.\end{equation}

If $a_1=-1$ then Inequality~(\ref{TheFirstEquation})(a) simplifies to $|u_1+x|<x$.  As $x>0$, this implies $u_1<0$, which contradicts Inequality~(\ref{TheFirstEquation})(b).  Similarly, if $a_1=1$ then Inequality~(\ref{TheFirstEquation})(b) simplifies to $|u_1+y|<y$.  As $y>0$, this implies $u_1<0$, which contradicts Inequality~(\ref{TheFirstEquation})(a).
\end{proof}

As there are $2^n$ distinct diagonal orthants, Lemma \ref{lem:oneOrth} implies $\inoc(G) \geq 2^n$.\medskip

Recall that $A_n$ denotes the set of length $n$ vectors, whose coordinates are in $\{-1,1\}$.  As there is a bijection between the vectors of $A_n$ and the set of bit-strings of length $n$, we arbitrarily label the vectors in $A_n$ as $\vec{a}_1,\vec{a}_2,\dots,\vec{a}_{2^n}$. In the next result, we present a sequence $\sigma$ and show that it bounds from above, the coordinates of any infected vertex.  We then prove that $\sigma$ is a containment strategy of length $2^n$ for $(H_n,\vec{0},1)$ and provide an upper bound for the containment time. 

\begin{lemma}\label{lem: strong containment} Let $A_n = \{\vec{a}_1,\vec{a}_2,\dots,\vec{a}_{2^n}\}$ and $\sigma = (\sigma_1,\dots,\sigma_m)$ where $$\sigma_i = \begin{cases} \{\vec{a}_i \cdot i\} & \text{~if~} i \in \{1,\dots,2^n\}, \\  \emptyset & \text{~if~} i \in \{2^n+1,\dots,m\}.\end{cases}$$  Let  
$\vec{w} \neq \vec{0}$
be a vertex infected by the end of time-step $m$.  Then for some $x \in \{1,2,\dots,2^n\}$, 
$\vec{w} \in \mathcal{O}_{\vec{a}_x}$.  Let $\vec{a}_y \in A_n$ such that $\vec{a}_y$ differs from $\vec{a}_x$ only in coordinate $j$ for some $j \in [n]$.  Then $|w_i| < x+y$ for all $i \neq j$.
\end{lemma}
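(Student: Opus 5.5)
The first claim needs no work: $\mathcal{O}=V(H_n)\setminus\{\vec 0\}$, so any infected $\vec w\neq\vec 0$ lies in some orthant $\mathcal{O}_{\vec a_x}$. For the bound, I will argue by contradiction. Suppose some $i\neq j$ has $|w_i|\ge x+y$. Write $M=d(\vec 0,\vec w)=\max_k|w_k|$, so $M\ge x+y$. The goal is to show that $\vec w$ lies in the transmission set of $x\cdot\vec a_x$ (inoculated at step $x$) or of $y\cdot\vec a_y$ (inoculated at step $y$).

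First I record why membership in a transmission set really means vaccinated. The vertex $t\cdot\vec a_t$ is at distance $t$ from the root and is inoculated before the contagion spreads in step $t$, so it is open when inoculated. Now take $\vec u\in\mathcal{T}(\vec v)$, where $\vec v$ is inoculated at step $t$. Every vertex $\vec z$ on a geodesic from $\vec v$ to $\vec u$ satisfies
\[
t+d(\vec v,\vec z)=t+d(\vec v,\vec u)-d(\vec z,\vec u)\le d(\vec 0,\vec u)-d(\vec z,\vec u)\le d(\vec 0,\vec z),
\]
so $\vec z\in\mathcal{T}(\vec v)$ as well. By induction along the geodesic, the vaccine reaches each such vertex at the start of a step no later than the earliest step at which the contagion could. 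Hence every vertex of $\mathcal{T}(\vec v)$ is vaccinated, and an infected $\vec w$ lies in no transmission set.

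The computation uses the fact that if $\vec w$ conforms to $\vec a$, then $|w_k-a_k s|=\bigl||w_k|-s\bigr|$. Therefore
\[
s+d(s\vec a,\vec w)=\max_k \max\bigl(|w_k|,\,2s-|w_k|\bigr).
\]
Split into cases.
\begin{itemize}
\item \emph{Case $x<y$.} Then $2x<x+y\le M$, so every term is at most $M$ and $\vec w\in\mathcal{T}(x\vec a_x)$, a contradiction. By symmetry, if $w_j=0$ then $\vec w$ also conforms to $\vec a_y$, and the smaller of $x,y$ gives the same contradiction.
\item \emph{Case $x>y$ with $w_j\neq 0$.} Use $y\vec a_y$. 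Coordinates $k\neq j$ contribute at most $\max(|w_k|,2y)\le M$. Coordinate $j$, where the signs disagree, contributes $2y+|w_j|$. So the contradiction follows whenever $|w_j|\le M-2y$.
\end{itemize}

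The main obstacle is the residual case $x>y$ with $|w_j|>M-2y$, where neither single inoculation's transmission set obviously covers $\vec w$. I would handle it with a minimal-counterexample argument. Choose the counterexample $\vec w$ infected at the earliest time. It is infected via a neighbour $\vec w'$ infected one step earlier with $d(\vec 0,\vec w')\ge d(\vec 0,\vec w)-1$. I would pick $\vec w'$ by moving coordinate $j$ toward $0$, or by moving coordinate $i$ so that $|w'_i|\ge x+y$ is preserved. Minimality forces $\vec w'$ to satisfy the bound, or the case analysis above puts $\vec w'$ into a transmission set, so that its vaccination propagates to $\vec w$. If that fails, I would bring in a third inoculated vertex, namely the one whose sign vector agrees with $\vec w$ off coordinate $i$, and repeat the same $\max$-computation. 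Pinning down exactly which inoculation covers this boundary region is where I expect the real work to lie.
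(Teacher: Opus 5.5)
Your set-up is sound: the geodesic argument showing that every vertex of a transmission set is vaccinated is correct, and so are your cases $x<y$, $w_j=0$, and $x>y$ with $|w_j|\le M-2y$. The gap is the residual case $x>y$, $w_j\neq 0$, $|w_j|>M-2y$, which you leave open. It cannot be closed, because already for $n=3$ the statement fails there.

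Take $\vec a_1=(-1,1,1)$, $\vec a_2=(1,1,-1)$, $\vec a_3=(1,1,1)$, label the other five sign vectors arbitrarily, and let $\vec w=(4,4,1)$. Then:
\begin{itemize}
\item $1+d((-1,1,1),\vec w)=6$, $2+d((2,2,-2),\vec w)=5$ and $3+d((3,3,3),\vec w)=5$;
\item $s+d(s\vec a_s,\vec w)\ge 2s-1\ge 7$ for $s\ge 4$;
\item the same check shows that none of $(1,1,0),(2,2,0),(3,3,0)$ lies in a transmission set.
\end{itemize}
The vaccine from $s\vec a_s$ cannot reach a vertex $\vec z$ before step $s+d(s\vec a_s,\vec z)$. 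Consecutive vertices of $\vec 0,(1,1,0),(2,2,0),(3,3,0),\vec w$ are adjacent, so these vertices are infected at steps $1,2,3,4$. Yet $\vec w$ lies only in $\mathcal{O}_{\vec a_3}$, $\vec a_1$ differs from $\vec a_3$ only in coordinate $1$, and $|w_2|=4=x+y$.

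Since this $\vec w$ really is infected, no minimal-counterexample argument can rule it out. Your proposed third vertex, flipping coordinate $i$, is useless here, since that coordinate alone contributes $2z+|w_2|=2z+M>M$. The paper's proof has the same hole. In its Case 2 it infers $v_i<x+y$ from $0\le v_i-v_1<2y$, which does not follow. The vertex $\vec v=(4,4,1)$, with $v_1=v_2=4$, is exactly where that inference breaks.

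The tool you were missing is the exact membership test for the vertex's own inoculation. For $\vec w\in\mathcal{O}_{\vec a_x}$ with $M=\max_k|w_k|$, one has $\vec w\in\mathcal{T}(x\vec a_x)$ if and only if $\min_k|w_k|\ge 2x-M$, not merely when $2x\le M$. In your residual case $|w_i|\ge x+y>2x-M$ and $|w_j|>M-2y\ge 2x-M$. So for $n=2$ the test succeeds and your argument proves the lemma. For $n\ge 3$ only a third, small coordinate can spoil it, which is exactly the example above.

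What Theorem~\ref{thm:upperstrongQn} actually needs is a bound using all $n$ neighbours of $\vec a_x$ in $Q_n$ at once. Write $y_j$ for the label of the neighbour differing from $\vec a_x$ in coordinate $j$, and $y^*=\max_j y_j$. The claim is that every infected $\vec w\in\mathcal{O}_{\vec a_x}$ has $M<x+y^*$.
\begin{itemize}
\item If $y^*>x$, then $\vec w\notin\mathcal{T}(x\vec a_x)$ gives $M<2x-\min_k|w_k|\le 2x<x+y^*$.
\item If $y^*<x$, suppose $M\ge x+y^*$. Then $2y_j\le 2y^*<M$, so the coordinates other than $j$ contribute at most $M$. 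Avoiding $\mathcal{T}(y_j\vec a_{y_j})$ therefore forces $|w_j|>M-2y_j\ge M-2y^*$ for every $j$. Combined with $\min_k|w_k|<2x-M$, this gives $M<x+y^*$, a contradiction.
\end{itemize}
That statement, rather than the per-pair bound, is the right target.
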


\begin{proof} We express $\vec{w}$ 
as $\vec{w} = (a_1w_1,a_2w_2,\dots,a_nw_n)$ where each $w_i$ is a nonnegative integer (at least one being positive) and each $a_i \in \{-1,1\}$.  Then $\vec{a}_x = (a_1,a_2,\dots,a_n) \in A_n$ and 
$\vec{w} \in \mathcal{O}_{\vec{a}_x}$.  Let $\vec{a}_y=(b_1,b_2,\dots,b_n)$ where $\vec{a}_y$ differs from $\vec{a}_x$ only in coordinate $j$.  Without loss of generality, let $j=1$; and in particular, let $a_1=1$ and $b_1=-1$.  Following $\sigma$, note that vertices $\vec{a}_x \cdot x$ and $\vec{a}_y \cdot y$ are inoculated at time-steps $x$ and $y$, respectively.  Without loss of generality, suppose $x > y$.

We aim to prove $w_i < x+y$ for $i \neq j$.  For a contradiction, we  assume $w_i \geq x+y$ for some $i \neq 1$. However, if such a vertex $\vec{w}$ is infected at time-step $t$, then there is an earlier time-step $t'<t$ in which $\vec{v}$ is infected, where $\vec{v}=(a_1v_1,\dots,a_nv_n)$ with $v_1\geq 0$, $0<v_j \leq x+y$ for $j \neq 1$, and $v_j=x+y$ for some $j \in [n]$.
  
Then $\vec{v}= (b_1v_1',b_2v_2',\dots,b_nv_n')$ where $b_1 = -a_1$, $v_1'=-v_1$;  and for $i>1$, $b_i = a_i$ and $v_i'=v_i$.  Since $\vec{v} \not\in \mathcal{T}(\vec{a}_y\cdot y)$,  $$x+y = \max_{i \in [n]} v_i = d(\vec{0},\vec{v})  < y+d(\vec{a}_y \cdot y, \vec{v}) = y+\max_{i \in [n]} |b_iy-b_iv_i'| = y+\max_{i \in [n]} |y-v_i'|$$ $$\Longleftrightarrow~ x < \max_{i \in [n]}\big\{y+v_1,|y-v_2|,|y-v_3|,\dots,|y-v_n|\big\}.$$  

\noindent \underline{Case 1:} Suppose $\max_{i \in [n]} |y-v_i'| = |y-v_r|$ for some $r>1$.  Then $x<|y-v_r|$.  If $v_r \geq y$ then $v_r-y > x ~\Leftrightarrow~ v_r > x+y$, a contradiction.  Thus, $y > v_r$ and furthermore,  $$\max_{i \in [n]} |y-v_i'| = |y-v_r| \geq y+v_1 ~\Longleftrightarrow~ y-v_r \geq y+v_1 ~\Longleftrightarrow~ v_1+v_r \leq 0,$$ which implies $v_1=v_r=0$. But then $x<|y-v_r| ~\Leftrightarrow~ x<y$, which contradicts $x > y$.  Thus, Case 1 cannot occur.  \medskip

\noindent \underline{Case 2:} Suppose $\max_{i \in [n]} |y-v_i'| = y+v_1$.  Then $v_1 > x-y$.  Since Case 1 does not occur, $y+v_1 >  |y-v_i|$ for all $i>1$.  We consider $|y-v_i|$ for each $i>1$. 
\begin{itemize}
\item If $v_i \leq y$ then $v_i \leq y < x+y$.
\item If $v_i > y$ then $y+v_1 > |y-v_i| = v_i - y ~~\Longleftrightarrow~~ 2y > v_i-v_1$. 
\begin{itemize} 
\item If $v_i \geq v_1$ then $0 \leq v_i -v_1 < 2y < x+y$, which implies $v_i < x+y$. 
\item  If $v_i<v_1$ then $v_i < v_1 \leq x+y$.  
\end{itemize}
In all situations, we find $v_i < x+y$ for all $i>1$, a contradiction.
\end{itemize} 
For Case 2, there is no situation in which $v_i = x+y$ for any $i>1$.\medskip

Thus, if $\vec{w} \neq \vec{0}$ is infected by the end of time-step $m$, then $w_i < x+y$ for all $i \neq j$.  A similar argument shows that $w_i > -(x+y)$ for all $i \neq j$.  Thus, $|w_i| < x+y$ for all $i \neq j$.
\end{proof}

We next introduce a parameter that will provide an upper bound for the containment time for $(H_n,\vec{0},1)$. Let $G$ be a graph and $f$ be an ordering of the vertices of $G$. The {\it strength of $f$}, denoted $\strength_f(G)$, is \begin{equation}\label{eqnstr1} \strength_f(G) = \max \big|\{ f(u)+f(v) ~:~ uv \in E(G)\}\big|\end{equation} and the {\it strength} of $G$, denoted $\strength(G)$, is \begin{equation}\label{eqnstr2}str(G) = \min \big|\{ \strength_f(G)~:~ f \text{~is a linear ordering of~} G\}\big|.\end{equation}  The parameters of~(\ref{eqnstr1}) and~(\ref{eqnstr2}) were first defined in~\cite{I.2018}, motivated by super edge-magic labelings and the super magic strength of a graph.  The strength of a graph is a generalisation of the concept of super magic strength.  See~\cite{Av,I.2018} for more on super edge-magic labelings and super magic strength.  Table~\ref{table:strength} details exact values and best known upper bounds for $\strength(Q_n)$. 
We make use of the results in Table~\ref{table:strength} to bound the containment time for $(H_n,\vec{0},1)$.

 \begin{theorem}\label{thm:upperstrongQn} For $n \geq 2$,  $\inoc(H_n,\vec{0},1)=2^n$ and the containment time is at most $str(Q_n)$.
\end{theorem}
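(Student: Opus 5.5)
The lower bound $\inoc(H_n,\vec{0},1)\geq 2^n$ comes straight from Lemma~\ref{lem:oneOrth}. Take any containment strategy. Each diagonal orthant $D_{\vec{a}}$ is an infinite ray, and at most finitely many vertices are infected, so some vertex of $D_{\vec{a}}$ must end up vaccinated. That vertex is vaccinated, not infected, and the vaccine spreads exactly as the contagion does. Hence it lies in the transmission set of some inoculated vertex; to justify this, trace back along a shortest vaccination path to its source inoculation and use the timing inequality of Definition~\ref{defn:transmission}. By Lemma~\ref{lem:oneOrth}, each transmission set meets at most one diagonal orthant. So the $2^n$ diagonal orthants require at least $2^n$ distinct inoculated vertices.

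For the upper bound I will use the strategy $\sigma$ from Lemma~\ref{lem: strong containment}, where $\vec{a}_i\cdot i$ is inoculated at time-step $i$, with the labelling $\vec{a}_1,\dots,\vec{a}_{2^n}$ still to be chosen. First, each $\vec{a}_i\cdot i$ is at distance $i$ from $\vec{0}$, so it is open when it is inoculated. Second, I will show that the infected set is finite. Let $\vec{w}\in\mathcal{O}_{\vec{a}_x}$ be infected. Lemma~\ref{lem: strong containment} applies to each neighbour $\vec{a}_y$ of $\vec{a}_x$ in the hypercube, that is, each $\vec{a}_y$ differing from $\vec{a}_x$ in a single coordinate $j$. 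It gives $|w_i|<x+y$ for all $i\neq j$. Since $n\ge 2$, every coordinate $i$ is different from the flip coordinate $j$ of some hypercube neighbour. Hence $|w_i|<\max_{y}(x+y)$ over the neighbours $y$ of $x$ in $Q_n$. (Here I identify $A_n$ with $V(Q_n)$ via the labelling $f(\vec{a}_i)=i$, so adjacency means differing in exactly one coordinate.) The same bound holds for every orthant, so every infected vertex satisfies
\[
d(\vec{0},\vec{w})=\max_i|w_i|<\max_{\vec{a}_x\vec{a}_y\in E(Q_n)}(x+y)=\strength_f(Q_n).
\]
This bounds the infected set, so containment occurs, and the strategy has cost $2^n$.

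For the containment time, a vertex infected at time-step $t$ has $d(\vec{0},\vec{w})=t$. This holds because, in the strong grid from the origin, infection times equal distances: any vertex reached by the contagion is reached along a shortest path, whose vertices are not vaccinated earlier. I expect to justify this by the same transmission-set argument. Therefore the last infection happens at some time-step at most $\strength_f(Q_n)-1$, and containment is certified by time-step $\strength_f(Q_n)$. Choosing $f$ to be an ordering attaining $\strength(Q_n)$ gives containment time at most $\strength(Q_n)$. One caveat: $\sigma$ must be open-legal through time $2^n$. Since $\strength(Q_n)\ge 2^n+1$ (the vertex labelled $2^n$ has a neighbour), the length is fine.

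The main obstacle is the step claiming that Lemma~\ref{lem: strong containment} bounds every coordinate. The lemma as stated assumes $x>y$ without loss of generality, so I must check that the bound $x+y$ is symmetric. The lemma's argument also covers vertices on orthant boundaries, which belong to several orthants, and these need care. A further point is the off-by-one between the strict inequality $d<\strength_f$ and the stated bound of time-steps, which must be tracked against Definition~\ref{defn:vaccination_strategy}.
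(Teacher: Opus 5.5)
Your proposal follows the paper's proof: the lower bound comes from Lemma~\ref{lem:oneOrth}, and the upper bound from the strategy $\sigma_t=\{\vec a_t\cdot t\}$ with Lemma~\ref{lem: strong containment} applied along every edge of $Q_n$, giving $\max_i|w_i|\le\strength_f(Q_n)-1$. The step you flagged is a genuine gap, and the paper's argument has the same gap. It is not about symmetry in $x,y$. Already for $n=3$ the lemma's per-neighbour conclusion fails even with $x>y$, so you cannot cite it ``for each neighbour''. Take $\vec a_1=(-1,1,1)$, $\vec a_2=(-1,1,-1)$, $\vec a_4=(1,1,-1)$, $\vec a_5=(1,1,1)$, with the other labels arbitrary, and $\vec w=(5,6,0)$, so $d(\vec 0,\vec w)=6$. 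Then $t+d(\vec a_t\cdot t,\vec w)$ equals $7,9,8,10$ for $t=1,2,4,5$. For the four remaining labels it is at least $2t+6$, since those inoculated vertices have second coordinate $-t$. So $\vec w$ lies in no transmission set and is infected at time-step $6$. Yet $\vec w\in\mathcal O_{\vec a_5}\cap\mathcal O_{\vec a_4}$, and flipping coordinate $1$ gives $\vec a_1$, respectively $\vec a_2$. The lemma would then assert $6=|w_2|<5+1$, respectively $6<4+2$, both false. The faulty inference in the lemma's proof is ``$0\le v_i-v_1<2y<x+y$, which implies $v_i<x+y$''. The underlying reason: with $M=\max_i|w_i|$, non-membership in the transmission set of $x\,\vec a_x$ gives $M+\min_i|w_i|<2x$. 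For the neighbour flipping coordinate $j$ it gives $M-|w_j|<2y$. These combine to $M<x+y$ only when $|w_j|$ is the smallest coordinate.

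That observation is also the repair: use one well-chosen neighbour rather than all of them. For infected $\vec w$, let $\mu=\min_i|w_i|=|w_j|$. Let $\vec a$ satisfy $a_iw_i=|w_i|$ for all $i$, let $\vec b$ be $\vec a$ with coordinate $j$ flipped, and write $f(\vec a_t)=t$. A direct sup-metric computation shows $\vec w\notin\mathcal T(f(\vec a)\,\vec a)$ iff $M+\mu<2f(\vec a)$, and $\vec w\notin\mathcal T(f(\vec b)\,\vec b)$ iff $M-\mu<2f(\vec b)$. Adding gives $M<f(\vec a)+f(\vec b)\le\strength_f(Q_n)$, which is exactly the inequality you need, with no appeal to Lemma~\ref{lem: strong containment}.

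This uses the fact that infected vertices lie in no transmission set. To see it, let $u\in\mathcal T(v)$ with $v$ inoculated at time $t$. By the triangle inequality every $z$ on a shortest $v$--$u$ path has $t+d(v,z)\le d(\vec 0,z)$, and nothing is infected before the time-step equal to its distance, so the vaccine reaches $u$.

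Your two deferred timing claims both follow from one induction: at the end of time-step $k$, every vertex within distance $k$ of $\vec 0$ is infected or vaccinated. This gives infection time equal to distance. It also shows every vaccinated vertex is vaccinated by time-step $d(\vec 0,\cdot)$, so tracing back does place it in a transmission set, which completes your lower bound.
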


\begin{proof} 
That $\inoc(H_n,\vec{0},1) \geq 2^n$ follows from Lemma~\ref{lem:oneOrth}.  To prove $\inoc(H_n,\vec{0},1) \leq 2^n$ and provide an upper bound on the containment time, we first define a sequence $\sigma = (\sigma_1,\dots,\sigma_m)$:  let $\sigma_t = \{\vec{a}_t \cdot t\}$ for $t \in \{1,\dots,2^n\}$ and $\sigma_t = \emptyset$ for $t \in \{2^n+1,\dots,m\}$.

Recall that the elements of $A_n$ were given an arbitrary ordering $\vec{a}_1,\vec{a}_2,\dots,\vec{a}_{2^n}$.  We apply Lemma~\ref{lem: strong containment} to each pair of elements in $A_n$ that differ in exactly one coordinate.  As a result,  the largest coordinate of any infected vertex is at most \begin{equation}\label{eqn2}\max \big|\{i+j-1 : i,j \in \{1,2,\dots,2^n\} \text{~and, $\vec{a}_i$ and $\vec{a}_j$ differ in exactly one coordinate}\}\big|.\end{equation}  If $m$ is larger than the quantity in~(\ref{eqn2}), the contagion is contained, and so  $\sigma$ is a containment strategy for $(H_n,\vec{0},1)$. Furthermore, the quantity in~(\ref{eqn2}) provides us with an upper bound for the containment time of $(H_n,\vec{0},1)$. 

Each element of $A_n$ corresponds to a distinct vertex in the hypercube $Q_n$.  So $\vec{a}_i$ and $\vec{a}_j$ differ in exactly one coordinate if and only if corresponding vertices $x_i$ and $x_j$ in $Q_n$ are adjacent.  Given an arbitrary ordering of the elements of $A_n$, any bijection $f : \{x_1,x_2,\dots,x_{2^n}\} \rightarrow \{1,2,\dots,2^n\}$ will correspond to the following bound on the containment time: $$\max \big| \{f(u)+f(v): uv \in E(Q_n)\}\big| = \max \big| \{f(u)+f(v) : uv \in E(Q_n)\}\big| = str_f(Q_n).$$  Thus, the containment time for $(H_n,\vec{0},1)$ is at most $$\strength(Q_n) = \min \big| \{\strength_f(Q_n) : f \text{~is a linear ordering of~} Q_n\}\big|.$$

\end{proof}

The containment time for $(H_n,\vec{0},1)$ is therefore bounded by the values given in Table~\ref{table:strength}.

\begin{table}[htbp]
\[ 
\begin{tabular}{cc}\begin{tabular}{|c|c|c|}
\hline
$n$ & $\strength(Q_n)$ & \text{comment} \\
\hline
1 & = 3 & due to \cite{I.2018}\\
2 & = 6 & due to \cite{I.2018} \\
3 & = 11 & due to \cite{G.L.S.2021}\\
4 & = 21 & due to \cite{G.L.S.2021}\\
5 & = 40 & due to \cite{G.L.S.2021}\\
6 & $\leq 78$ & due to \cite{Paper2} \\
7 & $\leq 152$ & due to \cite{Paper2} \\ 
\hline
\end{tabular} 
&
\begin{tabular}{|c|c|c|}
\hline
$n$ & $\strength(Q_n)$ & \text{comment} \\
\hline
8 & $\leq 300$ & due to \cite{Paper2} \\
9 & $\leq 591$ &  due to \cite{Paper2} \\
10 & $\leq 1173$ & due to \cite{Paper2} \\
11 & $\leq 2323$ & due to \cite{Paper2} \\
12 & $\leq 4623$ & due to \cite{Paper2} \\
13 & $\leq 9181$ & due to \cite{Paper2} \\
14+ & $\leq 2^n+2^{n-3}+28$ & due to \cite{Paper2} \\ 
\hline
\end{tabular} 
\end{tabular}
\] 
\caption{Known values and best current upper bounds for $\strength(Q_n)$.}
\label{table:strength}
\end{table}

Similar to the $n$-dimensional Cartesian grid, the inoculation number of $H_n$ is $2^n$ for all $b \geq 1$.

\begin{corollary}\label{cor:b} For $b\geq1$, $\inoc(H_n,\vec{0},b)=2^n$. \end{corollary}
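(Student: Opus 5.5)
The argument gives two matching inequalities, mirroring the deduction of Corollary~\ref{cor:cartb} from Theorem~\ref{thm:uppercart}.

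\emph{Upper bound.} By Theorem~\ref{thm:upperstrongQn}, $\inoc(H_n,\vec{0},1)=2^n$. For $b>1$, the containment strategy constructed there, $\sigma_t=\{\vec{a}_t\cdot t\}$ for $t\le 2^n$ and $\sigma_t=\emptyset$ afterwards, uses at most one inoculation per time-step. It is therefore still a valid strategy when the budget is $b$, with the same evolution and cost $2^n$. This gives $\inoc(H_n,\vec{0},b)\le \inoc(H_n,\vec{0},1)=2^n$, which is the paper's observation that $\inoc(G,r,b)\le\inoc(G,r,1)$.

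\emph{Lower bound.} I will rerun the argument of Lemma~\ref{lem:oneOrth} without the assumption $b=1$. That proof used only two facts about an inoculated vertex $\vec{u}$: it is inoculated at some time-step $t\ge 1$, and its transmission set is defined by $t+d(\vec{u},\vec{z})\le d(\vec{0},\vec{z})$. Neither depends on the budget, so each transmission set still meets at most one diagonal orthant.

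Next I need every diagonal orthant to meet some transmission set under any containment strategy. Suppose some diagonal orthant $\{x\cdot\vec{a}: x\in\mathbb{N}\}$ met no transmission set. A vertex that is vaccinated is reached first by the vaccine spreading from some inoculated vertex $\vec{u}$, so it lies in $\mathcal{T}(\vec{u})$. Hence every vertex $x\cdot\vec{a}$ would fail to be vaccinated. The contagion reaches $x\cdot\vec{a}$ at time $x$ along the diagonal path $\vec{0},\vec{a},2\vec{a},\dots$, since each vertex on this path is the only one not vaccinated ahead of the contagion along it. So infinitely many vertices are infected, and the contagion is never contained, a contradiction.

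Each of the $2^n$ diagonal orthants therefore needs its own inoculated vertex, giving $\inoc(H_n,\vec{0},b)\ge 2^n$.

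The main obstacle is the step that every diagonal orthant meets a transmission set. The justification is that any vaccinated vertex $\vec{z}$ receives the vaccine by a chain of spreads from some inoculated $\vec{u}$ inoculated at time $t$. This gives $t+d(\vec{u},\vec{z})\le$ the vaccination time of $\vec{z}$, which is at most $d(\vec{0},\vec{z})$ when $\vec{z}$ is never infected along the diagonal.
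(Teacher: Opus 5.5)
Your proof is correct and follows the paper's route: the upper bound is the observation $\inoc(G,r,b)\le\inoc(G,r,1)$, and the lower bound is Lemma~\ref{lem:oneOrth}, whose proof indeed never uses $b=1$. You also make explicit a step the paper leaves implicit, namely that each diagonal orthant must meet some transmission set; your argument for it is sound once phrased as follows: if $x$ is least such that $x\cdot\vec{a}$ is not infected at time-step $x$, then $x\cdot\vec{a}$ was vaccinated by time-step $x=d(\vec{0},x\cdot\vec{a})$, so it lies in some $\mathcal{T}(\vec{u})$.
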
 

The lower bound in Corollary~\ref{cor:b} follows from Lemma~\ref{lem:oneOrth} and the upper bound follows because we have $\inoc(G,r,b) \leq \inoc(G,r,1)$ for any graph $G$, fixed root $r$, and $b>1$.  The containment time on $(H_n,\vec{0},b)$ is bounded above by the containment time on $(H_n,\vec{0},1)$, but the actual values remain unknown.

\begin{ques}For $b\geq 1$, what is the containment time for $(H_n,\vec{0},b)$?
\end{ques}

\subsection{The inoculation number of subgraphs}\label{sec:subgraphs}

We conclude the paper with a result about subgraphs. A natural question is:  {\it if $H$ is an induced subgraph of $G$, then is the inoculation number of $H$ at most that of $G$?}  In general, the answer is no, even when the subgraph is connected or isometric as shown by Theorem~\ref{thm:example} below.

Let $G^*$ be the graph shown on the left-side of Figure~\ref{fig:IsoEx} where the ellipses indicate paths of infinite length. The subgraphs induced by the deletion of center $c$ from $G^*$ are labelled $B_1$, $B_2$, and $B_3$; and observe that $B_1 \cong B_2 \cong B_3$.  Let $H^*$ be the subgraph illustrated on the right-side of Figure~\ref{fig:IsoEx} and observe it is the subgraph of $G^*$, induced by the deletion of three particular vertices.  

\begin{figure}[H]
\centering
\scalebox{0.9}{
   \begin{tikzpicture}

        \draw[line width = 1] (0,0) -- (0,1.5);
        \draw[line width = 1] (0,0.5) -- (-0.5,1);
        \draw[line width = 1] (0,0.5) -- (0.5,1);

        \draw[line width = 1] (0,2) -- (0.65,2.375) -- (0.65,3.125) -- (0,3.5) -- (-0.65,3.125) -- (-0.65,2.375) -- (0,2);

        \draw[line width = 1] (0,2.75) -- (0.65,3.125);
        \draw[line width = 1] (0,2.75) -- (-0.65,3.125);
        \draw[line width = 1] (0,2.75) -- (0,1.5);

        \draw[line width = 1] (-0.5,1) .. controls (-1.5,2.375) .. (-0.65,3.125);
        \draw[line width = 1] (0.5,1) .. controls (1.5,2.375) .. (0.65,3.125);


        \draw[line width = 1] (-0.65,3.125) --(-2.45,3.125);
        \draw[line width = 1,dotted] (-2.45,3.125) --(-2.85,3.125);

        \draw[line width = 1] (0.65,3.125) --(2.45,3.125);
        \draw[line width = 1,dotted] (2.45,3.125) -- (2.85,3.125);

        \draw[line width = 1] (0,2) --(2.45,2);
        \draw[line width = 1,dotted] (2.45,2) --(2.85,2);

        \draw[fill] (-1.25,3.125) circle [radius=0.12];
        \draw[fill] (-1.85,3.125) circle [radius=0.12];

        \draw[fill] (1.25,3.125) circle [radius=0.12];
        \draw[fill] (1.85,3.125) circle [radius=0.12];

        \draw[fill] (0.6,2) circle [radius=0.12];
        \draw[fill] (1.8,2) circle [radius=0.12];
        
        \draw[fill] (0,0.55) circle [radius=0.12];
        \node at (0.34,0.54) {$u_1$};
        \draw[fill] (0.5,1) circle [radius=0.12];
        \draw[fill] (-0.5,1) circle [radius=0.12];
        \draw[fill] (0,1) circle [radius=0.12];
        \draw[fill] (0,2) circle [radius=0.12];
        \draw[fill] (0.65,2.375) circle [radius=0.12];
        \draw[fill] (-0.65,2.375) circle [radius=0.12];
        \draw[fill] (0.65,3.125) circle [radius=0.12];
        \draw[fill] (-0.65,3.125) circle [radius=0.12];
        \draw[fill] (0,3.5) circle [radius=0.12];
        \draw[fill] (0,2.75) circle [radius=0.12];

        \draw[line width = 0.5pt] (0,3.8) .. controls (3.5,3.4) .. (2.95,2);
        \draw[line width = 0.5pt] (2.95,2) .. controls (1.5,1) .. (0,0.25);
        \draw[line width = 0.5pt] (0,3.8) .. controls (-3.5,3.4) .. (-2.95,2);
        \draw[line width = 0.5pt] (-2.95,2) .. controls (-1.5,1) .. (0,0.25);

        \node at (0,4.1) {\large$B_1$};


        \draw[line width = 1] (0,0) -- (-1.3,-0.75);
        \draw[line width = 1] (-0.433,-0.25) -- (-0.616,-0.933);
        \draw[line width = 1] (-0.433,-0.25) -- (-1.12,-0.07);

        \draw[line width = 1] (-1.73,-1) -- (-2.38,-0.625) -- (-3.03,-1) -- (-3.03,-1.75) -- (-2.38,-2.125) -- (-1.72,-1.75) -- (-1.73,-1);

        \draw[line width = 1] (-2.38,-1.375) -- (-3.03,-1);
        \draw[line width = 1] (-2.38,-1.375) -- (-2.38,-2.125);
        \draw[line width = 1] (-2.38,-1.375) -- (-1.3,-0.75);

        \draw[line width = 1] (-0.616,-0.933) .. controls (-1.31,-2.49) .. (-2.38,-2.125);
        \draw[line width = 1] (-1.12,-0.07) .. controls (-2.81,0.11) .. (-3.03,-1);


        \draw[line width = 1] (-2.38,-2.125) --(-1.48,-3.684);
        \draw[line width = 1,dotted] (-1.48,-3.684) --(-1.28,-4.03);
        
        \draw[line width = 1] (-3.03,-1) --(-3.93,0.559);
        \draw[line width = 1,dotted] (-3.93,0.559) -- (-4.13,0.9);

        \draw[line width = 1] (-1.73,-1) --(-2.96,1.12);
        \draw[line width = 1,dotted] (-2.96,1.12) --(-3.16,1.47);

        \draw[fill] (-2.08,-2.645) circle [radius=0.12];
        \draw[fill] (-1.78,-3.165) circle [radius=0.12];

        \draw[fill] (-3.33,-0.48) circle [radius=0.12];
        \draw[fill] (-3.63,0.04) circle [radius=0.12];

        \draw[fill] (-2.03,-0.48) circle [radius=0.12];
        \draw[fill] (-2.63,0.59) circle [radius=0.12];
        
        \draw[fill] (-0.433,-0.25) circle [radius=0.12];
        \draw[fill] (-1.12,-0.07) circle [radius=0.12];
        \draw[fill] (-0.616,-0.933) circle [radius=0.12];
        \draw[fill] (-.866,-0.5) circle [radius=0.12];
        \draw[fill] (-1.73,-1) circle [radius=0.12];
        \draw[fill] (-2.38,-0.625) circle [radius=0.12];
        \draw[fill] (-1.72,-1.75) circle [radius=0.12];
        \draw[fill] (-3.03,-1) circle [radius=0.12];
        \draw[fill] (-2.38,-2.125) circle [radius=0.12];
        \draw[fill] (-3.03,-1.75) circle [radius=0.12];
        \draw[fill] (-2.38,-1.375) circle [radius=0.12]; 
        \node at (-2.1,-1.5) {$u_2$};

        \draw[line width = 0.5pt] (-3.29,-1.9) .. controls (-4.694,1.33) .. (-3.21,1.55);
        \draw[line width = 0.5pt] (-3.21,1.55) .. controls (-1.616,0.8) .. (-0.2165,-0.125);
        \draw[line width = 0.5pt] (-3.29,-1.9) .. controls (-1.194,-4.73) .. (-0.257,-3.555);
        \draw[line width = 0.5pt] (-0.257,-3.555) .. controls (-0.116,-1.8) .. (-0.2165,-0.125);

        \node at (-3.551,-2.05) {\large$B_2$};


        \draw[line width = 1] (0,0) -- (1.3,-0.75);
        \draw[line width = 1] (0.433,-0.25) -- (0.616,-0.933);
        \draw[line width = 1] (0.433,-0.25) -- (1.12,-0.07);

        \draw[line width = 1] (1.73,-1) -- (2.38,-0.625) -- (3.03,-1) -- (3.03,-1.75) -- (2.38,-2.125) -- (1.72,-1.75) -- (1.73,-1);

        \draw[line width = 1] (2.38,-1.375) -- (3.03,-1);
        \draw[line width = 1] (2.38,-1.375) -- (2.38,-2.125);
        \draw[line width = 1] (2.38,-1.375) -- (1.3,-0.75);

        \draw[line width = 1] (0.616,-0.933) .. controls (1.31,-2.49) .. (2.38,-2.125);
        \draw[line width = 1] (1.12,-0.07) .. controls (2.81,0.11) .. (3.03,-1);


        \draw[line width = 1] (2.38,-2.125) --(1.48,-3.684);
        \draw[line width = 1,dotted] (1.48,-3.684) --(1.28,-4.03);
        
        \draw[line width = 1] (3.03,-1) --(3.93,0.559);
        \draw[line width = 1,dotted] (3.93,0.559) -- (4.13,0.9);

        \draw[line width = 1] (1.73,-1) --(2.96,1.12);
        \draw[line width = 1,dotted] (2.96,1.12) --(3.16,1.47);

        \draw[fill] (2.08,-2.645) circle [radius=0.12];
        \draw[fill] (1.78,-3.165) circle [radius=0.12];
        \node at (1.45,-3.165) {$u_5$};

        \draw[fill] (3.33,-0.48) circle [radius=0.12];
        \node at (3.66,-0.48) {$u_4$};
        \draw[fill] (3.63,0.04) circle [radius=0.12];

        \draw[fill] (2.03,-0.48) circle [radius=0.12];
        \draw[fill] (2.63,0.59) circle [radius=0.12];
        
        \draw[fill] (0.433,-0.25) circle [radius=0.12];
        \draw[fill] (1.12,-0.07) circle [radius=0.12];
        \draw[fill] (0.616,-0.933) circle [radius=0.12];
        \draw[fill] (.866,-0.5) circle [radius=0.12];

        \draw[fill] (1.73,-1) circle [radius=0.12];
        \node at (1.45,-1.15) {$u_3$};
        \draw[fill] (2.38,-0.625) circle [radius=0.12];
        \draw[fill] (1.72,-1.75) circle [radius=0.12];
        \draw[fill] (3.03,-1) circle [radius=0.12];
        \draw[fill] (2.38,-2.125) circle [radius=0.12];
        \draw[fill] (3.03,-1.75) circle [radius=0.12];
        \draw[fill] (2.38,-1.375) circle [radius=0.12];

        \draw[line width = 0.5pt] (3.29,-1.9) .. controls (4.694,1.33) .. (3.21,1.55);
        \draw[line width = 0.5pt] (3.21,1.55) .. controls (1.616,0.8) .. (0.2165,-0.125);
        \draw[line width = 0.5pt] (3.29,-1.9) .. controls (1.194,-4.73) .. (0.257,-3.555);
        \draw[line width = 0.5pt] (0.257,-3.555) .. controls (0.116,-1.8) .. (0.2165,-0.125);

        \node at (3.551,-2.05) {\large$B_3$};
        
           \draw[fill] (0,0) circle [radius=0.12];
        \node at (0,-0.25) {$c$};



        \draw[line width = 1] (9,0) -- (9,2);
        \draw[line width = 1] (9,0.5) -- (8.5,1);
        \draw[line width = 1] (9,0.5) -- (9.5,1);

        \draw[line width = 1] (9,2) -- (9.65,2.375) -- (9.65,3.125) -- (9,3.5) -- (8.35,3.125) -- (8.35,2.375) -- (9,2);

        \draw[line width = 1] (8.5,1) .. controls (7.5,2.375) .. (8.4,3.125);
        \draw[line width = 1] (9.5,1) .. controls (10.5,2.375) .. (9.65,3.125);


        \draw[line width = 1] (8.35,3.125) --(6.55,3.125);
        \draw[line width = 1,dotted] (6.55,3.125) --(6.15,3.125);

        \draw[line width = 1] (9.65,3.125) --(11.45,3.125);
        \draw[line width = 1,dotted] (11.45,3.125) -- (11.85,3.125);

        \draw[line width = 1] (9,2) --(11.45,2);
        \draw[line width = 1,dotted] (11.45,2) --(11.85,2);

        \draw[fill] (7.75,3.125) circle [radius=0.12];
        \draw[fill] (7.15,3.125) circle [radius=0.12];

        \draw[fill] (10.25,3.125) circle [radius=0.12];
        \draw[fill] (10.85,3.125) circle [radius=0.12];

        \draw[fill] (9.6,2) circle [radius=0.12];
        \draw[fill] (10.8,2) circle [radius=0.12];
        
        \draw[fill] (9,0.5) circle [radius=0.12];
        \draw[fill] (9.5,1) circle [radius=0.12];
        \draw[fill] (8.5,1) circle [radius=0.12];
        \draw[fill] (9,1) circle [radius=0.12];
        \draw[fill] (9,2) circle [radius=0.12];
        \draw[fill] (9.65,2.375) circle [radius=0.12];
        \draw[fill] (8.35,2.375) circle [radius=0.12];
        \draw[fill] (9.65,3.125) circle [radius=0.12];
        \draw[fill] (8.35,3.125) circle [radius=0.12];
        \draw[fill] (9,3.5) circle [radius=0.12];

        \draw[line width = 0.5pt] (9,3.8) .. controls (12.5,3.4) .. (11.95,2);
        \draw[line width = 0.5pt] (11.95,2) .. controls (10.5,1) .. (9,0.25);
        \draw[line width = 0.5pt] (9,3.8) .. controls (5.5,3.4) .. (6.05,2);
        \draw[line width = 0.5pt] (6.05,2) .. controls (7.5,1) .. (9,0.25);

        \node at (9,4.1) {\large $B_1^*$};


        \draw[line width = 1] (9,0) -- (7.268,-1);
        \draw[line width = 1] (8.567,-0.25) -- (8.384,-0.933);
        \draw[line width = 1] (8.567,-0.25) -- (7.88,-0.07);

        \draw[line width = 1] (7.27,-1) -- (6.62,-0.625) -- (5.97,-1) -- (5.97,-1.75) -- (6.62,-2.125) -- (7.28,-1.75) -- (7.27,-1);

        \draw[line width = 1] (8.384,-0.933) .. controls (7.69,-2.49) .. (6.62,-2.125);
        \draw[line width = 1] (7.88,-0.07) .. controls (6.19,0.11) .. (5.97,-1);


        \draw[line width = 1] (6.62,-2.125) --(7.52,-3.684);
        \draw[line width = 1,dotted] (7.52,-3.684) --(7.72,-4.03);
        
        \draw[line width = 1] (5.97,-1) --(5.07,0.559);
        \draw[line width = 1,dotted] (5.07,0.559) -- (4.87,0.9);

        \draw[line width = 1] (7.27,-1) --(6.04,1.12);
        \draw[line width = 1,dotted] (6.04,1.12) --(5.84,1.47);

        \draw[fill] (6.92,-2.645) circle [radius=0.12];
        \draw[fill] (7.22,-3.165) circle [radius=0.12];

        \draw[fill] (5.67,-0.48) circle [radius=0.12];
        \draw[fill] (5.37,0.04) circle [radius=0.12];

        \draw[fill] (6.97,-0.48) circle [radius=0.12];
        \draw[fill] (6.33,0.59) circle [radius=0.12];
        
        \draw[fill] (8.567,-0.25) circle [radius=0.12];
        \draw[fill] (7.88,-0.07) circle [radius=0.12];
        \draw[fill] (8.384,-0.933) circle [radius=0.12];
        \draw[fill] (8.134,-0.5) circle [radius=0.12];
        \draw[fill] (7.27,-1) circle [radius=0.12];
        \draw[fill] (6.62,-0.625) circle [radius=0.12];
        \draw[fill] (7.28,-1.75) circle [radius=0.12];
        \draw[fill] (5.97,-1) circle [radius=0.12];
        \draw[fill] (6.62,-2.125) circle [radius=0.12];
        \draw[fill] (5.97,-1.75) circle [radius=0.12];

        \draw[line width = 0.5pt] (5.71,-1.9) .. controls (4.306,1.33) .. (5.79,1.55);
        \draw[line width = 0.5pt] (5.79,1.55) .. controls (7.384,0.8) .. (8.7835,-0.125);
        \draw[line width = 0.5pt] (5.71,-1.9) .. controls (7.806,-4.73) .. (8.743,-3.555);
        \draw[line width = 0.5pt] (8.743,-3.555) .. controls (8.884,-1.8) .. (8.7835,-0.125);

        \node at (5.375,-2.05) {\large$B_2^*$};


        \draw[line width = 1] (9,0) -- (10.73,-1);
        \draw[line width = 1] (9.433,-0.25) -- (9.616,-0.933);
        \draw[line width = 1] (9.433,-0.25) -- (10.12,-0.07);

        \draw[line width = 1] (10.73,-1) -- (11.38,-0.625) -- (12.03,-1) -- (12.03,-1.75) -- (11.38,-2.125) -- (10.72,-1.75) -- (10.73,-1);

        \draw[line width = 1] (9.616,-0.933) .. controls (10.31,-2.49) .. (11.38,-2.125);
        \draw[line width = 1] (10.12,-0.07) .. controls (11.81,0.11) .. (12.03,-1);


        \draw[line width = 1] (11.38,-2.125) --(10.48,-3.684);
        \draw[line width = 1,dotted] (10.48,-3.684) --(10.28,-4.03);
        
        \draw[line width = 1] (12.03,-1) --(12.93,0.559);
        \draw[line width = 1,dotted] (12.93,0.559) -- (13.13,0.9);

        \draw[line width = 1] (10.73,-1) --(11.96,1.12);
        \draw[line width = 1,dotted] (11.96,1.12) --(12.16,1.47);

        \draw[fill] (11.08,-2.645) circle [radius=0.12];
        \draw[fill] (10.78,-3.165) circle [radius=0.12];

        \draw[fill] (12.33,-0.48) circle [radius=0.12];
        \draw[fill] (12.63,0.04) circle [radius=0.12];

        \draw[fill] (11.03,-0.48) circle [radius=0.12];
        \draw[fill] (11.63,0.59) circle [radius=0.12];
        
        \draw[fill] (9.433,-0.25) circle [radius=0.12];
        \draw[fill] (10.12,-0.07) circle [radius=0.12];
        \draw[fill] (9.616,-0.933) circle [radius=0.12];
        \draw[fill] (9.866,-0.5) circle [radius=0.12];
        \draw[fill] (10.73,-1) circle [radius=0.12];
        \draw[fill] (11.38,-0.625) circle [radius=0.12];
        \draw[fill] (10.72,-1.75) circle [radius=0.12];
        \draw[fill] (12.03,-1) circle [radius=0.12];
        \draw[fill] (11.38,-2.125) circle [radius=0.12];
        \draw[fill] (12.03,-1.75) circle [radius=0.12];

        \draw[line width = 0.5pt] (12.29,-1.9) .. controls (13.694,1.33) .. (12.21,1.55);
        \draw[line width = 0.5pt] (12.21,1.55) .. controls (10.616,0.8) .. (9.2165,-0.125);
        \draw[line width = 0.5pt] (12.29,-1.9) .. controls (10.194,-4.73) .. (9.257,-3.555);
        \draw[line width = 0.5pt] (9.257,-3.555) .. controls (9.116,-1.8) .. (9.2165,-0.125);

        \node at (12.551,-2.05) {\large$B_3^*$};
        \draw[fill] (9,0) circle [radius=0.12];
        \node at (9,-0.25) {$c^*$};
    \end{tikzpicture}}
    \caption{Graph $G^*$ (left) and induced subgraph $H^*$ (right).}
    \label{fig:IsoEx}
\end{figure}

\begin{theorem}\label{thm:example} For graph $G^*$ and induced subgraph $H^*$  described by Figure~\ref{fig:IsoEx}, $\inoc(G^*,r, 1) < \inoc(H^*, r, 1)$ for every vertex $r$ that appears in both $G^*$ and $H^*$. \end{theorem}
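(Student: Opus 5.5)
The proof will have two halves: an upper bound on $\inoc(G^*,r,1)$ obtained from explicit strategies, and a strictly larger lower bound on $\inoc(H^*,r,1)$ obtained from a transmission-set argument in the style of Theorem~\ref{lem:T_i} and Lemma~\ref{lem:oneOrth}.

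\textbf{Reduction to rays.} I will read off from Figure~\ref{fig:IsoEx} that $G^*$ consists of the centre $c$ and three isomorphic blocks $B_1,B_2,B_3$. Each block contains a hexagon with a hub (for instance $u_2$ in $B_2$, and the vertex at the centre of the hexagon in $B_1$) and has three infinite rays leaving it. The graph $H^*$ is obtained by deleting the three hubs. Both graphs have the same nine infinite rays. Since the contagion must be confined to a finite set, every containment strategy must, for each ray $P$, eventually vaccinate a vertex of $P$ before the contagion reaches it. Moreover, every vertex of $P$ beyond that point must lie in the transmission set $\mathcal{T}(v)$ of some inoculated vertex $v$. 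So in both graphs the inoculation number is governed by how many rays a single transmission set can meet. By the symmetry permuting the $B_i$, and the symmetry inside each block, the roots common to both graphs fall into a few classes: $c$ and the vertices near it, vertices of a block near the hexagon, and vertices on a ray. I will handle each class separately.

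\textbf{Upper bound for $G^*$.} For each root class I will exhibit a short strategy, in the spirit of Figures~\ref{fig:contain-grid1} and~\ref{fig:StrongGrid}. The key point is that a hub is adjacent to three hexagon vertices from which three different rays leave. Inoculating a hub (or a neighbour of it) early, before the contagion arrives, therefore vaccinates the bases of several rays at once, and one inoculation can protect up to three rays. Each strategy is verified by the same distance check used in the proof of Theorem~\ref{thm:uppercart}: a ray vertex $w$ is vaccinated whenever $t+d(v,w)\le d(r,w)$. This should give a cost of roughly one inoculation per block, plus possibly one extra for the block containing the root.

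\textbf{Lower bound for $H^*$.} For each root $r$ of $H^*$ I will prove an analogue of Lemma~\ref{lem:oneOrth}: no transmission set meets two distinct rays, or at least meets fewer rays than the corresponding best set in $G^*$. The argument runs by contradiction. Suppose $v$ is inoculated at time $t\ge 1$ and $x\in P$, $y\in P'$ both lie in $\mathcal{T}(v)$. Then
\[
t+d(v,x)\le d(r,x)\quad\text{and}\quad t+d(v,y)\le d(r,y),
\]
so $d(v,x)<d(r,x)$ and $d(v,y)<d(r,y)$. Once the hub is deleted, the geodesics from $r$ to the bases of $P$ and $P'$ separate before reaching the hexagon. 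I will then show that a vertex strictly closer than $r$ to both rays would have to lie on a shortcut through the hub, which no longer exists. Summing over the nine rays then gives a lower bound exceeding the $G^*$ upper bound for that root class.

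\textbf{Main obstacle.} I expect the hardest step to be making this lower bound uniform over all roots. The delicate case is a root lying far out on a ray, or inside a hexagon, where $r$ is itself close to some rays and an inoculation placed beside $r$ might block several rays at once. I would first handle the rays of the root's own block separately, charging their required inoculations directly. I would then apply the separation argument to the six rays in the other two blocks, using that all paths from $r$ to them pass through $c^*$.
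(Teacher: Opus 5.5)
\textbf{Gap 1: the statement is false for most roots.} Your plan breaks down at its last step because the statement, read literally for every common root, is false. Let $r$ be any vertex on one of the nine infinite rays.
\begin{itemize}
\item Inoculate one neighbour of $r$ at time-step 1 and the vertex two further along on the other side at time-step 2. This contains the contagion in both graphs.
\item One inoculation never suffices: the vaccine cannot pass through $r$, so the contagion escapes along the side not containing the inoculated vertex.
\end{itemize}
Hence $\inoc(G^*,r,1)=\inoc(H^*,r,1)=2$. The ``delicate case'' you flag is a counterexample, not an obstacle. It also refutes your premise that every ray must eventually receive a vaccinated vertex, so that one can sum over the nine rays: rays the contagion never reaches cost nothing.

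The paper proves $\inoc(G^*,r,1)\le 5$ for every $r$, but proves $\inoc(H^*,r,1)\ge 6$ only for $r=c^*$. Its sentence ``it suffices to show that if the root is $c^*$'' is not justified. What is actually established is the strict inequality at the root $c$, and hence between the maxima over roots.

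\textbf{Gap 2: your lower-bound lemma fails even at $r=c^*$.} Your proposed analogue of Lemma~\ref{lem:oneOrth} is false in $H^*$, and neither counterexample uses a hub.
\begin{itemize}
\item The neighbour of $c^*$ in $B_1^*$, inoculated at time-step 1, has all three rays of $B_1^*$ in its transmission set.
\item A hexagon vertex of $B_2^*$ adjacent to two ray-base vertices, inoculated at time-step 2, vaccinates both bases at time-step 3. It satisfies $2+1\le 3=d(c^*,\text{base})$.
\end{itemize}
So deleting the hubs does not stop one transmission set from meeting several rays. What it changes is \emph{how late} one inoculation can still shield a whole block. That timing argument is the missing idea, and it is what the paper uses:
\begin{itemize}
\item At time-step 1, at most one block can be sealed off at its cut vertex.
\item Without loss of generality, nothing is inoculated in $B_3^*$ during time-steps 1 and 2. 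Then the three vertices of $B_3^*$ at distance $2$ from $c^*$ are infected at time-step 2.
\item With no hub, each of the three rays of $B_3^*$ then needs its own inoculation at time-step 3 or later.
\item $B_2^*$ needs at least one inoculation beyond its time-step-2 one.
\end{itemize}
This gives at least $6$.

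\textbf{A smaller point on the upper bound.} At $r=c$ the count in $G^*$ is $1+1+3$, not roughly one per block. The hub $u_2$ still shields $B_2$ when inoculated at time-step 2, but by time-step 3 the hub of $B_3$ is too late. That is why the paper's strategy spends $u_3,u_4,u_5$ there.
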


\begin{proof} We first prove that $\inoc(G^*, r, 1)\leq 5$ by considering two cases: whether the root $r$ is $c$ or not.  First, suppose the root is $c$.   If $u_i$ is inoculated at time-step $i$ for $i\in [5]$ as illustrated in Figure~\ref{fig:IsoEx}, the reader can verify 
the contagion will be contained within $5$ time-steps. Second, suppose the root $r$ is not $c$.  Without loss of generality, $r$ is in $B_1$.   If $r$ is a degree-two vertex on an infinite path, the contagion can be contained within two time-steps.  Otherwise, we inoculate $c$ during time-step $1$ to prevent the contagion from spreading to $B_2$ or $B_3$.  For each of the next $3$ time-steps, we inoculate a degree-two vertex on an infinite path that is adjacent to an infected vertex. In doing so, the contagion will be contained within four time-steps.  Thus, $\inoc(G^*, r, 1) \leq 5$.  \medskip

We next prove $\inoc(H^*, r, 1) \geq 6$.  It suffices to show that if the root is $c^*$,  then we will require at least 6 inoculations for containment. Observe that at most one neighbour of $c^*$ can be inoculated during time-step $1$, so the contagion will infect vertices in at least two of $B_1^*,B_2^*$, $B_3^*$. Without loss of generality, a vertex of $B_1^*$ is inoculated during time-step $1$. The contagion then spreads from $c^*$ to its neighbours in $B_2^*$ and $B_3^*$.  Without loss of generality, the vertex inoculated during time-step $2$ is not in $B_3^*$. Then, no vertex of $B_3^*$ is inoculated during time-steps $1$ and $2$.  Containing the contagion on $B_3^*$ requires the inoculation of three additional inoculated vertices, which equates to three additional time-steps.  We must, however, also consider $B_2^*$, in which at most one vertex is inoculated during time-step $2$.  Note that at least one additional vertex must be inoculated in $B_2^*$ in order to contain the virus on $B_2^*$.  Thus, since at least 6 vertices must be inoculated in order to contain the virus on $H^*$, $\inoc(G^*, r, 1) \leq 5 < 6 \leq \inoc(H^*, r, 1)$.\end{proof}   

Since $H^*$ is an induced subgraph of $G^*$, it is not true that $\inoc(H, r, b) \leq \inoc(G, r, b)$ for every induced subgraph $H$ of a graph $G$.  We note that the inequality remains untrue even if the induced subgraph is required to be connected since $H^*$ is a connected induced subgraph of $G^*$.  We also note that the inequality remains untrue even if the subgraph is required to be isometric since $H^*$ is an isometric subgraph of $G^*$. Subgraph $H^*$ is isometric because, for every  pair of vertices $u,v \in V(H^*)$, the distance between $u$ and $v$ in $G^*$ is the same as the distance between $u$ and $v$ in $H^*$.

\section*{Acknowledgements} 
D. Pearson acknowledges research support from the Graph Searching in Atlantic Canada (GSAC) collaborative group, the Atlantic Association for Research in Mathematical Sciences (AARMS), and Mount Allison University. Melissa A. Huggan acknowledges research support from NSERC (grant application 2023-03395 and DGECR-2023-00190). M.E. Messinger acknowledges research support from NSERC (grant application 2018-04059).


\bibliographystyle{siam}

\begin{thebibliography}{10}

\bibitem{anshelevich_2012}
{\sc E.~Anshelevich, D.~Chakrabarty, A.~Hate, and C.~Swamy}, {\em Approximability of the firefighter problem}, Algorithmica, 62 (2012), pp.~520--536.

\bibitem{Av}
{\sc S.~Avadayappan, P.~Jeyanthi, and R.~Vasuki}, {\em Super magic strength of a graph}, Indian Journal of Pure and Applied Mathematics, 32 (2001), pp.~1621--1630.

\bibitem{bazgan_2013}
{\sc C.~Bazgan, M.~Chopin, and B.~Ries}, {\em The firefighter problem with more than one firefighter on trees}, Discrete Applied Mathematics, 161 (2013), pp.~899--908.

\bibitem{Bonato_book}
{\sc A.~Bonato}, {\em An invitation to pursuit-evasion games and graph theory}, American Mathematical Society, Providence, Rhode Island, 2022.

\bibitem{booth_1976}
{\sc K.~S. Booth and G.~S. Lueker}, {\em Testing for the consecutive ones property, interval graphs, and graph planarity using pq-tree algorithms}, Journal of Computer and System Sciences, 13 (1976), pp.~335--379.

\bibitem{cai_2008}
{\sc L.~Cai, E.~Verbin, and L.~Yang}, {\em Firefighting on trees: $(1-1/e)$-approximation, fixed parameter tractability and a subexponential algorithm}, Lecture Notes in Computing Science, 5369 (2008), pp.~258--269.

\bibitem{develin_2004}
{\sc M.~Develin and S.~G. Hartke}, {\em Fire containment in grids of dimension three and higher}, Discrete Applied Mathematics, 155(17) (2007), pp.~2257--2268.

\bibitem{downey_1999}
{\sc R.~G. Downey and M.~R. Fellows}, {\em Parameterized Complexity}, Springer New York, New York, NY, 1999.

\bibitem{finbow_2009}
{\sc S.~Finbow and G.~MacGillivray}, {\em The firefighter problem: A survey of results, directions and questions}, The Australasian Journal of Combinatorics, 43 (2009).

\bibitem{floderus_2011}
{\sc P.~Floderus, A.~Lingas, and M.~Persson}, {\em Towards more efficient infection and fire fighting}, in Proceedings of the Seventeenth Computing on The Australasian Theory Symposium - Volume 119, {CATS} 2011, AUS, 2011, Australian Computer Society, Inc., p.~69–74.

\bibitem{fomin_2016}
{\sc F.~V. Fomin, P.~Heggernes, and E.~J. {van Leeuwen}}, {\em The firefighter problem on graph classes}, Theoretical Computer Science, 613 (2016), pp.~38--50.

\bibitem{G.L.S.2021}
{\sc Z.-B. Gao, G.-C. Lau, and W.-C. Shiu}, {\em Graphs with minimal strength}, Symmetry, 13 (2021).

\bibitem{graham_1994}
{\sc R.~L. Graham, D.~E. Knuth, and O.~Patashnik}, {\em Concrete Mathematics: A Foundation for Computer Science}, Addison-Wesley, Reading, MA, second~ed., 1994.

\bibitem{hartnell_1995}
{\sc B.~L. Hartnell}, {\em Firefighter! an application of domination}, in 25th Manitoba Conference on Combinatorial Mathematics and Computing, University of Manitoba in Winnipeg, Canada, 1995.

\bibitem{Paper2}
{\sc M.~A. Huggan, M.~Messinger, and D.~Pearson}, {\em A note on the strength of a hypercube}, Australasian Journal of Combinatorics, 95(1) (2026), pp.~131--143.

\bibitem{hunter_2024}
{\sc E.~Hunter and J.~Enright}, {\em The {Firefighter} game with state-varying cost functions}, in {Complex Networks \& Their Applications XIII}: Proceedings of The Thirteenth International Conference on Complex Networks and Their Applications: {COMPLEX NETWORKS 2024}, H.~Cherifi, M.~Donduran, L.~M. Rocha, C.~Cherifi, and O.~Varol, eds., vol.~1 of Studies in Computational Intelligence, Istanbul, Turkey, 12 2024, Springer.

\bibitem{I.2018} R.~Ichishima, F.A. Muntaner-Batle, A. Oshima, {\em Bounds for the strength of graphs}, Australasian Journal of Combinatorics, 72 (2018), pp.~492--508.

\bibitem{lampis_2012}
{\sc M.~Lampis}, {\em Algorithmic meta-theorems for restrictions of treewidth}, Algorithmica, 64 (2012), pp.~19--37.

\bibitem{moeller_2002}
{\sc S.~Moeller and P.~Wang}, {\em Fire control on graphs}, JCMCC: The Journal of Combinatorial Mathematics and Combinatorial Computing, 41 (2002), pp.~19--34.

\bibitem{wilson_1996}
{\sc R.~J. Wilson}, {\em Introduction to Graph Theory}, Addison Wesley Longman Ltd., Essex, 4th~ed., 1996.

\end{thebibliography}

\end{document}